\newtheorem{theorem}{Theorem}[section]
\newtheorem{proposition}[theorem]{Proposition}
\newtheorem{corollary}[theorem]{Corollary}
\newtheorem{definition}[theorem]{Definition}
\newtheorem{notation}[theorem]{Notation}
\newtheorem{remark}[theorem]{Remark}
\newtheorem{myexample}[theorem]{Example}
\newenvironment{proof}[1][\!\!\,]{\vspace{1ex}\noindent\textbf{Proof #1: }}{\hfill$\Box$\vspace{2ex}}
\newcommand{\kw}[1]{\textbf{#1}}
\newcounter{case}
\newcommand{\oomit}[1]{}
\newcommand{\bs}[1]{}
\newcommand{\intechrep}[1]{}
\newcommand\cat[1]{\ensuremath{\mathbb{#1}}}
\newcommand\barecat[1]{\ensuremath{\underline{\mathbb{#1}}}}
\newcommand\basic[1]{\ensuremath{#1^{b}}}
\newcommand\discrete[1]{\ensuremath{#1^{\!d}}}
\newcommand\objects[1]{\ensuremath{|#1|}}
\newcommand\morphisms[1]{\ensuremath{\mathit{Mor}(#1)}}
\newcommand\source[1]{\ensuremath{#1^{s}}}
\newcommand\target[1]{\ensuremath{#1^{t}}}
\newcommand\composition{\ensuremath{\circ}}
\newcommand\labtrans[2]{%
  \ifthenelse{\isempty{#1}}%
    {\ensuremath{\mathbf{LT}}}
    {\ensuremath{\mathbf{LT}(#1,#2)}}
}
\newcommand\ulabtrans[2]{%
  \ifthenelse{\isempty{#1}}%
    {\ensuremath{\mathbf{ULT}}}
    {\ensuremath{\mathbf{ULT}(#1,#2)}}
}
\newcommand\encapsulate[2]{\ensuremath{\mathbf{Enc}(#1,#2)}}
\newcommand\prodcat{\ensuremath{\times}}
\newcommand\getcat{\ensuremath{\mathit{get}}}
\newcommand\respectsDependencies{\ensuremath{\subseteq}}
\newcommand\trivcat{\ensuremath{\mathbf{1}}}
\newcommand\indexE{\ensuremath{\mathit{I}_{L}}}
\newcommand\indexU{\ensuremath{\mathit{I}_{U}}}
\newcommand\idFunc{\ensuremath{\mathit{IdFun}}}
\newcommand\idRecord{\ensuremath{\mathit{IdRec}}}
\newcommand\idRecordLab{\ensuremath{\mathit{IdRecLab}}}
\newcommand\idObj{\ensuremath{\mathit{IdObj}}}
\newcommand\idFutures{\ensuremath{\mathit{IdFut}}}
\newcommand\idClass{\ensuremath{\mathit{IdClass}}}
\newcommand\idMethods{\ensuremath{\mathit{IdMethods}}}
\newcommand\idType{\ensuremath{\mathit{IdType}}}
\newcommand\catendo[1]{\ensuremath{\cat{E}nd(#1)}}
\newcommand\naturalTransf{\ensuremath{\eta}}
\newcommand\AttributesSort{\ensuremath{\mathit{At}}}
\newcommand\SigmaExec{\ensuremath{\Sigma}}
\newcommand\SigmaUpdate{\ensuremath{\Sigma_{upd}}}
\newcommand\metaVariables{\ensuremath{\mathsf{Var}}}
\newcommand\arity[1]{\ensuremath{\mathit{ar}(#1)}}
\newcommand\sortedterms[2]{\ensuremath{\mathsf{Terms_{#1}}(#2)}}
\newcommand\skipStatement{\ensuremath{\mathbf{skip}}}
\newcommand\nilValue{\ensuremath{\mathsf{nil}}}
\newcommand\sequence{\ensuremath{\,;}}
\newcommand\letStatement[2]{\ensuremath{\,\mathbf{let}\,#1\,\mathbf{in}\,#2\,}}
\newcommand\ifthenelseStatement[3]{\ensuremath{\,\mathbf{if}\,#1\,\mathbf{then}\,#2\,\mathbf{else}\,#3\,}}
\newcommand\funcdecl[3]{\ensuremath{\,\mathbf{fun\,#1}(#2)\,\{#3\}\,}}
\newcommand\recorddecl[2]{\ensuremath{\,\mathbf{record\ #1}\,\{#2\}\,}}
\newcommand\typedecl[2]{\ensuremath{\,\mathbf{type\ #1} = #2\,}}
\newcommand{\objectTerm}[2]{\ensuremath{\,\langle\,\mathbf{#1}\,\mid\,#2\,\rangle\,}}
\newcommand{\restlabel}{\ensuremath{\,\dots\,}}
\newcommand{\proteus}{\textsc{Proteus}\xspace}
\newcommand{\proteusSrcLang}{\textsc{Proteus}\xspace}
\newcommand{\creol}{\textsc{Creol}\xspace}
\newcommand{\stump}{\textsc{Stump}}
\newcommand{\upgradej}{\textsc{UpgradeJ}}
\newcommand{\coqprover}{Coq}
\newcommand{\yieldThread}{\ensuremath{\mathbf{yield}}}
\newcommand{\methodend}[1]{\ ?(#1)}
\newcommand{\asyncThread}[1]{\ensuremath{\mathbf{async}\,(#1)}}
\newcommand{\evalContext}{\ensuremath{\mathit{Ev}}}
\newcommand{\holeContext}[1]{\ensuremath{[#1]}}
\newcommand{\evalContextFill}[1]{\ensuremath{\evalContext\holeContext{#1}}}
\newcommand{\insertThread}{\ensuremath{\oplus}}
\newcommand{\deleteThread}{\ensuremath{\ominus}}
\newcommand{\callMethodOfIn}[3]{\ensuremath{\,\mathbf{#3}!#2.#1\,}}
\newcommand{\readMethodReturnFromIn}[2]{\ensuremath{\,\mathbf{#1}?(#2)\,}}
\newcommand{\return}[1]{\ensuremath{\mathbf{return}\,#1}}
\newcommand{\MethodDef}[2]{\ensuremath{\,\mathbf{mtd\ #1}(\mathbf{x})\,\{#2\}\,}}
\newcommand\futureLabel{\ensuremath{\mathit{future}}}
\newcommand\caller{\ensuremath{\mathit{caller}}}
\newcommand\MethodDefAsync[3]{\ensuremath{\,\mathbf{mtd\ #1}(\caller,\futureLabel,\mathbf{x})\,\{#2\sequence \return{#3}\}\,}}
\newcommand\identifier[1]{\ensuremath{\mathbf{#1}}}
\newcommand\invoke[1]{\ensuremath{\mathit{invoke}(#1)}}
\newcommand\completion[1]{\ensuremath{\mathit{comp}(#1)}}
\newcommand\assignment[2]{\ensuremath{\,\mathbf{#1}:=#2}}
\newcommand\upgradeTermSetKind[2]{\ensuremath{\,\mathbf{upgrade}^{#2}\!#1}}
\newcommand\newObjectOf[1]{\ensuremath{\,\mathbf{new}\,#1\,}}
\newcommand\classDefNameMethods[2]{\ensuremath{\,\mathbf{class\,#1}\,\{#2\}\,}}
\newcommand{\transition}[1]{\ensuremath{\xrightarrow{#1}}}
\newcounter{axiom}
\newcounter{axiomII}
\newcounter{examp}
\newcounter{mytablecounter}
\renewcommand{\themytablecounter}{\arabic{mytablecounter}}
\newcommand{\mytableheader}{\par\nopagebreak\vspace{0.01in}\noindent\par\nopagebreak}
\title{Dynamic Structural Operational Semantics\thanks{This work was partially supported by the project \href{http://its-wiki.no/wiki/DiversIoT:Home}{DiversIoT} --  Diversification for Resilient and Secure IoT-services, with number 270933/O70 part of the \href{http://www.forskningsradet.no/prognett-iktpluss/Home_page/1254002053513}{IKTPLUSS} program funded by the \href{http://www.forskningsradet.no}{Norwegian Research Council}.}}
\author{
Christian Johansen
\institute{Institute for Technology Systems, University of Oslo}
\email{cristi@ifi.uio.no}
\and 
Olaf Owe
\institute{Department of Informatics, University of Oslo}
\email{olaf@ifi.uio.no}
}
\begin{document}

\maketitle
\thispagestyle{empty}

\begin{abstract}
  We introduce Dynamic SOS as a framework for describing semantics of
  programming languages that include dynamic software upgrades, for
  upgrading software code during run-time. Dynamic SOS (DSOS) is built
  on top of the Modular SOS of P.\,Mosses, with an underlying category
  theory formalization. The idea of Dynamic SOS is to bring out the
  essential differences between dynamic upgrade constructs and program
  execution constructs. The important feature of Modular SOS (MSOS)
  that we exploit in DSOS is the sharp separation of the program
  execution code from the additional (data) structures needed at
  run-time.  In DSOS we aim to achieve the same modularity and
  decoupling for dynamic software upgrades. This is partly motivated
  by the long term goal of having machine-checkable proofs for general
  results like type safety.

We exemplify Dynamic SOS on two languages supporting dynamic software
upgrades, namely the \textsc{C}-like \proteusSrcLang, which supports
updating of variables, functions, records, or types at specific
program points, and \creol, which supports dynamic class upgrades in
the setting of concurrent objects. Existing type analyses for software
upgrades can be done on top of DSOS too, as we illustrate for
\proteusSrcLang.

As a side result we define of a general encapsulating construction on
Modular SOS useful in situations where a form of encapsulation of the
execution is needed. We use encapsulation in the \creol\ setting of
concurrent object-oriented programming with active objects and
asynchronous method calls.
\end{abstract} 
%
%
%

\tableofcontents

\section{Introduction}\label{sec_intro}

With renewed focus on software evolution
\cite{lehman80softEvolution,Mens2008
}, 
the interest in dynamic software upgrades has increased over
the past few years
\cite{MalabarbaPGBB00,DrossopoulouDDG02,BoyapatiLSMR03,JohnsenOS05dynamicClasses,AjmaniLS06,StoyleHBSN07mutatis,BiermanPN08}.
Approaches for dynamic upgrades are different in presentation and
formalization, making it difficult to compare or combine them,
especially since each of these approaches concentrates on some
particular programming language or paradigm. The work that we
undertake here is to extract the essentials of the operational
semantics for dynamic upgrading constructs independent of the
programming language or the kind of system paradigm.

Dynamic software upgrades provide mechanisms for upgrading a program at runtime, during its execution, by changing essential definitions used in executing the program, typically by adding or changing definitions of classes, interfaces, types, or methods, as well as modifying or resetting values of variables. Upgrades may be restricted, semantically or syntactically, so that they may only occur in certain states, called \textit{upgrade points}, where upgrading is meaningful or safe.
Dynamic upgrades allow a program to be corrected, improved, maintained or integrated with other programs, without stopping and restarting the execution. Dynamic upgrades are inherently different from normal  programming mechanisms because they are external to the program, using information that is not produced by the program, but is provided at runtime by an external entity or programmer.


Semantically, dynamic upgrades change \emph{static data structures}, i.e., the data structures established at the start of runtime such as class tables, function definitions and static typing information. This is in contrast to the semantics for normal programming constructs, which change the \emph{dynamic data structures} (also referred to as the {program state}), such as  the binding of values to program variables (the program store), heaps, message pools, or thread pools.

Thus at runtime we  distinguish between 
(i) the code being executed, (ii) the 
dynamic data structures, 
and (iii) the static data structures.
Standard operational semantics for programming languages is concerned with 
the runtime changes of the two former 
in the context of a given static data structure.
The complexity of the program state depends on the complexity of the language,
for instance, recursion requires a stack-based store.
Thus the operational semantics of a given code construct,
such as assignment,
may need to be reformulated when the language is enriched. 
 Modular SOS \cite{Mosses04modularSOS} solves this problem
by separating the structural layers of a program state.

In particular, Modular SOS (MSOS) promotes a sharp separation of the program code from the
  \emph{additional data structures}\footnote{Other works
    use the term \textit{auxiliary entities}, which we also use
    interchangeably throughout this paper to refer to the same
    concept.} that are manipulated by the semantics.
%
%
%
Moreover, complex features such as abrupt
termination and error propagation can be nicely handled by MSOS, as
well as combinations of big-step and small-step semantic styles. We
are not constrained in any way by building Dynamic SOS on MSOS. On the
contrary, MSOS is not binding the language designer to a notational
style. The notation can be the same as (or similar to) existing ones, as
soon as the concepts and style of MSOS and DSOS are adopted.  The
independence of notation is also seen in 
the work of Mosses and New \cite{MossesN09IMSOS}, which
presents new notational conventions called IMSOS, intended to be
attractive for the developers of programming languages.

We are interested in \textit{dynamic software updates} for imperative
languages such as the sequential C-like \proteusSrcLang\
\cite{StoyleHBSN07mutatis} and \textit{dynamic class upgrades} for
object-oriented languages such as the concurrent \creol\ language
\cite{JohnsenOS05dynamicClasses,JohnsenO07}. The nature of such
dynamic aspects is different from normal control flow and program
execution constructs of a language. Yet the interpretation of these
dynamic operations in the literature
\cite{StoyleHBSN07mutatis,JohnsenOS05dynamicClasses} is given using
the same style of structural operational semantics (SOS) as for the
other language constructs, often employing elaborate SOS definitions,
affecting the basic language elements as well as advanced ones.
Since the nature of dynamic upgrade constructs is different from
normal control flow and program execution constructs of a language, we
would like these differences to be apparent in the SOS descriptions.
For these reasons we develop Dynamic SOS (DSOS). 

The two chosen languages illustrate different kinds of dynamic
  updates.  \proteusSrcLang, which is the more low-level language,
  allows low-level state and code updates 
  as well as 
  control of the possible update points in the code.  \creol\ 
  is a high-level language for distributed systems supporting
  actor-like concurrent objects communicating by asynchronous methods
  calls and with support for high-level synchronization mechanisms
  including conditional process suspension.  
Each
  object has at most one active process, corresponding to a method
  activation, 
  while suspended processes are kept in a process queue of uncompleted
  method activations. Upgrades are done in a distributed manner; each
  object may upgrade itself at suspension or method completion. (As an
  aside, this
  allows program reasoning by means of dynamically updated class
  invariants, something which is a major concern in the \creol approach.)
Thus while the update points are programmer-defined in
\proteusSrcLang, they are predefined in \creol.

We show that DSOS can deal with both language settings in a uniform manner.


\textit{The contributions of this paper} are:
\begin{itemize}
\item We define a semantic framework for programming languages where
  dynamic software upgrades can be given semantics in a uniform
  manner, thus allowing for easier comparisons between 
different  upgrade machanisms.
\item We prove that
our DSOS framework is a conservative extension of the MSOS framework
that promotes modularity.

\item We show that typing aspects, commonly found in works on
  dynamic software upgrades, are readily doable on top of DSOS (like
  any other semantics), which we discuss in
  Section~\ref{sec_typing_proteus}.

\item In order to prove the adequacy of DSOS as a generalizing framework, we show how the semantics of two different languages with dynamic updates can be given in DSOS, i.e., we look at the popular \proteusSrcLang\ \cite{StoyleHBSN07mutatis} and at the more complex concurrent object-oriented \creol\ \cite{JohnsenO07}.

\item As a side result we introduce a method of \textit{encapsulation} on top of MSOS which we use in giving semantics to the object-oriented \creol. This is orthogonal and compatible with DSOS, and needed for exemplification purposes in the object-oriented setting.
When defining both DSOS and the encapsulation, we are concerned with respecting the principle of modularity, thus to be conservative extensions of MSOS. An MSOS treatment of object-oriented programming does not seam to appear elsewhere.
\end{itemize}

\subsection{An illustrative example}

We give a simple example to illustrate some aspects of dynamic software upgrades.
More complex examples can be found in e.g., \cite[Fig.\,3\,\&\,4]{StoyleHBSN07mutatis} from the Linux kernel, \cite[Sec.\,3]{JohnsenOS05dynamicClasses} for complex class upgrades, or \cite[Sec.\,3]{BiermanPN08}.

Consider a  class 
for keeping track of temperatures. The class
implements a simple interface
for setting and getting the (latest) temperature.
With  Java-like syntax it could look like

\begin{minipage}[t]{0.48\linewidth}
\begin{Creol}
interface Temp {
 void setTemp(int t)
 int getTemp() 
}
\end{Creol}
\end{minipage}%
\begin{minipage}[t]{0.48\linewidth}
\begin{Creol}
class TEMP implements Temp {
 int temp;
 void setTemp(int t){temp = t;}
 int  getTemp(){return temp;}
}
\end{Creol}
\end{minipage} 

Assume we would like to update a running system that uses this class
such that it can log the history of past \emph{temp} values and
is able to calculate the average temperature value.  We would like the
update to happen without restarting (and recompiling) the system.  In
\creol this is done by inserting into the message pool a runtime
upgrade message containing upgrade information
(using the keyword \kw{upgrade}),
which may redefine one or more classes
or add new classes and interfaces.
With high-level Java-like syntax the upgrade 
is given below:
\begin{Creol}
upgrade {
  interface TempStat extends Temp {int avgTemp()}

  class TEMP implements TempStat{
  int[] log = empty;

  void setTemp(int t){temp = t; log.append(t);}

  int avgTemp(){int avg=0; int i=0;
    for all x in log 
      {avg = avg + x; i= i+1;}
    return avg/i; }  //assuming non-empty log
}}
\end{Creol}
The upgrade introduces a new interface \texttt{TempStat} and a new
version of class \texttt{TEMP} augmented with a \texttt{log} variable, meant to store the sequence of temperature readings, as well as
a new method \texttt{avgTemp} for finding the average temperature. The actual logging is done in a
changed version of the original \texttt{setTemp} method.
The \texttt{getTemp} method is unchanged.
  (Note that
names are case sensitive, class names are written in upper case and
interface names start with an upper case character, while methods and
variables start with a lower case character.) 

The above example is presented in a syntax and style similar to a
\creol\ version.
In \creol, class upgrades are implemented in a distributed fashion
letting all the existing 
objects of class \texttt{TEMP} (or a subclass) make their upgrades
independently of each other \cite{JohnsenOS05dynamicClasses}.  An
update is performed when 
the current process in the object is suspended or completed.
Each upgraded object will
start to log temperature values, and will be able to respond to calls
to \texttt{avgTemp}.  Such calls may be generated by objects of
upgraded, or new, client classes.
Type safety is ensured by static checking of classes and of upgrades \cite{YuJO06dynamicTypes}.

In \proteus one may 
add a declaration of a new variable like for \texttt{log}, 
change the body of a function, like adding the \texttt{log.append} statement, 
add a new method, e.g., the \texttt{avgTemp}, 
and add calls to it, at predefined program points. 
The upgrades will be more fine-grained than in \creol, and to control when the updates are applied,  \proteus requires program update points to be pre-designated by the programmer, while for \creol the program update points are predefined by the concurrency model.

A challenge for the operational semantics is that such an upgrade as
above is changing the class and interface tables, as well as variable
and method bindings, in the middle of an execution.  In the \creol
case, upgrades are handled in the operational semantics by means of
message passing, 
by sending special upgrade messages (like the \emph{upgrade}
  definition above).
However, a complicating factor of the operational semantics is that
\creol level messages (reflecting method invocations and returns) and
upgrade level messages are using the same underlying message passing
mechanism.

\subsection{Dynamic SOS}

We are taking a \textit{modular} approach to SOS, following the work
of Mosses \cite{Mosses04modularSOS}, thus building on \textit{Modular
  SOS} (MSOS).  This formalism uses notions of category theory, on which our work depends.
 Dynamic SOS is intended as a framework for studying semantics
of dynamic upgrade programming constructs, 
and thus existing works on dynamic upgrades should be naturally captured; 
we exemplify DSOS on the
\textit{dynamic software updates} of the language \proteusSrcLang\
\cite{StoyleHBSN07mutatis} and on the \textit{dynamic class upgrades}
of the concurrent object-oriented language \creol\ 
\cite{JohnsenOS05dynamicClasses,JohnsenO07}.  Since much of the literature on software updates focuses on type systems and type safety, and since
their results also hold over  Dynamic SOS, here we concentrate mainly on
the semantic aspect, and only briefly discuss typing aspects in
Section~\ref{sec_typing_proteus}.

One observation that 
we want to emphasis with DSOS
is that \textit{upgrade points} must
be identified and marked accordingly in the program code. The marking
should be done with special upgrade programming constructs. Here we
are influenced by the work on \proteusSrcLang\
\cite{StoyleHBSN07mutatis} (which is also taken up in \upgradej\
\cite{BiermanPN08} and the multi-threaded \stump\
\cite{NeamtiuH09}). Opposed to a single marker as in \proteusSrcLang,
one could use multiple markers. This would allow also for incremental
upgrades. The purpose of identifying and marking such upgrade points
is to ensure type safety after upgrades. The analysis techniques of
\cite{StoyleHBSN07mutatis} for safety after upgrades can be used over
DSOS as well.
Upgrade markers can be placed by a programmer or automatically by static analysis techniques, as in \cite{StoyleHBSN07mutatis}.

A second observation is that compared to the normal flow of control
and change of additional data that the execution of the program does,
we view a dynamic upgrade as a contextual \textit{jump} to a possibly
completely different static structure (i.e., data content). This, in
consequence, can completely alter the execution of the
program. Moreover, these jumps are strongly knit to the upgrade
information, which is regarded as outside the scope of the executing
program, being externally provided.  If normal program execution
changes to the static structures are captured by the
\textit{morphisms} in the MSOS style, the jumps will be captured in
DSOS using \textit{endo\-functors}, a concept of higher abstraction,
which are still seen as morphisms in an appropriate category,
as we explain later on.

When seeing new frameworks, like DSOS, one may wonder about their purpose, and especially whether the same could be done with what already exists.
First, we see as a contribution any good attempt to unify seemingly disparate concepts, to allow easier comparisons and future developments of similar concepts. This is particularly so with the various dynamic software upgrade constructs out there, giving us one motivation for developing DSOS by identifying common features, and lifting these to a more abstract level of a framework.
Second, one may ask whether the DSOS mechanisms can be captured by an encoding solely within the MSOS framework.
The authors could not find a reasonable answer to this, and thus leave it for future work. Nevertheless, even if dynamic upgrade concepts could be encoded in MSOS, one then needs to study how natural would this encoding be, and whether it would help or not programming language designers.
More specific discussions on these lines are done in the concluding Section~\ref{sec_conclusion}.

\subsection{Modular semantics for concurrent object-orientated languages}

A second contribution of this paper is to enhance the theory of
Modular SOS with a general notion of \emph{encapsulation} that helps give
semantics when a form of encapsulation of the
execution is needed, such as in the setting of concurrent and distributed systems.
The concurrency model that we treat here,
and which is useful in an object-oriented setting, 
is that of the \textit{Actor
  model} \cite{HewittBS73actorModel,AghaMST97actorFoundation} where
each concurrent entity is autonomous, thought as running on one
dedicated machine or processor. Therefore, the auxiliary data
structures that the standard SOS employs are also localized to each
actor. We capture this localization mechanism in a general manner,
yet staying in the framework of MSOS, by making a construction on the
category theory of MSOS, which we call the \textit{encapsulating
  construction}, and show it to be in agreement with the other
category notions of MSOS.
This is worked out  in the  setting of object-oriented
programming with concurrent objects of \creol. Object-orientation has not been
treated before in the MSOS style. 
However, 
concurrent ML was treated in \cite{mosses99concurrentML}.

Executable semantics of programming languages prototypes has been advocated by the \creol\ since early papers \cite{JohnsenO07,JO04acm} where SOS-style of semantics were implemented in the rewriting logic of the Maude system \cite{CLAVEL2002Maude}. Similar goals of automating programming languages semantics are shared by other works as well, e.g., either executing and simulating it or giving it as input to a proof assistant \cite{Sewell2010ott,Klein2012RYR}. The MSOS style of semantics can also be implemented in the rewriting logic of Maude \cite{MaudeMSOS}.

\subsection{Structure of the paper}\hspace{1ex} 
We first give a short listing of some simple notions of category
theory that will be used throughout the paper and then introduce
Modular SOS in Section~\ref{sec_MSOS}. We then exemplify, in
Section~\ref{sec_MSOS_proteus}, MSOS on constructs found in the
\proteusSrcLang\ language, following a modular style of giving
semantics to one programming construct at a time. 
In the end, the
language and its semantics are formed by summing up the needed
syntactic constructs with their respective MSOS semantic elements and
rules. 
%
In Section~\ref{sec_DSOS}
we develop the Dynamic SOS theory, our main contribution.
Both 
\proteusSrcLang\ and \creol\ 
have dynamic upgrading constructs which are given semantics in Sections~\ref{subsec_DSOS_proteus} and \ref{subsec_ex_DSOS_creol}, respectively.
We discuss in Section~\ref{sec_typing_proteus} how typing aspects from
standard papers on dynamic upgrades can be done over DSOS as well, and
look particularly at \proteusSrcLang.
In Section~\ref{sec_encapsulation} we introduce the encapsulating
construction and use it in Section~\ref{subsec_MSOS_creol} to give
modular semantics to concurrent object-oriented constructs found in
the \creol\ language. 
We conclude and discuss possible applications and continuations of this work in Section~\ref{sec_conclusion}.

\section{Modular Structural Operational Semantics}\label{sec_MSOS}

The usual structure of papers on programming languages would include a
section that introduces the \textit{syntax} of the language studied,
which would then be followed by a section describing the semantics.
This is contrary to how DSOS and MSOS propose to develop (semantics of)
programming languages. In DSOS we give semantics to a single
programming construct, independently of any other constructs (as one
can later see through the examples that we give). To define a
programming language one puts together the syntactic constructs and
the respective semantic rules. Such an approach is particularly
appealing when developing a programming language assisted by a
theorem prover (e.g., \cite{pierce12SOS_coq}).
A main goal of the modular approach is to ensure that once the
semantics has been given to one programming construct, it does not need
to be changed in the future, when adding new programming constructs.
This will be illustrated 
throughout our presentation.

Moreover, it is easy to work within different notational
conventions. Translations between these notations are 
 possible because of the common underlying theory provided by the MSOS and its category theory foundations. Nevertheless, these categorical foundations are transparent to the one giving semantics to programming languages. Standard notational conventions can be adopted for MSOS, but the methodology changes to a modular way of thinking about the semantics. The independence of notation can be seen in \cite{MossesN09IMSOS}, which presents new notation conventions called IMSOS, intended to be more attractive to the designers of programming languages.

 We recall briefly some standard technical notions that will be used
 throughout this paper. 
Our notation stays close  to that of
\cite{Mosses04modularSOS} for the MSOS related notions and to that of
\cite{pierce91categoryBook} for other notions of category theory.

\begin{definition}[category]\label{def_category_app}
  A \emph{category} (which we denote by capital letters of the form
  $\cat{A}$) consists of a set of \emph{objects} (which we denote by
  $\objects{\cat{A}}$ with usual representatives $o,o',o_{i}$) and a
  set of \emph{morphisms}, also called \emph{arrows}, between two
  objects (which we denote by $\morphisms{\cat{A}}$ with usual
  representatives $\alpha,\beta$, possibly indexed). A morphism has a
  \emph{source} object and a \emph{target} object which we denote by
  $\source{\alpha}$ and $\target{\alpha}$. A category is required  
(i) to have \emph{identity} morphisms $id_{o}$ for each object $o$,
  satisfying an identity law for each morphism with source or target
  in that object; and 
 (ii) composition of any two morphisms
  $\alpha$ and $\beta$, with $\target{\alpha}=\source{\beta}$, exists
  (denoted $\beta\composition\alpha$, or just $\alpha\beta$, as in
  computer science) and is associative.
\end{definition}

\begin{definition}[functors]\label{def_functor}
Consider two arbitrary categories \cat{A} and \cat{B}. A \emph{functor} $F:\cat{A}\rightarrow\cat{B}$ is defined as a map that takes each object of $\objects{\cat{A}}$ to some object of $\objects{\cat{B}}$, and takes each morphism $\alpha\in\morphisms{\cat{A}}$ to some morphism $\beta\in\morphisms{\cat{B}}$ s.t.\ $o\transition{\alpha}o'$ is associated to some $F(o)\transition{\beta}F(o')$, and the following hold:
\[
F(id_{o})=id_{F(o)} \mbox{\hspace{3ex}and\hspace{3ex}} F(\alpha\beta)=F(\alpha)F(\beta).
\]
A functor $F:\cat{A}\rightarrow\cat{A}$ is called an \emph{endo\-functor} applied to $\cat{A}$ (or on \cat{A}).
Define $\mathit{End}(\cat{A})$ the \emph{category of endo\-functors on \cat{A}}, having \cat{A} as the single object and endo\-functors on \cat{A} as morphisms.
\end{definition}

Modular SOS generates \textit{arrow-labelled transition systems}, cf.\ \cite{Mosses99foundationsMSOS}, where the transitions are labelled with morphisms (arrows) from a category.

\begin{definition}[ALTS]\label{def_arrowTS}
An \emph{arrow-labelled transition system}
$(\Gamma,\morphisms{\cat{A}},\transition{})$ is formed
by 
a set of
\emph{states} $t_{i}\!\in\!\Gamma$,
including an \emph{initial state} $t_0$, and transitions  $\transition{}\subseteq\!\Gamma\!\times\!\morphisms{\cat{A}}\!\times\!\Gamma$, labelled by morphisms $\alpha\!\in\!\morphisms{\cat{A}}$ from a category $\cat{A}$.
A \emph{com\-putation} in an ALTS is a sequence $t_{0}\!\transition{\alpha_{0}}\!t_{1}\!\transition{\alpha_{1}}\!t_{2}\dots$ s.t.\ for any $t_{i}\!\transition{\alpha_{i}}t_{i+1}\!\transition{\alpha_{i+1}}t_{i+2}$ the two morphisms are composable in \cat{A} as $\alpha_{i+1}\circ\alpha_{i}\in\!\morphisms{\cat{A}}$.
\end{definition}

\begin{notation}
Since in an ALTS transitions $\transition{\alpha}$ are labelled with morphisms from $\cat{A}$, we also have a grip on the underlying objects involved in the transition, i.e., $\source{\alpha}$ and $\target{\alpha}$. 
When the source and target objects of the morphism $\alpha$ are needed we make them explicit on the transition as $\transition{\{\source{\alpha},\target{\alpha}\}}$.
\end{notation}

One goal with ALTS and MSOS is to have as states \textit{only program terms}, without the additional semantic data that an executing program may use, like stores or heaps. 
The additional data and the way the program manipulates it is captured by the morphisms which are labelling the transitions of the ALTS. 
This goal is related to e.g.: 
\begin{enumerate}
\item typing systems where the program syntax alone is under analysis; 

\item Hoare logic where Hoare rules are defined for program terms only (with the pre- and post-conditions being the ones talking about the stores/heaps); 

\item process algebras with process terms as the states and their observable behaviour as labels on transitions.
\end{enumerate}

When giving semantics to programming languages we establish an initial multi-sorted signature defining the programming constructs of interest. This signature may be enriched upon future developments of the language with new programming constructs. 
The closed program terms built over this signature constitute the configurations of the arrow-labelled transition systems. 
Any additional structure/data (like heaps or stores) needed when giving semantics to these constructs, are objects in special categories from which we take their morphisms as transition labels.

\begin{definition}[basic label categories]\label{def_labcat}
The following three kinds of categories, called \emph{basic label categories}, are used to build more complex label categories:

\begin{itemize}
\item \textbf{discrete category:} A \emph{discrete category} is a category which has only identity morphisms. No other morphisms are allowed.

\item \textbf{pairs category:} A \emph{pairs category} is a category which has one unique morphism between every two objects (i.e., in each direction).

\item \textbf{monoid category:} A \emph{monoid category} is a category that has a single object and the morphisms are elements from some predefined set $\mathit{Act}$.
\end{itemize}
\end{definition}

Intuitively, discrete categories correspond to additional information
that is of a read-only type, like read-only variables.
Pairs categories correspond to additional data of a read/write type, like stores. Each store appears as one object in the category. The morphisms between two stores represent how a store may be modified by the program when executed. 
We take a general view where a program may change a store in radical ways, therefore, we have morphisms between every two stores.
Monoid categories correspond to write-only type of data, like observable information emitted during the execution of the program, or messages sent between communicating processes.

\begin{myexample}\label{example_category_kinds}
  To build a monoid category we pick an underlying set of actions (or
  events) which will make a monoid of strings over this alphabet, with
  the empty string as the identity morphism.  We can build a discrete
  or a pairs category by picking some underlying set of objects.  One
  standard example of a pairs category \cat{S} has as objects stores:
  $\objects{\cat{S}}=\mathit{IdVar}\rightharpoonup\mathit{Val}$,
  i.e. all partial functions from some set $\mathit{IdVar}$ of
  variable identifiers to some set $\mathit{Val}$ of values.  
\end{myexample}

When several additional data are needed to define the semantics, we use
{\it{complex label categories}} obtained by making product of basic
label categories. 
One may use as many data components as needed to get a natural view of the
semantics for each programming construct.

An implementation may choose to put several data
structures together, if no clashes can appear.
Complex labels are built using the following construction, which attaches an index to each label component. This will offer the possibility to uniquely identify each component from a complex label using the associated index.
This also provides a modular way of extending the label categories.

\begin{definition}[label transformers]\label{def_labtrans}
  Let $\indexE$ be a countable set of indexes, \cat{B} a basic label
  category, and $\cat{A}\!=\!\mathop{\prod}_{j\in J\subset
    \indexE}\cat{A}_{j}$ a product category which is the trivial
  category\footnote{The trivial category has a single object and only the identity morphism for it.} 
\trivcat\ when $J\!=\!\emptyset$. 
A \emph{label transformer} $\labtrans{i}{\cat{B}}$, with
  $i\!\in\!\indexE\setminus\!J$, maps \cat{A} to the product category
  $\cat{A}\prodcat\cat{B} = \labtrans{i}{\cat{B}}(\cat{A})$, and associates a partial operation
\[
\getcat:\morphisms{\cat{A}\prodcat\cat{B}}\times\indexE\rightarrow (\cup_{j}\morphisms{\cat{A}_{j}})\cup\morphisms{\cat{B}}
\]
which for
  each composed morphism of the new $\cat{A}\prodcat\cat{B}$
  associates a morphism in one of the component categories of the
  product, as follows:
\[
\getcat((\alpha_{\cat{A}},\beta_{\cat{B}}),k)=\left\{
\begin{array}{ll}
\beta_{\cat{B}}, & \mbox{if } i=k\\
\getcat(\alpha_{\cat{A}},k), & \mbox{otherwise}.\\
\end{array}\right.
\]
\end{definition}

\begin{notation} 
For a composed morphism $\alpha$ of a product category obtained using the label transformer we may denote the get operation using the dot-notation (well established in object-oriented languages) to refer to the respective component morphism; i.e., $\alpha.i$ for $\getcat(\alpha,i)$, with $i$ being one of the indexes used to construct the product category.
Since $\alpha.i$ is a morphism in a basic label category, we may also refer to its source and target objects (when relevant, like in the case of discrete or pairs categories) as $\source{\alpha.i}$ respectively $\target{\alpha.i}$.
\end{notation}

Now we proceed to define how operational rules look like in this setting.

\begin{definition}[program terms]\label{def_progTerms}
A \emph{multi-sorted signature} $\Sigma$ is a set of function symbols, together with an \emph{arity} mapping $\arity{}$ that assigns a natural number to each function symbol, and a family of \emph{sorts} $S_{i}$. Each function symbol has a sort definition which specifies what sorts correspond to its inputs and output. A function of arity zero is called a \emph{constant}.
The set of \emph{terms}
over a signature $\Sigma$ and a set \metaVariables\ of sorted meta-variables is denoted $\sortedterms{}{\Sigma,\metaVariables}$ and is defined as follows (we often omit the set \metaVariables\ for readability) :
\begin{itemize}
\item any meta-variable is a term;
\item a function application $f(t_{1},\dots,t_{\arity{f}})$
for some function symbol $f$ and set of terms
$t_{1},\dots,t_{\arity{f}}$, of the right sort, is a term. 
\end{itemize}
\end{definition}




%
\begin{definition}[rules]\label{def_rules}
We call $t\transition{\alpha}t'$ a \emph{transition literal} (or transition schema), with $t,t'$ program terms, possibly containing meta-variables (i.e., these are program schemes). A transition schema is \emph{closed} iff $t,t'$ are, i.e., do not contain meta-variables. 
The $\alpha$ is a specification of a set of morphisms allowed as labels of this transition schema (see Notation~\ref{notation_morphTrans}).
A transition \emph{derivation rule} is of the form $H/l$ with $H$ a set of transition literals, called the \emph{premises}, and $l$ is a single transition literal, called the \emph{conclusion}.
\end{definition}

When side-conditions (e.g., equations, set memberships, definedness assertions) are needed in a rule, we write these on top of the derivation line, together with the premises, since they can easily be distinguished from transition literals.
Negations of side-conditions can also be used.

\begin{definition}[generated ALTS]\label{def_generatedALTS}
The semantics of a program $P$ is defined as the \emph{generated arrow-labelled transition system} that has as states closed program terms, as initial state the program $P$, and as transitions all the closed transitions generated by exhaustively instantiating the derivation rules.
\end{definition}

\begin{notation}[morphisms on transitions]\label{notation_morphTrans}
When writing literals we use the following notation for the labels. We
write
$t\transition{\{\source{\alpha.i}\restlabel\target{\alpha.i}\}}t'$ to
mean that the morphism $\alpha$ is a tuple where the label component
indexed by $i$ is the one given on the transition, and all other
components are the identity morphism, symbolized by the \textit{three
  dots}. We write sources of morphisms to the left of the three dots,
and targets to the right. In one transition we may refer to several
components, e.g.:
$t\transition{\{\source{\alpha.i},\alpha.j\restlabel\target{\alpha.i}\}}t'$. In
this example the $j$ index is associated with a discrete category, and
therefore we do not write the target of it on the right because it is
understood as being the same. Moreover, because of the right/left
convention we omit the superscripts. An even more terse notation may
simply drop all references to $\alpha$ and keep only the indexes, thus
the last example becomes $t\transition{\{i=o,j=h\restlabel
  i=o'\}}t'$. The objects $o,o'$ may be stores, and thus the
transition says that the store $o$ is changed to 
$o'$, whereas the
component $j$ may only be inspected. 
\end{notation}

The goal of \textit{modularity} is to have rules defined once and for
all, meaning that when a new programming construct is added and the
new rules for it need 
to refer to new auxiliary semantic 
entities, i.e., to enlarge the old label category, then the old rules
need not be changed.  This is made precise by the essential result of
\cite[Prop.1]{Mosses99foundationsMSOS}.

Intuitively, this result says that any transition defined using the old rule system, i.e., labelled with some $\alpha$ from some category \cat{A}, is found in the new arrow-labelled transition system, over a new category \labtrans{i}{\cat{B}}(\cat{A}), using an embedding functor which just attaches an identity morphism to the old morphism, i.e., $(\alpha,id_{b})$, for the current object $b\in\objects{\cat{B}}$. Moreover, for any transition defined in terms of the new composed labels from \cat{A}\prodcat\cat{B}, if it comes from the old rules only then the projection from \cat{A}\prodcat\cat{B}\ to \cat{A}\ gives an old label morphism by forgetting the identity morphism on \cat{B}. This is the case because the old transition refers only to components in \cat{A}, where the dots notation makes all other components contribute only with the identity morphism.

\begin{theorem}[{\cite[Proposition 1]{Mosses99foundationsMSOS}}]
Let \cat{A} be a category constructed using the label transformers $\labtrans{j}{\cat{B}_{j}}$ for some basic label categories $\cat{B}_{j}$ of the three kinds defined before, with $j\in J \subset \mathit{Index}$. Consider a set of rules $R$ which specifies an ALTS over \cat{A}, where the rules in $R$ refer to only indexes from $J$.
Let the category $\cat{A}'=\labtrans{i}{\cat{B}_{i}}(\cat{A})$, where $i\not\in J$, and let $\transition{}'$ be the transition relation  specified by the same set of rules $R$ but having labels from $\cat{A}'$.
For each computation $\transition{\alpha}\transition{\beta}\dots$ specified by $R$ over \cat{A}, we have a corresponding computation $\stackrel{\alpha'}{\longrightarrow'}\stackrel{\beta'}{\longrightarrow'}\dots$ over $\cat{A}'$, and vice versa.
\end{theorem}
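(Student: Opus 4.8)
The plan is to realise the claimed correspondence through the two canonical functors attached to a label transformer and then to push the rule set $R$ through them by induction on derivations. Since $i\notin J$, Definition~\ref{def_labtrans} gives $\cat{A}'=\labtrans{i}{\cat{B}_i}(\cat{A})=\cat{A}\prodcat\cat{B}_i$, so morphisms of $\cat{A}'$ are pairs $(\alpha,\beta)$ with $\alpha\in\morphisms{\cat{A}}$ and $\beta\in\morphisms{\cat{B}_i}$, composed componentwise. For each object $b\in\objects{\cat{B}_i}$ I would define the embedding $\iota_b\colon\cat{A}\to\cat{A}'$ by $\iota_b(\alpha)=(\alpha,id_b)$, and the projection $\pi\colon\cat{A}'\to\cat{A}$ by $\pi(\alpha,\beta)=\alpha$. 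Both are functors in the sense of Definition~\ref{def_functor}: they preserve identities and, by componentwise composition in the product, $\iota_b(\beta\composition\alpha)=\iota_b(\beta)\composition\iota_b(\alpha)$ and likewise for $\pi$; moreover $\pi\composition\iota_b=\mathrm{id}_{\cat{A}}$, and a pair of morphisms is composable in $\cat{A}$ iff its $\iota_b$-image is composable in $\cat{A}'$ (the fresh $i$-components are identities on both sides, which always compose). Hence it suffices to establish the correspondence transition-by-transition; by Definition~\ref{def_arrowTS} it then lifts to computations with composability preserved in both directions.

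For the transition-level step the key observation is that the rules of $R$ mention only indexes in $J$, so by the dots convention of Notation~\ref{notation_morphTrans} every label occurring on a literal of a rule of $R$, once read over $\cat{A}'$, has its $i$-component pinned to an identity $id_b$ of whatever $\cat{B}_i$-object $b$ is currently carried — the \emph{same} $b$ in the premises and the conclusion (side conditions, when present, also mention only components in $J$ and are thus unaffected). This yields an instance-correspondence: every closed instance of a rule of $R$ over $\cat{A}'$ is the $\iota_b$-image, label-by-label, of a closed instance of the same rule over $\cat{A}$, and conversely each closed instance over $\cat{A}$ yields, for every choice of $b$, a closed instance over $\cat{A}'$ via $\iota_b$. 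From this I would prove by induction on the height of the derivation tree, using Definition~\ref{def_rules} and Definition~\ref{def_generatedATS}, that $t\transition{\alpha}t'$ is derivable from $R$ over $\cat{A}$ iff for every $b\in\objects{\cat{B}_i}$ the transition $t\transition{\iota_b(\alpha)}t'$ is derivable from $R$ over $\cat{A}'$; the base case and the inductive step are immediate from the instance-correspondence, since premises match premises under $\iota_b$ and $\pi$.

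To finish, given a computation $t_0\transition{\alpha_0}t_1\transition{\alpha_1}\cdots$ specified by $R$ over $\cat{A}$, fix any $b\in\objects{\cat{B}_i}$ and set $\alpha'_k=\iota_b(\alpha_k)$: each step is derivable over $\cat{A}'$ by the induction, and consecutive labels remain composable because $\iota_b$ is a functor, so this is a computation over $\cat{A}'$. Conversely, in any computation over $\cat{A}'$ every label has identity $i$-component (the root rule of its derivation does not mention $i$), and composability forces this identity to sit on one common object $b$; hence each label is $\iota_b(\pi(\alpha'_k))$, and $t_0\transition{\pi(\alpha'_0)}t_1\transition{\pi(\alpha'_1)}\cdots$ is the desired computation over $\cat{A}$, again with composability preserved since $\pi$ is a functor. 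The main obstacle is not categorical but the careful bookkeeping of the dots notation in the second paragraph — making rigorous that "the rules of $R$ refer only to indexes in $J$" forces the fresh $i$-component of every label in every closed $R$-instance over $\cat{A}'$ to be one and the same identity across premises and conclusion; once that instance-correspondence is nailed down, the functoriality facts, the induction on derivation height, and the lift to computations are all routine.
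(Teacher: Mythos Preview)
Your proposal is correct and follows essentially the same approach as the paper's own proof sketch: both hinge on the projection functor $\pi:\cat{A}'\to\cat{A}$ for the ``vice versa'' direction and the family of embedding functors $\iota_b:\cat{A}\to\cat{A}'$ (one per object $b\in\objects{\cat{B}_i}$) for the forward direction, exploiting that rules in $R$ never mention index $i$ so the fresh component is always an identity. You simply flesh out what the paper leaves as a sketch---the induction on derivation height, the instance-level correspondence under $\iota_b$ and $\pi$, and the bookkeeping that composability forces a single common $b$ along a computation---so there is no substantive divergence in strategy.
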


\begin{proof}[sketch]
This result is a consequence of \cite[Proposition 1]{Mosses99foundationsMSOS} and is explicitly stated in the corresponding technical report \cite[Corollary 1]{Mosses99foundationsMSOStr}.
The label transformer $\labtrans{i}{\cat{B}_{i}}$ forms a projection functor from $\cat{A}\prodcat\cat{B}_{i}$. This functor is used to get the reverse direction of the statement, by forgetting the structure of $\cat{B}_{i}$. This is possible because the rules in $R$ do not refer to this index $i$, hence to morphisms in $\cat{B}_{i}$, which means these are just the identity morphisms. The label transformer also forms a family of embedding functors from $\cat{A}$ into $\cat{A}\prodcat\cat{B}_{i}$ (for each object of $\cat{B}_{i}$). These functors are used to obtain the forward direction of the statement. Depending on the current object of $\cat{B}_{i}$ we use the corresponding embedding functor to add to the label specified by the rules $R$ on $\cat{A}$ an identity functor on $\cat{B}_{i}$, thus obtaining a corresponding transition with label morphism from $\cat{A}\prodcat\cat{B}_{i}$.
\end{proof}

\section{Exemplifying MSOS for the language \proteusSrcLang}\label{sec_MSOS_proteus}

Normally, for exemplifying
how the theory of Modular SOS is applied, it would be advised to use a
minimal set of programming constructs.  However, we want the
theory to appeal to practitioners that develop programming languages.
We therefore consider various common programming constructs, 
without being concerned about redundancy.
Moreover, the constructs that we treat in this section will sum
up to 
the programming language \proteusSrcLang\
\cite{StoyleHBSN05popl}.
We use though slightly different constructs, closer to those found in the \creol\ language \cite{JohnsenOS05dynamicClasses}, which we treat in
Section~\ref{sec_ex_DSOS_creol}.
Thus we focus on imperative contructs with static typing
and static variable binding.
To simplify the presentation we assume program variables have distict
names
(which could be achieved by adding the declaration level to  
variables during static analysis). 
Our style of giving semantics in this section is incremental, one construct at a time.
While giving the semantics we deliberately want to be free from any specific notation convention; i.e., we want to convey the concepts of modular semantics, and not adhere to a particular established way of giving SOS semantics to programming languages. We afford to do this because of the modular framework \cite{MossesN09IMSOS,ChurchillMST14PlanCompsTAOSD}. 

This section can be skipped by a reader knowledgeable of MSOS, though, if some later notation seams unclear one can come back to this section for clarifications.

Throughout the paper we work with what is sometimes called \emph{value-added syntax}, where the values that program constructs work with are included in the language syntax as constant symbols. Denote these generally as $v\in\mathit{Val}$, with $n\in\mathbb{N}\subseteq\mathit{Val}$ and $b\in\{\mathbf{true}, \mathbf{false}\}\subseteq\mathit{Val}$.
The $\nilValue\in\mathit{Val}$ is seen as a special value that statements take when finished executing.
The values are considered to have sort \emph{Expressions}, denoted usually by $e\in E$.

\subsection{No labels for sequential composition}\label{example_noLabels}
Consider a \emph{sorted} signature $\SigmaExec_{\ref{example_noLabels}}$ consisting of the following programming constructs, forming a single sort \textit{Statement}:
\[
s\ \ ::=\ \ \skipStatement \mid s\sequence s
\]
where $\skipStatement$ is a constant, standing for the program that
does nothing, and  $\_\sequence\_$  is
a binary function symbol, standing for \textit{sequential composition}.


\begin{remark}[numbering the signatures]
We use a subscript to number the different signatures that we construct. We use as number the reference of the respective subsection where the signature is defined. 
\end{remark}

We define the following transition rules:
\begin{prooftree}
\AxiomC{\phantom{$\ \transition{\{\restlabel\}} \ $}}
\UnaryInfC{$\skipStatement \transition{\{\restlabel\}} \nilValue$}
\DisplayProof\hspace{3ex}
\AxiomC{$s_{1} \transition{X} s_{1}'$}
\UnaryInfC{$s_{1}\sequence s_{2} \transition{X} s_{1}'\sequence s_{2}$}
\DisplayProof\hspace{3ex}
\AxiomC{\phantom{$\ \transition{\{\restlabel\}} \ $}}
\end{prooftree}
We assume that  $\_\sequence\_$ is an associative constructor
with  $\nilValue$ as left and right identity element,
and we assume pattern matching of the transition rules
 modulo associativity and identity (as for instance supported by 
rewriting logic/Maude\footnote{The Maude System: \url{http://maude.cs.uiuc.edu/}} \cite{CLAVEL2002Maude,MaudeMSOS}).
%
The special label variable $X$ stands for \textit{any morphism}, and the label $\{\restlabel\}$ stands for  \textit{any identity morphism}. 
These rules do not specify label categories because any category can be used. This means that no additional data is needed by the respective two programming constructs.
Moreover, the identify morphisms capture naturally the notion of \emph{unobservable} transitions since they just ``copy'' the data represented by the objects.

The second rule has one premise, and assumes nothing about the morphism of the transition; it only says that the label is carried along from the statement $s_{1}$ to the whole sequence statement $s_{1}\sequence s_{2}$. 
%
The first rule is an \textit{axiom} because it contains no premises, and says that the \skipStatement\ program reduces to the \textit{value} \nilValue\ by the identity morphism on the current object in the current category of labels, whichever this may be. 
We also consider to have the standard arithmetic and Boolean operators which take expressions and return expressions. 
(See technical report \cite{techRep12Dec} for a detailed example.)

\subsection{Read-only label categories and a let construct}\label{example_variables}
We add a set of variable identifiers as constant symbols, and denote these by $\mathbf{x}\in\mathit{IdVar}$. Variable identifiers have sort \emph{Expressions}. 
We also include a $\mathbf{let}$ construct usually found in functional languages.
Let these make a signature $\SigmaExec_{\ref{example_variables}}$, which can be added to any other signature.
\[
e\ \ ::=\ \ \mathbf{x} \ \mid 
\letStatement{\mathbf{var\ x}:=e'}{e} \ \mid \dots
\]

%
The interpretation of variable identifiers is given wrt.\ \textit{an
  additional data structure called store}, which keeps track of the
values associated to each variable identifier.
In consequence, we define a label category \cat{S}, having as objects $\objects{\cat{S}}=\mathit{IdVar}\rightharpoonup\mathit{Val}$ the set of all partial functions from variable identifiers to values, denoting stores.
Define \cat{S} as a \textit{discrete category}, i.e., only with identity morphisms, since in the case of variable identifiers alone, the store is intended only to be inspected by the program. The label category to be used for defining the transitions is formed by applying the label transformer $\labtrans{S}{\cat{S}}$ to any category of labels, depending on the already chosen programming constructs and transition rules; in our case to the trivial category, since no specific label components were used until now.
Instead of using natural numbers as indexes we use symbols. However, notational decisions are relative to the user, and our notation choices from this paper can safely be overridden.


The transition rule corresponding to the variable identifiers is:
\begin{prooftree}
\AxiomC{$\rho(\mathbf{x})=v$}
\UnaryInfC{$\mathbf{x} \transition{\{S=\rho\restlabel\}} v$}
\end{prooftree}

\vspace{1ex}The rule defines a transition between terms $\identifier{x}$ and $v$, labelled with a morphism satisfying the condition that the label component with index $S$ has as source an object $\rho\in\objects{\cat{S}}$ that maps the variable identifier to the value $v$. 
Because the category \cat{S} is discrete, we do not specify the target object explicitly since it is the same as the source object specified on the label, i.e., an identity morphism is used.
Any other possible label components, if and when they exist, contribute with an identity morphism (symbolized by the three dots). 
In consequence, since all morphism are identity, this transition is unobservable.
Henceforth, whenever in a rule we mention only the source of a morphism component it means that the target is the same, i.e., we specify only some particular identity morphisms.
%
%
Note that the rules from Section~\ref{example_noLabels} are unaffected by the fact that we have changed the label category. Neither will future rules be affected.

The semantic rules for $\mathbf{let}$ are given in a small-step style
using textual
substitution $[v/x]$ as in \cite{StoyleHBSN07mutatis} or \cite[Sec.4.1]{mosses99concurrentML}, assuming that all variable names are distinct (i.e., an application of Barendregt's variable convention \cite[p.26]{Barendregt81bookLambda}).







\begin{prooftree}
\AxiomC{$e'\transition{X} e''$}
\hspace{-2ex}\UnaryInfC{$\letStatement{\mathbf{var}\ \mathbf{x}:=e'}{e} \transition{X} \letStatement{\mathbf{var}\ \mathbf{x}:=e''}{e}$}
\DisplayProof\hspace{0ex}
\AxiomC{\phantom{$\ \transition{U} \ $}}
\UnaryInfC{$\letStatement{\mathbf{var}\ \mathbf{x}:=v}{e} \transition{\{\restlabel\}} e[v/\mathbf{x}]$}
\end{prooftree}
\subsection{Changing label categories from read-only to read/write for assignments}\label{example_assignment}
Having variable identifiers we may add \emph{assignment} statements and \emph{variable declarations} as $\SigmaExec_{\ref{example_assignment}}$, which would include $\SigmaExec_{\ref{example_variables}}$. 
%
%
\[
d\ ::=\ \mathbf{var}\ \mathbf{x}:=e \mid \dots 
\hspace{3ex}
s\ ::=\ \mathbf{x}:=e \mid d \mid \dots
\]

Both assignments and declarations (which are a subsort of statements) allow the program to \textit{change} the store data structure that we used before for evaluating variable identifiers. 
Therefore, here we need \cat{S} to be a \textit{pairs category} so to capture that a program can also change a store, besides inspecting it.
Important in Modular SOS is that rules which use read-only discrete categories are not affected if we change these label components to be read/write pairs categories (with the same objects). Indeed, the syntax used in the rules refers only to the source objects of the morphisms. 
In consequence, the new label category is made using the label transformers exactly as before, only that when adding the component with the index $S$ we add \cat{S} as a pairs category. All the rules from before use the identity morphisms on the objects. The new rules that we add use proper pair morphism, i.e., referring to both the source and the target stores of the morphism.
%
%
%
\begin{prooftree}
\AxiomC{$e\transition{X}e'$}
\UnaryInfC{$\mathbf{var}\ \mathbf{x}:=e \transition{X} \mathbf{var}\ \mathbf{x}:=e'$}
\DisplayProof\hspace{3ex}
\AxiomC{$\mathbf{x}\not\in\rho$}
\UnaryInfC{$\mathbf{var}\ \mathbf{x}:=v \transition{\{S=\rho\restlabel S=\rho[\mathbf{x}\mapsto v]\}} \nilValue$}
\end{prooftree}
The premise of the second rule can be ensured by
the typing system, and thus could be removed. This is even desired when
we want the rules to be in a standard rule format
\cite{aceto01SOS_handbook,Mousavi07SOSsurvey}. However, rule formats
for DSOS are deferred to future work, discussed in
Section~\ref{subsec_further_work}, where one would need to look at
more recent works on formats for data
\cite{Mousavi06SOSdata,Mousavi13algData} and for MSOS
\cite{ChurchillMM13MSOSformats}.
The rules for assignment are similar.
\begin{prooftree}
\AxiomC{$e\transition{X}e'$}
\UnaryInfC{$\mathbf{x}:=e \transition{X}  \mathbf{x}:=e'$}
\DisplayProof\hspace{3ex}
\AxiomC{$\mathbf{x}\in\rho$}
\UnaryInfC{$\mathbf{x}:=v \transition{\{S=\rho\restlabel S=\rho[\mathbf{x}\mapsto v]\}} \nilValue$}
\end{prooftree}
Again, the premise of the second rule can be guaranteed by type checking
(assuming static binding).
%

\subsection{Functions}\label{example_functions}
Consider \emph{function identifiers} as constants denoted by  $\mathbf{f}\in\idFunc$, and \emph{function definitions} and \emph{function applications}, in the signature $\SigmaExec_{\ref{example_functions}}$ below. This may be added to any signature that includes variable identifiers, like $\SigmaExec_{\ref{example_assignment}}$.

\[
d\ \ ::=\ \ \funcdecl{f}{\mathbf{x}}{s} \mid \dots 
\hspace{6ex}
s\ \ ::=\ \ \mathbf{f}\,e \mid \dots 
\]
Function declarations are stored in a new label component which is a pairs category\footnote{Normally, the program at runtime just inspects the function definitions, therefore we could consider using a read-only, discrete, category label component. However, we are using above function definitions as programming constructs. Their semantics is exactly to change the stored definitions of functions.} containing objects which associate function identifiers to lambda terms. Denote this category by \cat{F} and its objects as $\rho_{f}\in\objects{\cat{F}}$. Add this as a label component using the label transformer 
$\labtrans{F}{\cat{F}}\circ\labtrans{S}{\cat{S}}$. Since variable identifiers are needed, the stores component is added as well.

Another semantics, like that of \cite{StoyleHBSN07mutatis}, may want to consider these two as a single store-like data structure. 
In this paper we prefer to use disjoint structures when possible.
At an implementation stage one could merge these two kinds of stores into one, and take care of differentiating the variable identifiers from the function identifiers correctly.

The transition rules below are as in \proteusSrcLang, using a functional languages style. 
We are using again the notation $s[v/\mathbf{x}]$ for 
substitution 
of all occurrences of the variable in the statement body of the function.
This is typical for reduction semantics, as in \cite{StoyleHBSN07mutatis}; however we could also use evaluation contexts, e.g., as done in \cite{Mosses04modularSOS}. We exemplify the use of evaluation contexts in Section~\ref{example_threads} for the semantics of threads.
\begin{prooftree}
\AxiomC{$e\transition{X}e'$}
\UnaryInfC{$\mathbf{f}\,e \transition{X} \mathbf{f}\,e'$}
\DisplayProof\hspace{1ex}
\AxiomC{$\rho_{f}(\mathbf{f})=\lambda(\mathbf{x}).s$}
\UnaryInfC{$\mathbf{f}\,v \transition{\{F=\rho_{f}\restlabel\}} s[v/\mathbf{x}]$}
\DisplayProof\hspace{1ex}
\AxiomC{\phantom{$\mathbf{f}\not\in\rho_{f}$}}
\UnaryInfC{$\funcdecl{f}{\mathbf{x}}{s}
  \transition{\{F=\rho_{f}\restlabel F=\rho_{f}[\mathbf{f}\mapsto
    \lambda(\mathbf{x}).s]\}} \nilValue$}
\end{prooftree}
%

\subsection{Records}\label{example_records}
We add to $\SigmaExec_{\ref{example_functions}}$ a set of \emph{record names} as constants $\mathbf{r}\in\idRecord$ and a set of \emph{record labels} as constants $\mathbf{l}\in\idRecordLab$, together with two language constructs for \emph{record definition} and \emph{record projection}, thus making $\SigmaExec_{\ref{example_records}}$:
\[
d\ \ ::=\ \ \recorddecl{r}{\mathbf{l_{i}}=e_{i}} \mid \dots 
\hspace{10ex}
e\ \ ::=\ \ \mathbf{r.l} \mid \dots 
\]
Record definitions are stored in a new label component \cat{R} which is a pairs category containing objects mapping record identifiers to record terms (where a record term is $\{\mathbf{l_{i}}=e_{i}\}$, with $i$ ranging here over the list of record elements). Extend the previous labels category with:
$\labtrans{R}{\cat{R}}$.
The transition rules for the two new programming constructs are:
\begin{prooftree}
\AxiomC{$\mathbf{r}\not\in\rho_{r}$}
\UnaryInfC{$\recorddecl{r}{\mathbf{l_{i}}=e_{i}} \transition{\{R=\rho_{r}\restlabel R=\rho_{r}[\mathbf{r}\mapsto \{\mathbf{l_{i}}=e_{i}\}]\}} \nilValue$}
\end{prooftree}
\begin{prooftree}
\AxiomC{$\rho_{r}(\mathbf{r})=\{\mathbf{l_{i}}=e_{i}\},\ \exists i:\mathbf{l_{i}}=\mathbf{l}, e_{i}=e$}
\UnaryInfC{$\mathbf{r.l} \transition{\{R=\rho_{r}\restlabel\}} e$}
\end{prooftree}
%

The rules above give a ``lazy'' semantics for records, where the evaluation of the expressions is postponed until the record label is referenced. This is similar to inlining constructs, as e.g.\ in the Promela \cite[ch.3]{spinBook}. Moreover, these rules implement a small-step semantics. 
Big-step or eager semantics could also be given.

The choice of syntax for the records is biased by our goal to reach
\proteusSrcLang. Nevertheless, using the theory we presented, one may
give semantics to more complex records  
such as those in e.g.\ \cite[Chap.9]{huttel10SOSbook}.

\subsection{Conditional construct}\label{example_if}
The conditional construct, of sort \emph{Statement}, taking as parameters a term of sort expression and two terms of sort statement, can be added to any of the signatures from before; here $\SigmaExec_{\ref{example_assignment}}\subset\SigmaExec_{\ref{example_if}}$.
\[
s\ \ ::=\ \ \ifthenelseStatement{e}{s_{1}}{s_{2}} \mid \dots 
\]
The semantics does not rely on any particular form of the label categories.
\begin{prooftree}
\AxiomC{$e\transition{X}\mathbf{true}$}
\UnaryInfC{$\ifthenelseStatement{e}{s_{1}}{s_{2}} \transition{X} s_{1}$}
\DisplayProof\hspace{3ex}
\AxiomC{$e\transition{X}\mathbf{false}$}
\UnaryInfC{$\ifthenelseStatement{e}{s_{1}}{s_{2}} \transition{X} s_{2}$}
\end{prooftree}

\subsection{Comparison with \proteusSrcLang}

By now we have reached the language \proteusSrcLang\ of \cite{StoyleHBSN07mutatis} (omitting reference constructs, which could be added following \cite[Sec.4.2]{mosses99concurrentML}). We add the upgrade construct in Section~\ref{subsec_DSOS_proteus}.
We have used single variable identifiers above, but this can be easily generalized to lists.
Moreover, since we investigate only semantic aspects in this paper (i.e., no typing systems), we assume only syntactically correct programs, including static typing. Discussions about typing over MSOS and DSOS are relegated to Section~\ref{sec_typing_proteus}.

The transition rules that we gave for \proteusSrcLang\ used a label category formed of three components: \cat{S}, \cat{F}, and \cat{R}. In \cite{StoyleHBSN07mutatis} the semantics of \proteusSrcLang\ keeps all these information in one single structure called \textit{heap}. The separation of this structure that we took does not impact the resulting semantic object obtained for \proteusSrcLang\ in \cite[Fig.12]{StoyleHBSN07mutatis}. 

\begin{proposition}[conformance with \proteusSrcLang\ semantics]\label{prop_conform_Proteus_sem}
Considering reductions $\Rightarrow$ to be either a compilation or an evaluation step from \cite[Fig.12]{StoyleHBSN07mutatis}, and the transitions $\transition{\alpha}$ obtained with the MSOS rules for \proteusSrcLang, we have that
\[
\Omega,H,e \Rightarrow \Omega,H',e' \mbox{\ \ iff\ \ } e\transition{\alpha}e' \mbox{ with }
\]
\[
\source{\alpha}=(\rho_s,\rho_f,\rho_r),\target{\alpha}=(\rho'_s,\rho'_f,\rho'_r),H=\rho_s\cup\rho_f\cup\rho_r,H'=\rho'_s\cup\rho'_f\cup\rho'_r.
\]
When we add types in Section~\ref{sec_typing_proteus} then the typing environment $\Omega$ may change and will be captured by the types label $\cat{TY}$ on the morphisms: $\Omega=\rho_{\mathit{ty}},\Omega'=\rho'_{\mathit{ty}}$.
\end{proposition}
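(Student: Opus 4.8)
The plan is to prove the biconditional by establishing a tight correspondence between the single-structure semantics of \cite[Fig.12]{StoyleHBSN07mutatis} and the three-component MSOS semantics we have built up through Sections~\ref{example_noLabels}--\ref{example_if}, proceeding by induction on the structure of the derivation (equivalently, on the syntax of the term $e$ being reduced, with a nested induction following the shape of the applicable rule). The bridge is the decomposition $H = \rho_s \cup \rho_f \cup \rho_r$ (and its primed counterpart), together with the fact that \proteusSrcLang's heap partitions cleanly into a variable-store part, a function-definition part, and a record-definition part, with the three sets of identifiers pairwise disjoint. I would first record this decomposition as a standing assumption (or a small lemma): every heap $H$ reachable by $\Rightarrow$ splits uniquely as such a disjoint union, so that $\source{\alpha}$ and $\target{\alpha}$ in the MSOS transition carry exactly the same information as $H$ and $H'$. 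This disjointness is what makes the separation into \cat{S}, \cat{F}, \cat{R} faithful rather than merely sound.

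For the forward direction ($\Rightarrow$ implies $\transition{\alpha}$), I would go through the reduction rules of \cite[Fig.12]{StoyleHBSN07mutatis} one at a time. For a variable lookup, $H(\mathbf{x}) = v$ becomes $\rho_s(\mathbf{x}) = v$, matching the rule in Section~\ref{example_variables}; the heap is unchanged, so $\alpha.S$ is the identity and $\alpha.F$, $\alpha.R$ are the (implicit) identities from the three-dots notation, giving $H = H'$ as required. For an assignment we use the rule of Section~\ref{example_assignment}: the \proteusSrcLang\ update $H[\mathbf{x}\mapsto v]$ corresponds to $\source{\alpha.S} = \rho_s$, $\target{\alpha.S} = \rho_s[\mathbf{x}\mapsto v]$, while $\alpha.F = \rho_f$ and $\alpha.R = \rho_r$ are left identities, so the three-way union of the targets equals $H[\mathbf{x}\mapsto v] = H'$. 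Function declaration/application (Section~\ref{example_functions}) and record declaration/projection (Section~\ref{example_records}) are handled identically, each touching only its own component. The congruence/context rules (sequential composition, $\mathbf{let}$, function-argument evaluation, conditional guard evaluation, record-field evaluation) are handled by the inductive hypothesis: a subterm step $e' \Rightarrow e''$ with heap change $H \to H'$ gives, by induction, $e' \transition{\alpha} e''$ with the stated decomposition, and the congruence rule (whose label variable is the pass-through $X$) lifts this to the compound term with the same $\alpha$. Compilation steps from \cite[Fig.12]{StoyleHBSN07mutatis} — which install an initial declaration into the heap before the body runs — are matched by exactly the corresponding declaration rules, which is why the statement quantifies $\Rightarrow$ over ``either a compilation or an evaluation step.'' The converse direction ($\transition{\alpha}$ implies $\Rightarrow$) is the mirror image: each MSOS axiom rule names a component change that reassembles, via the disjoint union, into precisely one \proteusSrcLang\ heap transition, and each congruence rule descends by the inductive hypothesis.

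The main obstacle I anticipate is bookkeeping around the congruence rules and the value-added / substitution-based presentation. Because our rules for $\mathbf{let}$, function application, and record projection use capture-avoiding substitution $e[v/\mathbf{x}]$ and a small-step reduction discipline, while \cite[Fig.12]{StoyleHBSN07mutatis} may phrase the analogous steps slightly differently (e.g., with evaluation contexts, or folding the substitution into a different rule granularity), the correspondence is not literally rule-for-rule: one \proteusSrcLang\ step may correspond to a short burst of MSOS steps that together change only one component, or vice versa. So the induction really has to be on derivations with a ``matching up to administrative steps'' relation, and one must check that these administrative steps never touch the heap components — which they do not, since they are all labelled by identity morphisms (the $\{\restlabel\}$ rules). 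Once that invariant is isolated, the argument is a routine, if lengthy, case analysis. Finally, for the parenthetical remark about types: when the typing environment $\Omega$ is added in Section~\ref{sec_typing_proteus} as a fourth component \cat{TY}, the same induction extends verbatim, with $\Omega = \rho_{\mathit{ty}}$ playing for \cat{TY} exactly the role that $\rho_s,\rho_f,\rho_r$ play for \cat{S},\cat{F},\cat{R}; I would simply note that the extension is immediate by the same disjoint-union bookkeeping and requires re-running no cases.
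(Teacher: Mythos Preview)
Your proposal is correct and, at bottom, follows the same rule-matching strategy as the paper: show that each heap operation in \cite[Fig.12]{StoyleHBSN07mutatis} corresponds to exactly the component change recorded on the MSOS arrow, and lift through congruence rules by induction on derivations.

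The one genuine difference is in packaging. The paper's proof does not unfold the induction by hand; instead it invokes the general correspondence between standard labelled transition systems and arrow-labelled transition systems established in \cite[Prop.~3\&4]{Mosses04modularSOS}. That result says, once and for all, that an ATS transition $e\transition{\alpha}e'$ with $\source{\alpha}$, $\target{\alpha}$ is interchangeable with an ordinary LTS transition carrying the source/target state explicitly in the configuration. With that in hand, the paper simply observes that the Proteus heap decomposes into the three components and that each rule of \cite[Fig.12]{StoyleHBSN07mutatis} is matched by the corresponding MSOS rule. Your approach instead re-derives this correspondence from scratch via the explicit case analysis. Both are valid; the paper's route is shorter because it leans on existing MSOS metatheory, while yours is self-contained and makes the per-construct bookkeeping visible. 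One small caution: your anticipated obstacle about ``one step corresponding to a short burst of steps'' is likely over-worry here, since the proposition asserts a step-for-step biconditional; if such a granularity mismatch actually occurred, the statement itself would fail, so the rules must in fact line up one-to-one (modulo the identity-labelled administrative transitions you already flag as harmless).
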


\begin{proof}
The proof of this proposition essentially uses the relation between standard labelled transition systems and the arrow-labelled transition systems of the MSOS \cite[Prop.3\&4]{Mosses04modularSOS}. 
Here we are looking at the particular rules of \proteusSrcLang. It is not difficult to see that the changes (and inspections) to the heap that are made in the original rules of \cite[Fig.12]{StoyleHBSN07mutatis} are matched by the ones mentioned on the arrows of the MSOS rules given above.

We first correlate the functional syntax used by \proteusSrcLang\ with our more imperative definitions from Sections~\ref{example_noLabels}-\ref{example_if}.
The constructs for sequential composition from Sec.~\ref{example_noLabels} are encoded in the functional style of \proteusSrcLang\ using multiple applications of the $\mathbf{let}$ construct.
Our syntax for the $\mathbf{let}$ construct (Sec.~\ref{example_variables}) as well as for variable definition (Sec.~\ref{example_assignment}) and function definition (Sec.~\ref{example_functions}) are the same as in \proteusSrcLang, albeit looking more imperative than functional (e.g.: instead of the \proteusSrcLang\ notation $\mathbf{z}\mapsto\lambda(x).e$ for function definition we use $\funcdecl{f}{\mathbf{x}}{s}$ with $\mathbf{f}$ as the $\mathbf{z}$ and $s$ as the $e$).
For records we chose in Sec.~\ref{example_records} to name them $\mathbf{record\ r}$ and to use this name in projections $\mathbf{r.l}$, whereas \proteusSrcLang\ uses just expressions when doing projections, which for us is the body of the record $\{\mathbf{l_{i}}=e_{i}\}$.
The $\mathbf{if}$ statement from \proteusSrcLang\ uses as test the comparison of two expressions, whereas in Sec.~\ref{example_if} we use only one expression and let the rules decide that the $\mathbf{if}$ is executed only when this expression evaluates to a Boolean.

We also correlate the transition rules of \proteusSrcLang\ from \cite[Fig.12]{StoyleHBSN07mutatis} with our rules from Sections~\ref{example_noLabels}-\ref{example_if}.
\proteusSrcLang\ uses evaluation contexts \cite[Fig.11]{StoyleHBSN07mutatis} and one rule \textsc{(cong)} for context reductions in \cite[Fig.12]{StoyleHBSN07mutatis}.
We achieve the same effect by adding for each construct explicit rules that evaluate expressions until their final value form. This is not new, e.g., \cite{mosses99concurrentML} does this in the MSOS style for a functional language and explains well in \cite[Sec.5.1]{mosses99concurrentML} the correlations with other related styles of semantics including evaluation context reduction.
For the $\mathbf{if}$ construct \proteusSrcLang\ uses one evaluation context $\mathbf{let\ } z=E \mathbf{\ in\ }e$ which is meant to ensure that the bound variable $z$ the expression is evaluated to a value, after which the corresponding rule \textsc{(let)} from \cite[Fig.12]{StoyleHBSN07mutatis} is applicable. In our case, the left rule from Sec.~\ref{example_variables} corresponds to the evaluation context, whereas the right rule is the same as in \proteusSrcLang.
For our variable declarations and assignments in Sec.~\ref{example_assignment} the left rule corresponds to first evaluating the expression to a final value, whereas the right rule corresponds to the last compilation rule of \cite[Fig.12]{StoyleHBSN07mutatis} where the heap is updated (in our case the label $S$ is involved).
For functions our right-most rule from Sec.~\ref{example_functions} corresponds exactly to the compilation rule for functions from \cite[Fig.12]{StoyleHBSN07mutatis} (the remaining compilation rule from \cite[Fig.12]{StoyleHBSN07mutatis} is not applicable to us because we do not have typing information).
Our other two rules correspond, the left-most one to the evaluation contexts for function applications from \cite[Fig.11]{StoyleHBSN07mutatis}, whereas the middle one to the rule \textsc{(call)} from \cite[Fig.12]{StoyleHBSN07mutatis}.
In Sec.~\ref{example_records} we chose to give a lazy semantics to records, where we store and return the expressions corresponding to some record entry, whereas \proteusSrcLang\ gives an eager semantics where they store and return the corresponding values. For this \proteusSrcLang\ keeps in the heap records with values, whereas we keep in the label component \cat{R} records as defined with their original expressions. Moreover, \proteusSrcLang\ uses evaluation contexts for records to produce their corresponding values, whereas we do not. However, it is straightforward to give eager rules in the MSOS style; we only need to add similar as before rules for evaluating expressions until their final values (corresponding to the contexts of \proteusSrcLang) and then rules similar to the current ones but which work on values instead of expressions.
Our two rules from Sec.~\ref{example_if} first evaluate the test expression, and if it evaluates to a Boolean, one or the other of the branches is taken as the continuing statement. This matches the two transition rules \textsc{(if-t)} and \textsc{(if-f)} of \proteusSrcLang\ from \cite[Fig.12]{StoyleHBSN07mutatis} which work only on values, and also explicit the evaluation contexts from \cite[Fig.11]{StoyleHBSN07mutatis} together with the evaluation context reduction rule \textsc{(if-t)} from \cite[Fig.12]{StoyleHBSN07mutatis} for this statement. We provided big-step style rules only for exemplification purposes, whereas small-step style rules would be similar to what we did for the other previous constructs.
The other rules from \cite[Fig.12]{StoyleHBSN07mutatis} are not applicable, especially rule 2 is for coercions, which we do not consider, rules 5-6 are for references and are similar to \cite[Sec.4.2]{mosses99concurrentML}, whereas rules 10 and 12 are for updates, which we consider further down.
\end{proof}

\section{Dynamic SOS}\label{sec_DSOS}

To give intuitions for Dynamic SOS consider the program term as acting
on a data structure during its execution, like a store or a heap,
or a configuration reflecting a distributed run-time
  environment. Classical operational semantics describes how each
programming construct changes these data structures (or uses the
information stored in them). The dynamic upgrades use upgrade data that is seen as
coming from outside the program, being controlled by an external entity. 
It is irrelevant for the upgrade programming construct where or how the upgrade data appears. What is important though is how the upgrade construct uses the upgrade data (e.g., to change the program's state) and when during the execution of the program. This is described through the semantics of the upgrade constructs and is ensured safe through static analysis (like for any other programming constructs).
Implementing a way to insert upgrade data can be done in various ways, independent of the semantics of the upgrade constructs; e.g., in \creol\ a pool of messages in maintained for communications between the programming objects (i.e., part of the way a program executes), and this is also used for class upgrades by inserting into the pool a special upgrade message which is not used by any programming constructs, but only by the upgrade mechanism.

DSOS considers that there is a separate data structure containing
information about upgrades. This upgrade data structure is changed by
the external entity at any point in the execution of the program, and
the program may only inspect it. The program can decide at which
points in the execution it is safe to do an upgrade. The upgrade
operation takes information from the upgrade data and changes 
the data structures that the program maintains. Therefore, this may
change the behaviour of the program.
This is similar to the \proteusSrcLang\ update mechanism.

Since the upgrade points are decided by the program, upgrade programming constructs can be added to the language. A programmer can use these, or a tool can detect program points, and insert such upgrade constructs in the code as necessary. The semantics of an upgrade construct essentially describes how the upgrade changes the data structures of the program.

These ideas are simple and capture only how the semantics of upgrades should be thought and defined. Complications may appear in the definition of the actual update functions of the data structures, as well as in the analysis technique of the programming language for detecting the upgrade points. These also interact with the typing system. Much of the related works on dynamic upgrading constructs \cite{MalabarbaPGBB00,DrossopoulouDDG02,BoyapatiLSMR03,JohnsenOS05dynamicClasses,AjmaniLS06,StoyleHBSN07mutatis,BiermanPN08} focus on these aspects, which are usually developed on top of the semantics. We discuss typing aspects in Section~\ref{sec_typing_proteus}.

Dynamic SOS builds on the modular approach from the previous sections by incorporating the following aspects. 
\begin{enumerate}
\item The arrow-labelled transition system is enriched by adding new kinds of transitions labelled not with morphisms, but with endo\-functors. 

\item In consequence, the syntax for writing transition rules is enriched to use endo\-functors. 

\item The label transformer is enriched accordingly, and also the label categories that we use. Defining the endo\-functors, though, is something we are already familiar with, as we shortly see.

\item The program syntax is assumed to have programming constructs denoting upgrade points of various kinds, the semantics of which are given with the endo\-functors. 
\end{enumerate}

\begin{definition}[upgrade transition systems]\label{def_updTransSys}
An \emph{upgrade transition system} (UTS) is $(\Gamma,L,\transition{})$ with $\Gamma$ the set of program terms and $L=\morphisms{\cat{A}}\cup\morphisms{\mathit{End}(\barecat{A})}$ the set of labels with $\mathit{End}(\barecat{A})$ the category of endofunctors over $\barecat{A}$ from Definition~\ref{def_functor}, and $\barecat{A}$ having the same objects as $\cat{A}$, i.e., $\objects{\barecat{A}}=\objects{\cat{A}}$.
We call the transitions labelled by endo\-functors, \emph{jumps}, and distinguish them by labeling with capital letters $E\in\morphisms{\mathit{End}(\cat{A})}$. The other transitions are called \emph{steps}.
A \emph{computation} in UTS is a possibly infinite sequence of transitions from $\Gamma\times L\times\Gamma$, i.e., $t_{0}\transition{l_{0}}t_{1}\transition{l_{1}}t_{2}\dots$, starting with a step, and restricted in the following sense: for $\pi$ denoting the sequence of labels in a computation, i.e., defined with the grammar
$\pi ::= \ \transition{\alpha}\ \mid\ \pi \transition{\alpha}\ \mid\ \pi \transition{E}$,
the sequencing of two transitions is allowed only when their labels respect the following: 
\[
\pi \transition{\alpha} \mbox{\ \  iff\ \ \ } \target{\pi}=\source{\alpha}
\]
with the $\target{(\cdot)}$ defined for computations $\pi$ inductively as 
\[
\target{(\transition{\alpha})}=\target{\alpha} \hspace{3ex} \target{(\pi\transition{\alpha})}=\target{(\transition{\alpha})} \hspace{3ex} \target{(\pi\transition{E})}= E(\target{\pi}).
\]
%
\end{definition}

\begin{corollary}
When there are no jumps, a computation in UTS is defined exactly as for ALTSes.
\end{corollary}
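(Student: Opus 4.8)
The plan is to unfold both definitions and check that they coincide clause by clause once the jump production is dropped. First I would make precise what ``no jumps'' means: although the label set of the UTS is $L=\morphisms{\cat{A}}\cup\morphisms{\mathit{End}(\cat{A})}$, no transition is actually labelled by an endofunctor, so effectively $L=\morphisms{\cat{A}}$. In the grammar $\pi ::= \transition{\alpha} \mid \pi;\transition{\alpha}\mid\pi;\transition{E}$ this makes the third production $\pi;\transition{E}$ uninstantiable, so every UTS computation is generated by $\transition{\alpha} \mid \pi;\transition{\alpha}$ alone, i.e.\ it is a finite sequence of step transitions $t_{0}\transition{\alpha_{0}}t_{1}\transition{\alpha_{1}}\cdots t_{k}$.

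Next I would evaluate the target function on such sequences. The clause $\target{(\pi;\transition{E})}=E(\target{\pi})$ never applies, and $\target{(\pi;\transition{\alpha})}=\target{(\transition{\alpha})}=\target{\alpha}$, so the running target of a computation is always just the target of its last morphism. Hence the side condition ``$\pi;\transition{\alpha}$ iff $\target{\pi}=\source{\alpha}$'' reduces, for a $\pi$ ending in $\transition{\alpha'}$, to $\target{\alpha'}=\source{\alpha}$. By Definition~\ref{def_category_app} this is exactly the condition for the composite $\alpha\circ\alpha'$ to exist in $\cat{A}$; so the constraint governing consecutive transitions in a jump-free UTS computation is precisely the composability requirement $\alpha_{i+1}\circ\alpha_{i}\in\morphisms{\cat{A}}$ of Definition~\ref{def_arrowTS}. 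Combined with the observation that a jump-free UTS has the same states $\Gamma$, the same transition relation, and label set $\morphisms{\cat{A}}$, this shows that such a UTS is literally an ATS and its computations are the ATS computations (each trivially starting with a step, since every transition is one).

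There is no real obstacle here; the argument is a direct definitional unfolding. The one point worth a sentence is finiteness: UTS computations are generated by a finite inductive grammar, whereas Definition~\ref{def_arrowTS} phrases computations as possibly infinite sequences $t_{0}\transition{\alpha_{0}}t_{1}\cdots$. I would either read the ATS notion as applying to finite prefixes (noting a sequence is an ATS computation iff all its finite prefixes are), or simply observe that the grammar produces exactly the finite ATS computations and regard the infinite case as the evident limit; either way the correspondence is exact.
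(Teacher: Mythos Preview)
Your proposal is correct. The paper states this corollary without any proof, treating it as immediate from Definition~\ref{def_updTransSys}; your argument simply makes explicit the definitional unfolding the paper leaves to the reader, and your remark on the finite/infinite discrepancy between the inductive grammar and the sequence notation of Definition~\ref{def_arrowTS} is a fair observation that the paper does not address.
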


Requiring a computation to start with a step transition captures our intuition that dynamic upgrades may happen only during the execution of the program, but not before it starts.
Note that the endofunctors are used only wrt.\ their applications on the objects of the category, disregarding their application on morphisms. This is why we only specify that $\objects{\barecat{A}}=\objects{\cat{A}}$. Thus, we can have as $\barecat{A}$ any version of \cat{A} with more or less morphisms.

\begin{definition}[upgrade label transformers]\label{def_labtransUpd}
Consider a second indexing set \indexU\ disjoint from \indexE. The \emph{upgrade label transformer} is defined the same as the label transformer from Definition~\ref{def_labtrans}, but using the upgrade indexes $j\in\indexU$. The $\ulabtrans{j}{\cat{U}}$ maps a category \cat{A} to a product category $\cat{A}\prodcat\cat{U}$, where \cat{U} may only be a discrete category.
\end{definition}

The \cat{U} categories are called the \emph{upgrade components} of the labels. These are discrete because the program is not supposed to change the upgrade information, i.e., any morphisms on the transitions would include only identity morphisms for the upgrade components. Because of the disjointness of the indexing sets, the same get operation from before is still applicable, and existing transition rules are not affected by the addition of an upgrade component. 
In essence, the upgrade label transformer is a special case of the label transformer, i.e., uses a disjoint set of indexes \indexU\ and only discrete categories \cat{U}.

Because the upgrade components are discrete categories, when referring to an upgrade component of a morphism label we in fact refer to the current upgrade object. 
Modularity is not disturbed, and new data categories may be added with the label transformer in the same way, without any interference with the upgrade components. 

The semantics of dynamic software upgrades is given in terms of endo\-functors on the product category.
These endo\-functors are obtained from combining \textit{basic endo\-functors}, which are defined in terms of only some of the data and the upgrade components. To understand how the endo\-functors are obtained and how the basic ones should be defined, we first give some properties specific to the kinds of categories that we use.
%
%

\begin{proposition}\label{prop_propertiesDiscretPair}
Properties for label categories and their products.

\begin{enumerate}
\item In discrete or pairs categories morphisms are uniquely defined by the objects.

\item\label{propertiesDiscretPair_2} Let \cat{A} and \cat{B} be both either pairs or discrete categories. In the category returned by the label transformer $\labtrans{i}{\cat{B}}(\cat{A})$ the morphisms are uniquely defined by the objects.

\item\label{propertiesDiscretPair_3} Let \cat{A} and \cat{B} be both either pairs or discrete categories and \cat{C} a monoid category. In the category returned by the label transformer $\labtrans{j}{\cat{C}}(\cat{A})$, as well as in $\labtrans{i}{\cat{B}}(\labtrans{j}{\cat{C}}(\cat{A}))$, each morphism is uniquely determined by the objects up to the morphism components coming from the monoid category; i.e., when the monoid components are projected away.
\end{enumerate}
\end{proposition}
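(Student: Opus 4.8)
The plan is to prove the three items in sequence, reusing each earlier item as a lemma for the next, so that only item 1 requires genuine work from the basic definitions.

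For item 1, I would argue directly from Definition \ref{def_labcat}. A discrete category has only identity morphisms, so between any two objects $o,o'$ there is a morphism only when $o=o'$, and then it is the unique $id_o$; hence the morphism is determined by its source (equivalently its target). A pairs category has, by definition, exactly one morphism in each direction between any two objects, so specifying the source and target objects pins the morphism down uniquely. In both cases a morphism $\alpha$ is thus recoverable from the pair $(\source{\alpha},\target{\alpha})$, which is what the claim asserts. This is the same observation already exploited in Example~\ref{example_category_kinds} and in the proof of Proposition~\ref{prop_enc1}, so I would keep it brief.

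For item 2, I would unfold the construction of $\labtrans{i}{\cat{B}}(\cat{A})$ from Definition~\ref{def_labtrans}: it is the product category $\cat{A}\prodcat\cat{B}$, whose objects are pairs of objects and whose morphisms are pairs $(\alpha_{\cat{A}},\beta_{\cat{B}})$ of component morphisms, composed componentwise. A morphism of the product therefore has source $(\source{\alpha_{\cat{A}}},\source{\beta_{\cat{B}}})$ and target $(\target{\alpha_{\cat{A}}},\target{\beta_{\cat{B}}})$. Since $\cat{A}$ and $\cat{B}$ are each pairs or discrete categories, item 1 says $\alpha_{\cat{A}}$ is determined by $(\source{\alpha_{\cat{A}}},\target{\alpha_{\cat{A}}})$ and $\beta_{\cat{B}}$ by $(\source{\beta_{\cat{B}}},\target{\beta_{\cat{B}}})$; hence the pair $(\alpha_{\cat{A}},\beta_{\cat{B}})$ is determined by the source and target objects of the product morphism. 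An easy induction then extends this to any finite iteration of the label transformer applied to pairs/discrete categories, which is the form the product categories actually take in the paper.

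For item 3, the new ingredient is that a monoid category $\cat{C}$ has a single object but possibly many morphisms (the elements of its underlying set $\mathit{Act}$), so morphisms there are \emph{not} determined by objects. I would again write a morphism of $\labtrans{j}{\cat{C}}(\cat{A})$ as a pair $(\alpha_{\cat{A}},\gamma_{\cat{C}})$; by item 1 the $\cat{A}$-component is fixed by the objects, while the $\cat{C}$-component is free. Projecting away the monoid component (via the $\getcat$ operation of Definition~\ref{def_labtrans}, discarding the index $j$) therefore leaves exactly the data determined by the objects, which is the precise phrasing of the claim. For the nested case $\labtrans{i}{\cat{B}}(\labtrans{j}{\cat{C}}(\cat{A}))$ I would combine this with item 2: the $\cat{A}$- and $\cat{B}$-components are pinned down by objects, the single $\cat{C}$-component is the only residual freedom, and projecting it away restores unique determination.

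The main obstacle is essentially bookkeeping rather than mathematics: one must be careful that "the objects" in the statement means the source and target objects of the composed product morphism, not just one of them (needed for the pairs-category case), and that the projection in item 3 is exactly the $\getcat$-based forgetful map already defined, so that "up to the monoid components" is given a precise meaning rather than an informal one. Once the componentwise structure of the product categories built by the label transformer is made explicit, each item reduces to item 1 plus a routine induction on the number of applied label transformers.
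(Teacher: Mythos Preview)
Your proposal is correct and, in fact, considerably more detailed than the paper's own treatment: the paper disposes of this proposition in a single sentence, stating that ``verifying the three properties is an easy exercise in category theory.'' Your componentwise unfolding of the product structure built by the label transformer, together with the reuse of item~1 as a lemma for items~2 and~3, is exactly the routine verification the paper alludes to but does not spell out.
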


\begin{proof}
Verifying the three properties is an easy exercise in category theory.
\end{proof}

For discrete or pairs categories the endo\-functors have a special property, they are completely defined by their application to the objects of the category only.

\begin{proposition}\label{prop_endo_prodPairs_prop}
Let \cat{A} be a discrete or a pairs category, and $F:\cat{A}\rightarrow\cat{A}$ an endo\-functor on \cat{A}. $F$ is completely defined by its application to the objects of \cat{A}.
\end{proposition}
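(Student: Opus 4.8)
The plan is to show that an endofunctor $F : \cat{A} \rightarrow \cat{A}$ on a discrete or pairs category is forced on morphisms once its action on objects is fixed, and then to observe that any assignment on objects does extend to a (necessarily unique) functor. First I would recall from Proposition~\ref{prop_propertiesDiscretPair}(1) that in a discrete or pairs category there is at most one morphism between any ordered pair of objects: exactly the identity $id_o$ in the discrete case, and the unique pair-morphism $(o,o')$ in the pairs case. Hence a morphism $\alpha \in \morphisms{\cat{A}}$ is completely determined by the ordered pair $(\source{\alpha}, \target{\alpha})$.

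The uniqueness direction then follows directly from the functor laws (Definition~\ref{def_functor}). Given the object map $o \mapsto F(o)$, any morphism $\alpha$ with $\source{\alpha}=o$ and $\target{\alpha}=o'$ must be sent to a morphism $F(\alpha)$ with $\source{F(\alpha)}=F(o)$ and $\target{F(\alpha)}=F(o')$; but in \cat{A} there is only one such morphism, so $F(\alpha)$ is determined. Thus two endofunctors agreeing on objects agree on all morphisms, i.e.\ are equal. For existence (so that ``completely defined by its application to the objects'' is not vacuous), I would check that any function $f$ on $\objects{\cat{A}}$ extends to a functor: define $F(\alpha)$ to be the unique morphism from $f(\source{\alpha})$ to $f(\target{\alpha})$. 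In the discrete case this forces $F(id_o)=id_{f(o)}$, and composition is preserved trivially since all morphisms are identities. In the pairs case $F(id_o)=F((o,o))=(f(o),f(o))=id_{f(o)}$, and $F((o',o'')) \composition F((o,o')) = (f(o'),f(o'')) \composition (f(o),f(o')) = (f(o),f(o'')) = F((o,o'') ) = F((o',o'') \composition (o,o'))$, so both functor laws hold.

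I do not expect a serious obstacle here; the only thing to be careful about is being explicit that ``uniquely defined by the objects'' is being used in the strong sense that the object-action and the functoriality constraints together pin down everything, and that in the pairs case one genuinely uses the existence of \emph{the} morphism $(o,o')$ rather than just its uniqueness. The argument is the morphism-level analogue of Proposition~\ref{prop_enc1}'s reasoning, transported from natural transformations between functors to functors between categories, so I would phrase it in parallel to keep the exposition uniform.
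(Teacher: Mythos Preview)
Your proposal is correct and follows the same approach as the paper: the key observation in both is that since morphisms in a discrete or pairs category are determined by their source and target, the functoriality constraint $\source{F(\alpha)}=F(\source{\alpha})$, $\target{F(\alpha)}=F(\target{\alpha})$ forces $F(\alpha)$ to be the unique morphism from $F(o)$ to $F(o')$. Your version is somewhat more thorough in that you explicitly verify the functor laws for the existence direction (the paper only remarks that the resulting assignment ``respect[s] the requirements \dots\ of being a functor''), but this is an elaboration of the same argument rather than a different route.
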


\begin{proof}
Consider that for $F$ we know how it is applied to the objects in $\objects{\cat{A}}$. Consider one morphism $o\transition{\alpha}o'$, which is uniquely defined by the two objects $o,o'$ (which may also be the same object, in a discrete category). The functor associated to this morphism is the following morphism from \cat{A}: $F(\alpha)=(F(o),F(o'))$ which is the unique morphism from $F(o)$ to $F(o')$, hence respecting the requirements from Definition~\ref{def_functor} of being a functor.
\end{proof}

However, Proposition~\ref{prop_endo_prodPairs_prop} talks about products of only pairs categories or products of only discrete categories. 
Whereas, the product of a discrete with a pairs category is different since there may be tuples of objects with no morphism between them. This is an issue when putting together an upgrade component, which is always discrete, and a pairs data component. 
To be in line with our intuition that an upgrade operation should be arbitrarily definable and dependent on both the upgrade and the data objects, we will define endo\-functors on \emph{discretized categories}.


\begin{definition}\label{def_discretizedCat}
A \emph{discretized category} \discrete{\cat{A}} is  obtained from a category \cat{A} by removing all non-identity morphisms. 
\end{definition}

Endofunctors are meant to describe how upgrade information from the objects of the \cat{U} components change the objects from the data components, thus defining a correspondence between the data before and after some upgrade, for any upgrade information. The result above suggests that for pairs or discrete categories, defining such endo\-functors resorts to only defining their application on the objects of the category (i.e., a total function). These objects are normally tuples involving both upgrade and data objects.

In general endofunctors must also relate the morphisms, which restricts their definition. When a pairs category is coupled with a discrete upgrade category then the endofunctor definition on the objects must be made in such a way that morphisms on the pair category are somewhat preserved. This would, for example, not allow to freely change the upgrade object, e.g., since $(p_{1},u)\transition{(\alpha,id_{u})}(p_{2},u)$ we cannot define $F(p_{1},u)=(p_{3},u_{1})$ and $F(p_{2},u)=(p_{4},u_{2})$ because there is not morphism between these last two. This example is intuitive when doing incremental upgrades using only part of the upgrade information that disappears after the upgrade operation.
It is interesting to study what kinds of practically useful upgrades can be defined if endofunctors can be defined on whatever label categories.

%

We will work in this paper with simple endo\-functors, as in Proposition~\ref{prop_endo_prodPairs_prop}, for which the application on the objects is enough. However, DSOS should handle more complex categories, where the action of the endo\-functors on the morphisms may also be relevant, e.g., monoid categories which are used for handling errors in \cite[Sec.3.7]{Mosses04modularSOS}.
A monoid category intuitively defines ``labels'' on transitions which are used (often in process algebras) to define which transitions (with what labels) are allowed from which program terms. However, the program term itself does not use this information.
In the presence of monoid labels we can still define the endofunctors only on the objects, and have a natural definition of the corresponding morphisms, i.e., matching the monoid part of the morphism pair (according to Proposition~\ref{prop_propertiesDiscretPair}(\ref{propertiesDiscretPair_3})).
%
%
Since we work with simple categories in this paper, one question (which we discuss more in the further work Section~\ref{subsec_further_work}) is whether such endo\-functors can be encoded into simple morphisms of potentially different categories.

\begin{notation}
For some indexing set $I\subset\indexE$ (or $I\subset\indexU$)
we denote
by $\cat{D}_{I}$ (respectively $\cat{U}_{I}$) the product category $\prodcat_{i\in I}\cat{D}_{i}$ obtained using the (upgrade) label transformer using the indexes from $I$ attached to the respective category component.
\end{notation}

\begin{definition}[basic endofunctors]\label{def_basicUpgrEndo}
For a product category $\cat{D}_{I}\!\prodcat\!\cat{U}_{K}$ obtained using \labtrans{}{}\ and \ulabtrans{}{}, consider the discrete version of this to be $\discrete{\cat{D}}_{I}\!\prodcat\!\cat{U}_{K}$, and define a \emph{basic upgrade endo\-functor} $\basic{E}$ as a total function over the objects of this category.
\end{definition}

It remains to see how to combine basic endo\-functors from acting locally, on label components, to one single endo\-functor on the whole label category. We essentially make pairs of endo\-functors over the product of categories.

\begin{proposition}[endo\-functors as morphisms]\label{prop_endo_as_morphisms}
Consider two categories \cat{A} and \cat{B} with $\catendo{\cat{A}}$ and $\catendo{\cat{B}}$ as in Definition~\ref{def_functor}.
Define the product of two such categories $\catendo{\cat{A}}\prodcat\catendo{\cat{B}}$ to have one object $(\cat{A},\cat{B})$ and morphisms the pairs of morphisms from the two categories.

\begin{enumerate}
\item Any morphism $(E_{\cat{A}},E_{\cat{B}})$ in the product $\catendo{\cat{A}}\prodcat\catendo{\cat{B}}$ is an endo\-functor on $\cat{A}\prodcat\cat{B}$ which takes any object $(a,b)\in\objects{\cat{A}\prodcat\cat{B}}$ to an object $(E_{\cat{A}}(a),E_{\cat{B}}(b))$ and any morphism $(\alpha,\beta)$ to $(E_{\cat{A}}(\alpha),E_{\cat{B}}(\beta))$. 

\item If the categories \cat{A} and \cat{B} have the property of Proposition~\ref{prop_propertiesDiscretPair}(\ref{propertiesDiscretPair_2}), like discrete or pairs categories, and their products, then the pairs of endo\-functors are also completely defined by their application on the objects.
\end{enumerate}
\end{proposition}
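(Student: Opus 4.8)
The plan is to handle the two parts in turn: the first by a direct verification of the functor axioms for the candidate map on $\cat{A}\prodcat\cat{B}$, and the second as an immediate consequence of Proposition~\ref{prop_endo_prodPairs_prop} combined with Proposition~\ref{prop_propertiesDiscretPair}(\ref{propertiesDiscretPair_2}).

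For part 1, I would first recall the structure of the product category $\cat{A}\prodcat\cat{B}$: objects are pairs $(a,b)$ with $a\in\objects{\cat{A}}$ and $b\in\objects{\cat{B}}$, morphisms are pairs $(\alpha,\beta)$ with $\alpha\in\morphisms{\cat{A}}$, $\beta\in\morphisms{\cat{B}}$, source and target are taken componentwise, identities are $id_{(a,b)}=(id_a,id_b)$, and composition is componentwise, $(\alpha',\beta')\circ(\alpha,\beta)=(\alpha'\circ\alpha,\beta'\circ\beta)$. Given a morphism $(E_{\cat{A}},E_{\cat{B}})$ of $\catendo{\cat{A}}\prodcat\catendo{\cat{B}}$ — that is, a pair of endofunctors — I define the candidate endofunctor on $\cat{A}\prodcat\cat{B}$ exactly as in the statement: $(a,b)\mapsto(E_{\cat{A}}(a),E_{\cat{B}}(b))$ on objects and $(\alpha,\beta)\mapsto(E_{\cat{A}}(\alpha),E_{\cat{B}}(\beta))$ on morphisms. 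Compatibility with source and target is inherited from $E_{\cat{A}}$ and $E_{\cat{B}}$ each being a functor; preservation of identities, $F(id_o)=id_{F(o)}$, holds componentwise; and preservation of composition, $F(\alpha\beta)=F(\alpha)F(\beta)$ from Definition~\ref{def_functor}, holds componentwise because composition in the product is componentwise. This is the only real content of part 1, and it is a routine check.

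For part 2, I would assume $\cat{A}$ and $\cat{B}$ are discrete or pairs categories (or products of such). By Proposition~\ref{prop_propertiesDiscretPair}(\ref{propertiesDiscretPair_2}), the product $\cat{A}\prodcat\cat{B}$ again has the property that each of its morphisms is uniquely determined by its source and target objects. Hence Proposition~\ref{prop_endo_prodPairs_prop} applies verbatim to $\cat{A}\prodcat\cat{B}$: any endofunctor on it — in particular the endofunctor $(E_{\cat{A}},E_{\cat{B}})$ produced in part 1 — is completely determined by its action on objects, since the image of a morphism with source $(a,b)$ and target $(a',b')$ is forced to be the unique morphism from $(E_{\cat{A}}(a),E_{\cat{B}}(b))$ to $(E_{\cat{A}}(a'),E_{\cat{B}}(b'))$. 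Therefore specifying $E_{\cat{A}}$ and $E_{\cat{B}}$ on $\objects{\cat{A}}$ and $\objects{\cat{B}}$ — equivalently, giving the object map of the pair on $\objects{\cat{A}\prodcat\cat{B}}$ — already pins down the whole endofunctor.

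I do not expect a genuine obstacle; the statement is essentially bookkeeping about product categories. The one point deserving a little care is to match the composition law of $\catendo{\cat{A}}\prodcat\catendo{\cat{B}}$ (composition of endofunctors performed componentwise) against composition in $\catendo{\cat{A}\prodcat\cat{B}}$, so that the assignment $(E_{\cat{A}},E_{\cat{B}})\mapsto E_{\cat{A}}\prodcat E_{\cat{B}}$ is not only well defined on objects but also respects the categorical structure; this is again immediate from componentwise composition, but it is worth stating explicitly rather than leaving implicit.
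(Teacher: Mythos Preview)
Your proposal is correct and follows essentially the same approach as the paper, which merely remarks that the proof ``uses basic notions of category theory, and becomes even easier in the light of the proof of Proposition~\ref{prop_endo_prodPairs_prop}.'' You have simply spelled out the routine verifications that the paper leaves implicit, and your invocation of Proposition~\ref{prop_propertiesDiscretPair}(\ref{propertiesDiscretPair_2}) followed by Proposition~\ref{prop_endo_prodPairs_prop} for part~2 is exactly the reduction the paper gestures at.
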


\begin{proof}
The proof uses basic notions of category theory, and becomes even easier in the light of the proof of Proposition~\ref{prop_endo_prodPairs_prop}.
\end{proof}

Thus, the paired endo\-functors have the same properties as the component endo\-functors, and their behaviour is defined by their component endo\-functors.

The only requirement that we ask of the endo\-functors is that once an \emph{information-less object} is reached, then no more change of data objects can be performed. This is a termination condition where inaction from the functor is required. Intuitively, an upgrade should not change the data of the program if there is no upgrade information.

\begin{definition}
For any upgrade category $\cat{U}$ we identify at least one (or more) objects as being \emph{information-less object}, and denote such objects with a ``bottom'' symbol at subscript, e.g., $o_{\bot},u_{\bot}$.
\end{definition}

The categories that we encountered in our examples all have information-less objects, e.g.: 
\begin{itemize}
\item when the underlying objects are \textit{sets} then $o_{\bot}$ is the $\emptyset$; 
\item when the underlying objects are \textit{partial functions} then $o_{\bot}$ is the minimal partial function completely undefined; 
\item for a category with a single object, like the \textit{monoid category}, then this is considered to be the $o_{\bot}$; 
\item for a \textit{product of categories} then the pairing of all the corresponding $o_{\bot}$ is the information-less object. 
\end{itemize}
All examples above have the set of objects \textit{equipped with a partial order}, in which case the information-less objects are the minimal objects in the partial order.

\begin{definition}[no sudden jumps]\label{def_functorRestrict}
An endo\-functor $E$ on $\cat{D}\prodcat\cat{U}$ is said to \emph{have no sudden jumps} iff \ 
$\forall u_{\bot}\in\objects{\cat{U}}: E((d,u_{\bot}))=(d,u_{\bot})$.
Both \cat{D} and \cat{U} can be arbitrary product categories.
\end{definition}

All endo\-functors that we give as examples in this paper can be easily checked to have no sudden jumps, i.e., are inactive on information-less objects.

\begin{definition}[extending endo\-functors]\label{def_upgrade_endo}
For a product category $\cat{D}_{I}\!\prodcat\!\cat{U}_{K}$ obtained using \labtrans{}{}\ and \ulabtrans{}{}, define a basic upgrade endo\-functor $\basic{E}$ as in Definition~\ref{def_basicUpgrEndo} over some part of this category, i.e., over $\discrete{\cat{D}}_{I'}\!\prodcat\!\cat{U}_{K'}$, with $\emptyset\!\neq\!I'\!\subseteq\!I$ and $\emptyset\!\neq\!K'\!\subseteq\!K$. This basic endo\-functor must have no sudden jumps. \emph{Extend $\basic{E}$} to the whole product category by pairing it with the identity endo\-functor on the remaining component categories, as in Proposition~\ref{prop_endo_as_morphisms}.
%
\end{definition}

Note that extending with identity endofunctors can be done over arbitrary kinds of label categories, i.e., the restriction to discretized category is needed only for defining the basic endofunctors.
Note that any basic endo\-functor is defined over a \textit{variant} of $\barecat{D}_{I}\!\prodcat\!\barecat{U}_{K}$ thus respecting the requirements from Definition~\ref{def_updTransSys} of UTS. Moreover, any extension is also over a \textit{variant} of the larger product category, though not necessarily over a discrete variant as the basic endofunctors are.

\begin{proposition}[composing upgrade endo\-functors]\label{prop_composing_funct}
For two basic endo\-functors defined on disjoint sets of indexes, their extensions can be composed in any order, resulting in the same endo\-functor on the union of the indexing sets.
\end{proposition}

\begin{proof}
Consider a product category $\cat{D}_{I}\prodcat\cat{U}_{J}$ built with the label transformer over the index sets $I\cup J$. Without loss of generality we we explain the proof for the simpler category $\cat{D}\prodcat\cat{D'}\prodcat\cat{U}\prodcat\cat{U'}\prodcat\cat{K}$ (full proof can be found in the technical report \cite{techRep12Dec}). 
Consider two endo\-functors $E,E'$ built over $\cat{D}\prodcat\cat{U}$ respectively $\cat{D'}\prodcat\cat{U'}$; the disjointness is important. The category \cat{K}\ can be any upgrade or data categories.

Extend each endo\-functor from above to the whole category as in
Definition~\ref{def_upgrade_endo} by pairing it with the identity
endo\-functor on the remaining category; e.g., for $E$ denote
its extension as $\tilde{E}$ to be the product $E\prodcat
\mathit{ID}_{\cat{D'}\prodcat\cat{U'}\prodcat\cat{K}}$. The similar extension for $E'$ is
$\tilde{E}'=E'\prodcat \mathit{ID}_{\cat{D}\prodcat\cat{U}\prodcat\cat{K}}$. 
Since the identity endo\-functors can be seen as products of smaller identity endo\-functors, we can rewrite the above endo\-functors to: 
$\tilde{E}=E\prodcat\mathit{ID}_{\cat{D'}\prodcat\cat{U'}}\prodcat\mathit{ID}_{\cat{K}}$ and $\tilde{E}'=E'\prodcat \mathit{ID}_{\cat{D}\prodcat\cat{U}}\prodcat \mathit{ID}_{\cat{K}}$. 
We have been relaxed with the notation for the products, but care must be taken for the order of the arguments, so one would write $\tilde{E}'$ as $\mathit{ID}_{\cat{D}\prodcat\cat{U}}\prodcat E'\prodcat \mathit{ID}_{\cat{K}}$.

We need to show that 
\[
 \tilde{E}'\composition\tilde{E} = \tilde{E}\composition\tilde{E}' = E\prodcat E'\prodcat \mathit{ID}_{\cat{K}} .
\]
Pick now two objects from the big category: 
$(d_{1},u_{1},d'_{1},u'_{1},d_{1}^{k})\mbox{ and }(d_{2},u_{2},d'_{2},u'_{2},d_{2}^{k})$.
The morphism between the tuple objects is also a tuple of respective morphisms $(\alpha_{d},\alpha_{u},\alpha'_{d},\alpha'_{u},\beta)$. Apply now the endo\-functor $\tilde{E}$ to obtain tuples of objects $(E(d_{1},u_{1}),d'_{1},u'_{1},d_{1}^{k})$ and $(E(d_{2},u_{2}),d'_{2},u'_{2},d_{2}^{k})$, and morphism $(E(\alpha_{d},\alpha_{u}),\alpha'_{d},\alpha'_{u},\beta)$. 
To this apply the second endo\-functor to obtain objects $(E(d_{1},u_{1}),E'(d'_{1},u'_{1}),d_{1}^{k})$ and $(E(d_{2},u_{2}),E'(d'_{2},u'_{2}),d_{2}^{k})$, and morphism $(E(\alpha_{d},\alpha_{u}),E'(\alpha'_{d},\alpha'_{u}),\beta)$.

It is easy to see that for the other composition $\tilde{E}\composition\tilde{E}'$ we would obtain the same objects and morphism. Moreover, these are independent of the monoid categories that are subject only to the identity endo\-functor $\mathit{ID}_{\cat{K}}$.

From the above it is easy to see how one could first make the product of the two endo\-functors $E\prodcat E'$ and afterwards extend this to the whole category, as $E\prodcat E' \prodcat \mathit{ID}_{\cat{K}}$, and the result of the application of this product results in the same objects and morphisms as the compositions above.
\end{proof}

Proposition~\ref{prop_composing_funct} ensures modularity of 
 Dynamic SOS as follows. 
One defines a basic endo\-functor for some dynamic upgrade construct, and this is never changed upon addition of other dynamic upgrade constructs and their upgrade categories and related endo\-functors. Moreover, the method of \textit{extending} the basic endo\-functors with the identity functor on the rest of the indexes, from Def.~\ref{def_upgrade_endo}, ensures modularity when new data or upgrade components are added by the label transformers.

When designing a programming language the label transformers may be applied on an already used index, resulting in changing the respective category component, e.g.:
\begin{itemize}
\item we may change a read-only component into a read/write component.
\item we may decide to have more upgrade functors on one particular component, i.e., to define a new way of updating, maybe needed by a new programming constructs.
\item we may leave one functor unspecified, as the identity functor, and at a later point add a proper functor for the specific component.
\end{itemize}


The encapsulation construction from Section~\ref{sec_encapsulation} can be applied to endo\-functors as well. This is expected, because if we encapsulate the categories on which the endo\-functors act, then the endo\-functors would become undefined. While by encapsulating them the endo\-functors would be preserved. Once encapsulated, we may refer to the endo\-functors using the object identifiers, the same as we were referring to the localized data components.

Each endo\-functor is matched (using a transition rule) by a dynamic upgrade construct in the programming language, for which it captures the desired upgrade mechanism; this is exemplified in the next section.

Much of the work in \cite{StoyleHBSN07mutatis} is concerned with analyzing \proteusSrcLang\  program terms to automatically insert upgrade constructs at the appropriate points in the program where the upgrade would not cause type errors. The same analyses can be done also when the language is given a DSOS semantics. 

A similar, but rather coarse analysis of upgrade points is done for the concurrent object-oriented language \creol\ of \cite{JohnsenOS05dynamicClasses}, where acceptable upgrade points are taken to be those execution points of an object where it is ``idle'' (called \textit{quiescent} states in \cite{JohnsenKY09}, where the processor has been released and no pending process has been activated yet). A more fine-grained analysis in the style of \cite{StoyleHBSN07mutatis} could be carried out, but it would be necessarily more complex because of the concurrency and object-oriented aspects, and also because of the special asynchronous method calls and late bindings. 
Such an analyses for the \creol\ language is beyond the scope of this paper.

\subsection{Exemplifying DSOS for \proteusSrcLang}\label{subsec_DSOS_proteus}

For this section knowledge of \proteusSrcLang\ \cite{StoyleHBSN07mutatis} is not needed since our discussions will use only standard programming languages terminology. Nevertheless, we constantly refer to \proteusSrcLang\ and the work in \cite{StoyleHBSN07mutatis} for completeness and guidance for the familiar reader. 

The transition rules that we gave for \proteusSrcLang\ constructs \cite[Fig.2]{StoyleHBSN07mutatis} in Section~\ref{sec_MSOS_proteus} used a label category formed of three components: 
\cat{S} with objects mapping variable identifiers to values, 
\cat{F} with objects mapping function names to definitions of functions as lambda abstractions, 
\cat{R} with objects mapping record identifiers to definitions of records. 
In \cite[Sec.4.3]{StoyleHBSN07mutatis} the semantics of \proteusSrcLang\ keeps all these information in one single structure called \textit{heap}. The separation of this structure that we took does not impact the resulting semantic object, as one can check against \cite[Fig.11]{StoyleHBSN07mutatis}. 
Our choice was made with the intention to obtain a more clear separation of concerns, where we can see from the transition rules which programming construct works with what part of the program state, and in what way it interacts with the other parts. One can easily correlate our rules with the ones in \cite[Fig.12]{StoyleHBSN07mutatis}.

Four kinds of update information are present in \proteusSrcLang. In this exemplification we treat only the two not related to types, i.e., the update and the addition of new bindings to the heap. Updating or adding new types is discussed in  Section~\ref{sec_typing_proteus}. 
In \cite[Fig.11]{StoyleHBSN07mutatis} the update information comes in the form of a partial mapping from top-level identifiers to values (we omit the types for now). This update information follows the same structure as the heap. At any time point, in the heap we can see the identifiers separated into variables, function names, or record names; the values being either basic values for variables, lambda abstractions containing the function body, and record definitions. It is easy to see that we get the corresponding structures as the objects in our categories \cat{S}, \cat{F}, respectively \cat{R}. Therefore, the corresponding update categories are: $\cat{U}_{\cat{S}}$, $\cat{U}_{\cat{F}}$, and $\cat{U}_{\cat{R}}$, discrete categories containing the same objects as respectively \cat{S}, \cat{F}, and \cat{R}.


\proteusSrcLang\ uses a single update construct, which marks points in the program where updates can take place. 
We separate these update constructs into three kinds, each dealing with variables, functions, or records. Thus, our update signature $\SigmaUpdate$ contains: 
\[
s\ \ ::=\ \ \upgradeTermSetKind{\Delta}{v} \mid \upgradeTermSetKind{\Delta}{f} \mid \upgradeTermSetKind{\Delta}{r} \mid \dots 
\]
where $\Delta$ is a set of identifiers of respectively variables, functions, or records. 

Having defined the update categories, it remains to define the corresponding endo\-functors. Since the endo\-functors for our special categories can be given solely by their application on the set of objects, we define one endo\-functor for each update category as a function applied to pairs of data and update objects, e.g., from $\objects{\cat{S}}\times\objects{\cat{U}_{\cat{S}}}$. 
Define an update transition rule as:
%
\begin{prooftree}
\AxiomC{\phantom{$\ \transition{U} \ $}}
\UnaryInfC{$\upgradeTermSetKind{\Delta}{v} \transition{E_{\Delta}^{v}} \nilValue$}
\end{prooftree}
with $E_{\Delta}^{v}\in\morphisms{\catendo{\cat{S}\prodcat\cat{U}_{\cat{S}}}}$ an endo\-functor on the product category $\cat{S}\prodcat\cat{U}_{\cat{S}}$, defined below the same as in \cite[Fig.13]{StoyleHBSN07mutatis} but restricted to consider only those variable identifiers specified in $\Delta$ and remove them from the update objects. Thus, both the data object and the update object may be changed by an endo\-functor.
For one store object $\rho$ of $\objects{\cat{S}}$ and one update object $\rho_{u}$ of $\objects{\cat{U}_{\cat{S}}}$ the endo\-functor $E_{\Delta}^{v}$ changes $\rho_{u}$ by removing all the mappings for the variable identifiers appearing in $\Delta$; and changes $\rho$ by replacing all mappings from variable identifiers appearing in $\Delta$ with the corresponding ones from $\rho_{u}$:
$$E_{\Delta}^{v}(\rho,\rho_{u})=
\left\{\begin{array}{lr} 
(\rho[\mathbf{x}\mapsto\rho_{u}(\mathbf{x})\mid \mathbf{x}\in\Delta\cap\rho_{u}],\,\rho_{u}\!\!\setminus\!\Delta) & \mbox{\ \ if }\mathit{dom}(\rho_{u})\cap\Delta\neq\emptyset\\
(\rho,\rho_{u}) & \mbox{ otherwise} 
\end{array} \right.
$$

For the typed case we would need a more complex safety check which can be taken from \cite[Fig.24]{StoyleHBSN07mutatis} where it is called $\mathsf{updateOK}(-)$ and which also checks that the update information is well typed, not only that all needed identifiers are part of the update, as we did here. 
In fact one could do any kind of sanity checks of the update information against the data. However, at the level of the functor definition one does not have access to the program term, as is done in \cite[Fig.16]{StoyleHBSN07mutatis}. Any such information must either be put in the data part (e.g., as done when having threads), or be dealt with statically, as is done in \cite[Sec.5]{StoyleHBSN07mutatis} to obtain the definition of $\mathsf{updateOK}(-)$. 

The definition of the endo\-functors is outside the category theory framework of Dynamic SOS because these depend solely on the objects of the data and update categories and their underlying algebraic structure. In consequence, defining endo\-functors requires standard methods of defining functions. This is also the reason why it was immediate to take the definition from \cite[Fig.13]{StoyleHBSN07mutatis} into our setting. 
The contribution of DSOS is not at this level, but it consists of the general methodological framework that DSOS provides, which gives a unified approach to defining dynamic software updates in tight correlation with the normal programming constructs.

The above definition was simple and natural, but more complicated definitions can be devised, especially when the update objects do not have the same structure as the data objects, as is the case for \creol\ in Section~\ref{subsec_ex_DSOS_creol}. 

Our goal in this section was to exemplify the use of DSOS to give semantics to the \proteusSrcLang\ updates without departing from the semantics given in \cite{StoyleHBSN07mutatis}. 
We make this claim more precise in Proposition~\ref{prop_updateOK_asProteus} using notation from \cite{StoyleHBSN07mutatis} but with only a sketch of a proof, since a full proof would require too much background from \cite{StoyleHBSN07mutatis}.

\begin{proposition}\label{prop_updateOK_asProteus}
For any update information $\rho_{u}$, which in \proteusSrcLang\ \cite{StoyleHBSN07mutatis} is denoted $\mathit{upd}$, that updates only variable identifiers, we have that
\[
\Omega,H,\mathbf{update}^{\Delta}\ \transition{\mathit{upd}} \Omega,H',0 \mbox{\ \ iff\ \ } \upgradeTermSetKind{\bar{\Delta}}{v}\transition{E_{\bar{\Delta}}^{v}}\nilValue \mbox{ with }
\]
$\bar{\Delta}$ containing all those identifiers not in $\Delta$, $H=\rho_s\cup\rho_f\cup\rho_r,H'=\rho'_{s}\cup\rho_f\cup\rho_r$, where $E_{\bar{\Delta}}^{v}(\rho_s,\rho_{u})=(\rho'_{s},\rho'_{u})$.
\end{proposition}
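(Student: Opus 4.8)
The plan is to prove the statement by directly matching the single Proteus reduction rule for the update point \cite[Fig.12]{StoyleHBSN07mutatis} (together with its auxiliary heap-update operation \cite[Fig.13]{StoyleHBSN07mutatis}) against the single DSOS axiom for $\upgradeTermSetKind{\bar{\Delta}}{v}$ and the defining clauses of the endofunctor $E_{\bar{\Delta}}^{v}$ given just above, in the same style as the proof of Proposition~\ref{prop_conform_Proteus_sem}. First I would reuse the heap-splitting correspondence underlying that proposition: every Proteus heap $H$ decomposes uniquely as $H=\rho_s\cup\rho_f\cup\rho_r$ with $\rho_s\in\objects{\cat{S}}$, $\rho_f\in\objects{\cat{F}}$, $\rho_r\in\objects{\cat{R}}$, and since by hypothesis $\mathit{upd}=\rho_u$ has $\mathit{dom}(\rho_u)\subseteq\idVar$, the Proteus rule can only touch the $\rho_s$ component, so indeed $H'=\rho'_s\cup\rho_f\cup\rho_r$. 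The label $E_{\bar{\Delta}}^{v}$ is moreover a legitimate endofunctor on $\discrete{\cat{S}}\prodcat\cat{U}_{\cat{S}}$ without sudden jumps, by Definition~\ref{def_upgrade_endo} and the remarks following it, so the jump it labels does occur in the generated UTS. The whole statement then reduces to the assertion that the store $\rho'_s$ produced by the Proteus heap-update equals the first component of $E_{\bar{\Delta}}^{v}(\rho_s,\rho_u)$.

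For the ($\Rightarrow$) direction I would unfold the Proteus reduction: at an $\mathbf{update}^{\Delta}$ point the external information $\mathit{upd}$ is applied to exactly the top-level identifiers that are not protected by $\Delta$ --- that is, to those in $\bar{\Delta}$ --- overwriting $\rho_s$ on $\mathit{dom}(\rho_u)\cap\bar{\Delta}$ with the corresponding bindings of $\rho_u$ and leaving it unchanged otherwise (in the untyped fragment treated here there is no further sanity check). This is precisely the case split in the definition of $E_{\bar{\Delta}}^{v}$: when $\mathit{dom}(\rho_u)\cap\bar{\Delta}\neq\emptyset$ one gets $\rho'_s=\rho_s[\mathbf{x}\mapsto\rho_u(\mathbf{x})\mid\mathbf{x}\in\bar{\Delta}\cap\rho_u]$, and otherwise the store is untouched. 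Hence $E_{\bar{\Delta}}^{v}(\rho_s,\rho_u)=(\rho'_s,\rho'_u)$ with $\rho'_u=\rho_u\!\setminus\!\bar{\Delta}$, and since the DSOS rule for $\upgradeTermSetKind{\bar{\Delta}}{v}$ is an axiom, the jump $\upgradeTermSetKind{\bar{\Delta}}{v}\transition{E_{\bar{\Delta}}^{v}}\nilValue$ holds, the (discrete) upgrade component moving to $\rho'_u$. The ($\Leftarrow$) direction is the same chain of equalities read in reverse: the DSOS jump always fires and yields $E_{\bar{\Delta}}^{v}(\rho_s,\rho_u)=(\rho'_s,\rho'_u)$, and reassembling $H'=\rho'_s\cup\rho_f\cup\rho_r$ exhibits a heap that the Proteus rule produces from $H$ and $\mathit{upd}$; since we are untyped, every $\rho_u$ is an admissible $\mathit{upd}$, so no well-formedness premise needs discharging.

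I expect the delicate points to be bookkeeping rather than mathematics. The first is aligning the rôle of the annotation: in Proteus $\Delta$ names the bindings that must \emph{not} change at that point, whereas the DSOS construct names the bindings that \emph{are} changed, which is exactly why the complement $\bar{\Delta}$ appears in the statement. The second is a modelling mismatch --- Proteus consumes and discards the update information, while the endofunctor additionally records the residual $\rho'_u$ --- so the equivalence simply ignores the second component on the Proteus side. The third is the restriction to variable updates, which lets us work inside $\cat{S}\prodcat\cat{U}_{\cat{S}}$ alone; the analogous statements for $\upgradeTermSetKind{\bar{\Delta}}{f}$ and $\upgradeTermSetKind{\bar{\Delta}}{r}$ follow by the same argument over $\cat{F}$ and $\cat{R}$, and the typed case --- where the side condition $\mathsf{updateOK}(-)$ of \cite[Fig.24]{StoyleHBSN07mutatis} has to be carried along --- is deferred to Section~\ref{sec_typing_proteus}.
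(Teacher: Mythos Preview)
Your proposal is correct and follows essentially the same approach as the paper: match the single \proteusSrcLang\ update rule of \cite[Fig.12]{StoyleHBSN07mutatis} against the DSOS axiom for $\upgradeTermSetKind{\bar{\Delta}}{v}$ and the defining clauses of $E_{\bar{\Delta}}^{v}$, reusing the heap decomposition of Proposition~\ref{prop_conform_Proteus_sem}. The paper's own proof is much terser---it simply observes that the $\mathsf{updateOK}(-)$ side condition is absorbed into the case distinction of the endofunctor and that only variable bindings move---whereas you spell out both directions of the biconditional and the bookkeeping about $\bar{\Delta}$ versus $\Delta$ and the residual $\rho'_u$; these elaborations are all faithful to the paper's intent and its remark that the incremental choice can be collapsed to $\rho'_u=\emptyset$ to match \proteusSrcLang\ exactly.
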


This proposition can also be given for full updates of \proteusSrcLang.

\begin{proof}[sketch]
The transition $\transition{\mathit{upd}}$ is defined in \cite[Fig.12]{StoyleHBSN07mutatis} conditioned on the $\mathsf{updateOK}(-)$ safety check. In our case this condition would part of the definition of the endo\-functor $E_{\bar{\Delta}}^{v}$; above our untyped example reduced this check to only a membership check.
The statement of the proposition is only about variable bindings being changed in the heap $H$, which is reflected on the right side in the use of the $\upgradeTermSetKind{\bar{\Delta}}{v}$ construct. To achieve the general updates of \proteusSrcLang\ we can put several of our constructs in sequence to update other entities too. Our choice to have incremental updates can be changed to match the choice in \proteusSrcLang\ exactly; in which case the $\rho'_{u}=\emptyset$.
\end{proof}

\section{Typing aspects over DSOS for \proteusSrcLang}\label{sec_typing_proteus}

This section is meant to substantiate our claims that the typing analyses that make the main results of \cite{StoyleHBSN07mutatis} can also be carried over to the DSOS semantics of \proteusSrcLang. Therefore, this section contains details pertaining to typing from \proteusSrcLang\ which for space reasons could not be included. However, we try to make the general arguments that should be understandable without these details, and an interested reader can then use when closely comparing with \cite{StoyleHBSN07mutatis}.

We need to add \emph{type identifiers} $\mathbf{t}\in\idType$ and \emph{type definitions} $\typedecl{t}{\tau}$, with $\tau$ being basic types, record, functions, or reference types, as in \cite[Fig.2]{StoyleHBSN07mutatis}.
We work with a new label category \cat{TY}, which has type environments $\objects{\cat{TY}}=\idType\rightharpoonup\tau$ as objects, mapping type names to type definitions. This pairs category is attached to the existing labels using $\labtrans{Ty}{\cat{TY}}$. A transition rule would update the type environment consuming a type definition, similar to what we did with variable definitions in Section~\ref{example_assignment}.
Up to now we followed the modularity principle and none of the previous rules need to be changed.
However, when we add type information in the syntax for variable and function definitions we need to change the respective rules too; this is inevitable as the program terms change.
For the label categories there are two options: one more economical, chosen in \proteusSrcLang, where the object of the label categories would map identifiers to tuples of type and value; and a second more modular option, to add new label categories mapping the respective identifiers to their types alone. These categories are treated by the respective changed rules; e.g., the label $\labtrans{Ft}{\cat{FT}}$, which has objects $\objects{\cat{FT}}=\idFunc\rightharpoonup\tau$, is used in the changed rule from Subsection~\ref{example_functions}:
%
\begin{prooftree}
\AxiomC{\phantom{$\mathbf{f}\not\in\rho_{f}$}}
\UnaryInfC{$\funcdecl{f}{\mathbf{x}:\tau_{1}}{s:\tau_{2}} \transition{\{F=\rho_{f},Ft=\rho_{t}\restlabel F=\rho_{f}[\mathbf{f}\mapsto \lambda(\mathbf{X}).s],Ft=\rho_{\mathit{ft}}[\mathbf{f}\mapsto(\tau_{1}\rightarrow\tau_{2})]\}} \nilValue$}
\end{prooftree}
The compilation procedure from \cite[Sec.4.2]{StoyleHBSN07mutatis}, which inserts type coercions, is analogously done over DSOS as it makes no use of the semantics definition, but only of the programming language syntax and typing. In this way the program code can be annotated with $\mathbf{con_{t}}$ and $\mathbf{abs_{t}}$ at those points where the type name $\mathbf{t}$ is known to be further used concretely respectively abstractly. The update operation from \cite[Fig.13]{StoyleHBSN07mutatis} changes (besides the data) also the remaining program code, using type transformers, to make any abstract use of a type into the correct new type. We can avoid this update of the remaining program code by adding two new rules and one label component to deal with statements of the form $\mathbf{abs_{t}} e$. The label component $\labtrans{\mathit{Ab}}{\cat{AB}}$ has objects $\objects{\cat{AB}}=\idType\rightharpoonup \mathbf{c}$, that map a type name to a type transformer function. The upgrade functor in DSOS just changes this label component, not touching the continuing program code, and the runtime makes sure to use the correct type by applying the type transformer as:
%
\begin{prooftree}
\AxiomC{$\rho_{\mathit{ab}}(\mathbf{t})=\mathbf{c}$}
\UnaryInfC{$\mathbf{abs_{t}} e \transition{\mathit{Ab}=\rho_{\mathit{ab}}\restlabel } \mathbf{c}(e)$}
\DisplayProof\hspace{3ex}
\AxiomC{$\mathbf{t}\not\in\rho_{\mathit{ab}}$}
\UnaryInfC{$\mathbf{abs_{t}} e \transition{\mathit{Ab}=\rho_{\mathit{ab}}\restlabel } e$}
\end{prooftree}

When adding types, the safety check is performed by $\mathsf{updateOK}(-)$ and makes sure that the update information is well typed so that the continuing program will be type safe under the upgraded data. Essentially $\mathsf{updateOK}(-)$ checks that the new type definitions are safe and that the associated type transformers are well typed in the updated type information. It also checks that any new values or function definitions are well typed w.r.t.\ the updated information.

To avoid cluttering more the notation, consider upgrading only type definitions and function declarations, i.e., involve only the pairs categories $\cat{F}$, $\cat{FT}$, $\cat{TY}$, and the discrete category $\cat{AB}$. 
We would define an endo\-functor $E_{\Delta}^{t}$ on $\cat{TY}\prodcat\cat{AB}\prodcat\cat{UTY}$ for updating type definitions, or $E_{\Delta}^{f}$ on $\cat{F}\prodcat\cat{FT}\prodcat\cat{UF}\prodcat\cat{UFT}$ for updating function declarations. The upgrade label categories \cat{UF} and \cat{UFT} contain the same objects as the respective categories, whereas \cat{UTY} maps type identifiers to pairs of a type and a type transformer, as in \proteus.

Consider only $E_{\Delta}^{t}(\rho_{\mathit{ty}},\rho_{\mathit{ab}},\rho_{\mathit{uty}})=$
\[
\phantom{\hspace{1ex}}
\left\{\begin{array}{l@{\hspace{0ex}}r} 
\left(\begin{array}{l}
\rho_{ty}[\identifier{t}\mapsto\sigma \mid \forall \identifier{t}\in\Delta\cap\rho_{uty} \wedge \rho_{\mathit{uty}}(\mathbf{t})=(\sigma,\mathbf{c}) ], \\
\rho_{ab}[\identifier{t}\mapsto \mathbf{c} \mid\forall \identifier{t}\in\Delta\cap\rho_{uty} \wedge \rho_{\mathit{uty}}(\mathbf{t})=(\sigma,\mathbf{c}) ], \\
\rho_{uty}\setminus\Delta, \\
\end{array} \right) & \begin{array}{l} \mbox{ if } \mathsf{updateOK}(-) \\ \end{array} \\
%
&\\
(\rho_{\mathit{ty}},\rho_{\mathit{ab}},\rho_{\mathit{uty}}) & \mbox{otherwise.} 
\end{array} \right.
\]

In the first line we now use the check 
$\mathsf{updateOK}(-)$ as:
\[
\left(\begin{array}{l}
\vdash\rho_{\mathit{ty}}[\rho_{\mathit{uty}}]\ \wedge\ \ \mathit{dom}(\rho_{\mathit{uty}})\in\Delta\ \ \wedge\\
(\forall \mathbf{t}\in\mathit{dom}(\rho_{\mathit{uty}}):\rho_{\mathit{uty}}(\mathbf{t})=(\sigma,\mathbf{c})\Rightarrow\rho_{\mathit{ty}}[\rho_{\mathit{uty}}]\dots\vdash\mathbf{c}:\rho_{\mathit{ty}}(\mathbf{t})\rightarrow\sigma)\ \wedge\\
(\forall \mathbf{f}\in\mathit{dom}(\rho_{\mathit{uf}}): \rho_{\mathit{ty}}[\rho_{\mathit{uty}}]\dots \vdash \rho_{\mathit{uf}}(\mathbf{f}):\rho_{\mathit{uft}}(\mathbf{f}))\\
\end{array} \right)
\]

We have been superficial in the above definition and omitted some details like capabilities and other typing information. To be complete one would use the exact type-and-effect system of \cite[Sec.5]{StoyleHBSN07mutatis}, i.e., from Fig.18-22, and extract the above definition of $\mathsf{updateOK}(-)$ from Fig.23-24. When looking at the definition in \cite[Fig.24]{StoyleHBSN07mutatis} one can correlate the first line above with lines 3-4 (where the $bindOK$ is omitted), the second line with a simplified view of Fig.24(b), and the third line with the rest of Fig.24 that checks the new values.\footnote{Note that the third line of $\mathsf{updateOK}(-)$ would be needed for $E_{\Delta}^{f}$ but not for $E_{\Delta}^{t}$.}
In particular, the $types(H)$ that Fig.24 extracts from the heap, in our case come from the labels like \cat{FT}, which we omitted through ``$\dots$''. 
Useful could be to automate this proof in a proof assistant, on the lines of \cite{pierce12SOS_coq}, which would contain all the meticulous details that have already been done in \cite{StoyleHBSN07mutatis}. 


Considering the same typing system of \cite[Sec.5]{StoyleHBSN07mutatis}, proving type soundness w.r.t.\ the DSOS semantics is not more than redoing the lengthy details from the appendix of \cite{StoyleHBSN07mutatis}.
The statement in Proposition~\ref{prop_type_sound} reflects the DSOS style, but can easily be matched by the respective statement in \cite[Th.A.22]{StoyleHBSN07mutatis}. This is a specific result for the language of \cite{StoyleHBSN07mutatis}, meant here for exemplification, and not an essential part of the DSOS framework. Therefore, we only outline how the proof would go, which is following the standard method for such type soundness proofs \cite{FelleisenWright94TypeSound,AghaMST97actorFoundation,abadi2012theory}.

\begin{proposition}[type soundness]\label{prop_type_sound}
For a program term $P$ and an object $o$ from the label category used in the semantics we have that if for some type environment $\Omega$,

\[
\Omega \vdash P:\sigma,\Omega'\mbox{\ \ and\ \ } \Omega\vdash o
\]
then either $P$ is a value, or there exists a transition $P\transition{\alpha}P'$, with $o=\source{\alpha}$, $o'=\target{\alpha}$, for which $\Omega'\vdash o'$ and $\Omega'\vdash P':\sigma,\Omega''$, where $\Omega',\Omega''$ are the effects of the typing judgments containing generated typing information.
Particularly interesting is the above statement with an empty type environment and the object containing only empty maps.
\end{proposition}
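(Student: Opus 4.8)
The plan is to prove Proposition~\ref{prop_type_sound} as a combined \emph{progress} and \emph{subject reduction} statement, by induction on the derivation of the typing judgment $\Omega\vdash P:\sigma,\Omega'$, with the shape of $P$ selecting which semantic rule(s) can fire. The base case is $P$ a value, where the first disjunct holds and there is nothing to show. For the inductive cases the strategy is to reuse, essentially verbatim, the corresponding cases of the \proteusSrcLang\ soundness proof \cite[Th.A.22]{StoyleHBSN07mutatis}, transporting statements back and forth across the conformance results proved earlier: Proposition~\ref{prop_conform_Proteus_sem} identifies a DSOS step $e\transition{\alpha}e'$ with a \proteusSrcLang\ reduction $\Omega,H,e\Rightarrow\Omega,H',e'$ via $H=\rho_s\cup\rho_f\cup\rho_r$ and $\Omega=\rho_{\mathit{ty}}$ (and, once types are present, $H$ and $\Omega$ additionally carry the $\cat{FT}$, $\cat{TY}$, $\cat{AB}$ components), while Proposition~\ref{prop_updateOK_asProteus} does the same for the upgrade transitions. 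Under this dictionary ``$\Omega\vdash o$'' unfolds to the usual well-formedness of the heap against the type environment, and ``$\Omega'\vdash o'$'' is exactly the heap-preservation clause of the \proteusSrcLang\ lemma; hence for every ordinary control-flow or declaration construct of Sections~\ref{example_noLabels}--\ref{example_if} the case is closed by the already-existing argument.

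The two genuinely new families of cases are the upgrade constructs $\upgradeTermSetKind{\Delta}{v}$, $\upgradeTermSetKind{\Delta}{f}$, $\upgradeTermSetKind{\Delta}{r}$, $\upgradeTermSetKind{\Delta}{t}$ and the coercion statements $\mathbf{abs_t}\,e$ (and $\mathbf{con_t}\,e$). For the upgrade constructs \emph{progress} is immediate: the corresponding rule is an axiom, so the transition labelled by the endo-functor $E_{\Delta}^{\bullet}$ always fires and $P'=\nilValue$ (respectively $\objectTerm{o}{\nilValue}$ in the \creol\ variant). For \emph{preservation} one reads the target object of such a jump transition as $o'=E_{\Delta}^{\bullet}(o)$ --- legitimate by Proposition~\ref{prop_endo_prodPairs_prop}, since the categories involved are discrete or pairs categories --- and observes that the $\mathsf{updateOK}(-)$ side-condition, which in the DSOS formulation is folded into the first branch of the definition of $E_{\Delta}^{t}$ (and analogously for the other kinds), is precisely the hypothesis under which \cite[Fig.24]{StoyleHBSN07mutatis} certifies that the new heap is well typed under the updated environment; thus $\Omega'\vdash o'$. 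The effect $\Omega'$ assigned by the typing rule for the upgrade construct is read off as the $\cat{TY}$-component of $E_{\Delta}^{t}(o)$, and $\Omega'\vdash\nilValue:\sigma,\Omega'$ is trivial. For $\mathbf{abs_t}\,e$ the two rules of Section~\ref{sec_typing_proteus} apply: typing of $\mathbf{c}(e)$ follows because the type transformer $\mathbf{c}$ sitting in $\cat{AB}$ was only ever installed together with a proof of its well-typedness (again a clause of $\mathsf{updateOK}(-)$), so the coercion respects the declared arrow type $\rho_{\mathit{ty}}(\mathbf{t})\rightarrow\sigma$.

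I expect the main obstacle to be bookkeeping rather than anything conceptual: making fully precise the translation between the single monolithic \textit{heap} of \proteusSrcLang\ and the tuple of label categories $\cat{S},\cat{F},\cat{R},\cat{FT},\cat{TY},\cat{AB},\dots$ used here, and --- inside that translation --- checking that the \emph{effects} $\Omega',\Omega''$ produced by the type-and-effect system of \cite[Sec.5]{StoyleHBSN07mutatis} coincide with the type components of the source and target objects of the generated DSOS transition. Concretely one must verify that splitting the heap introduces no spurious ill-typedness (a binding consistent in the big heap stays consistent in its component category) and that $\mathsf{updateOK}(-)$, phrased over the split structures, is equivalent to the original predicate; this is the content already sketched around Proposition~\ref{prop_updateOK_asProteus}, and it is where essentially all the ``lengthy details'' of the \proteusSrcLang\ appendix would be re-traversed. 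Everything categorical --- that each $E_{\Delta}^{\bullet}$ is a well-defined endo-functor, that it has no sudden jumps, and that a computation beginning with a step as in Definition~\ref{def_updTransSys} is otherwise unaffected --- is routine and has already been discharged by the earlier propositions.
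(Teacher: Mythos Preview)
Your proposal is correct and follows essentially the same approach as the paper: the paper's proof is only a brief sketch that identifies $\Omega\vdash o$ with heap well-typedness in \proteusSrcLang, notes that $P$ together with $o$ constitutes a \proteusSrcLang\ configuration, and then defers to the \proteusSrcLang\ soundness argument for both ordinary reductions and upgrades. Your plan is a more detailed elaboration of exactly this strategy, making explicit the use of Propositions~\ref{prop_conform_Proteus_sem} and~\ref{prop_updateOK_asProteus} as the transport mechanism and spelling out the upgrade and coercion cases that the paper leaves implicit.
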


\begin{proof}[sketch]
The check $\Omega\vdash o$ corresponds to the check that the heap is well typed in \proteusSrcLang. The program $P$ and the object $o$ together make up the configuration that is used in \proteusSrcLang. When the code is not a value, it can reduce to a new program of the same type and a changed heap which is still well typed. For upgrades this ensures well typedness of the changed heap.
\end{proof}

\section{Encapsulating MSOS for object-oriented languages}\label{sec_ex_DSOS_creol}\label{sec_encapsulation}

We will show how to give semantics in a modular style to concurrent object-oriented constructs as used by the language \creol. For this we first need to define a new \textit{encapsulating mechanism} for concurrent object-orientation. This construction extends MSOS in a conservative manner and upholding the modularity principles as explained in the end. The construct is not specific to object-orientation, but can be applied to other programming settings where execution is encapsulated in some way, e.g., where one talks about isolating execution environments like in ambient calculus \cite{cardelli2000mobile} or distributed settings \cite{hennessy2002resource,hennessy2007distributed}.


We focus here on the concurrency notion from the Actor model \cite{AghaMST97actorFoundation} which has proved well suited for the object-oriented languages. In this setting concurrent objects communicate through asynchronous method calls and have their own execution unit (like a virtual CPU), thus having standard programming constructs be run \textit{inside} the object. 
This notion of encapsulation of the execution must be captured in the category theory of the labels. We provide for this an \textit{encapsulating construction}. The term ``encapsulate'' has a specific meaning in object-oriented languages. Our categorical construction has a similar intuition, therefore we prefer the same terminology.

Not only the code is encapsulated in an object, but also the auxiliary data that is used to give semantics to the code. These data components are now private to the specific object. We want to keep the modularity in defining semantics for object-oriented constructs. We want that definitions of new semantic rules would not change the definitions of the old rules. On the contrary, we may use the old transition relation to define new transition relations. Essentially, we will encapsulate old transitions into transitions that are localized to one object. In the concurrent setting, we even see how more objects may perform transitions localized to each of them, thus making a global transition, changing many of the local data.

\vspace{2ex}
\noindent\begin{minipage}[l]{0.68\textwidth}
\begin{definition}[natural transformations]\label{def_nattrans}
Consider two arbitrary categories \cat{A} and \cat{B} and two functors $F,G$ from \cat{A} to \cat{B}. A \emph{natural transformation} $\naturalTransf:F\rightarrow G$, from the functor $F$ to $G$, is defined as a function that associates to each object $o$ of $\objects{\cat{A}}$ a morphism $\beta$ of $\morphisms{\cat{B}}$ with $\source{\beta}=F(o)$ and $\target{\beta}=G(o)$ s.t.\ for any morphism $\alpha$ of $\morphisms{\cat{A}}$, with $\source{\alpha}=o$, the diagram on the right commutes.
\end{definition}
\end{minipage}
\hspace{2ex}\begin{minipage}[r]{0.3\textwidth}
\begin{diagram}
$F(o)$ & $\rTo^{\naturalTransf(o)}$ & $G(o)$ \\
$\dTo_{F(\alpha)}$ &  & $\dTo_{G(\alpha)}$ \\
$F(\ensuremath{o'})$ & $\rTo^{\naturalTransf(o')}$ & $G(\ensuremath{o'})$ \\
\end{diagram}
\vfill
\end{minipage}

\begin{definition}[encapsulating construction]\label{def_encapsulate}
Let \cat{O} be a discrete category, and \cat{A} a label category.
The \emph{encapsulating construction} $\encapsulate{\cat{O}}{\cat{A}}$ returns a category \cat{E} with all the functors $F:\cat{O}\rightarrow\cat{A}$ as objects, and natural transformations between these functors as morphisms.
\end{definition}

The discrete category \cat{O} captures programming objects identifiers (i.e., each object of the category is a unique identifier for a programming object). 
Other categories may be used if one needs to capture relations between the programming objects, like ownership.
The intuition is that each functor attaches to each programming object identifier one data object from \cat{A}, thus capturing one snapshot of the working data of all programming objects in the system. We have access to these pieces of working data by taking the appropriate identifier, i.e., $F(o)$ is the data encapsulated in the programming object identified by $o$. 

A morphism in $\encapsulate{\cat{O}}{\cat{A}}$, i.e., a natural transformation, between two such snapshots $F,F'$ can be thought as capturing one way of transforming one snapshot into the other.
These intuitions hold also when monoid categories are part of the labels. In this case there are multiple natural transformations between two functors.

\begin{notation}
We can either write $\eta$ as a pair of functors $(F,F')$ or we can write it as a set of morphisms from \cat{A} indexed by the objects $o\in \cat{O}$, i.e., $\eta=\{(F(o),F'(o))\mid o\in\objects{\cat{O}}\}$.
As such we may refer to the data morphisms from the encapsulated label category \cat{A}, since these are indexed by the programming object identifiers, i.e., $\eta(o)$ and call these ``local'' morphisms associated to $o$. 
In consequence, we are free to use the get operation to refer to a particular component of the encapsulated label category morphisms $\eta(o)$, i.e., we may write $\eta(o).i$ or any other preferred notation like $o.i$ or $o\mapsto i$ or $\langle o \mid i\rangle$ or $o:i$.
\end{notation}

One property of the encapsulation construction is that the resulting category is similar to the encapsulated category in the following sense.

\begin{proposition}\label{prop_enc1}
When the encapsulating construction is applied to a label category \cat{A} where the morphisms are uniquely defined by the objects (i.e., with properties as in Proposition~\ref{prop_propertiesDiscretPair}, e.g., a discrete or pairs category), then the morphisms of\, $\cat{E}=\encapsulate{\cat{O}}{\cat{A}}$ are uniquely defined by the objects (i.e., functors).
\end{proposition}

\begin{proof}

\noindent\begin{minipage}[l]{0.65\textwidth}
\setlength\parindent{14pt}
The objects of \cat{E} are functors $F:\cat{O}\rightarrow\cat{A}$. Take two such functors $F,F'$; a morphism between them is a natural transformation $\eta$ which for each object of \cat{O} associates one morphism of \cat{A}, i.e., $\eta(o)\in\morphisms{\cat{A}}$, with the following property: for some $o\in\objects{\cat{O}}$ and some morphism $\alpha\in\morphisms{\cat{O}}$ with source $o$ and target $o'$, the diagram on the right commutes.

In our case this diagram becomes simpler because in \cat{O} the only
morphisms are the identities, which means that $\alpha$ is in fact
$id_{o}$ and thus the $o'$ in the diagram above is just $o$. Moreover,
the functors take identities to identities, so $F(\alpha)$ becomes
$id_{F(o)}$. Then the diagram becomes the one to the right, which
clearly commutes for any $\eta$.

The natural transformation $\eta$ assigns the morphism $\eta(o)$
between $F(o)$ and $F'(o)$ in \cat{A}, which is unique by the
assumption that in \cat{A} morphisms are uniquely determined by the
objects on which they act, i.e., $\eta(o)=(F(o),F'(o))$. The same for
any $o'\in\objects{\cat{O}}$ the $\eta(o')$ is unique.
In consequence, the $\eta$ is uniquely defined by the two functors on which it is applied.
%
\end{minipage} 
\begin{minipage}[r]{0.35\textwidth}
\vspace{-5ex}\begin{diagram}
$F(o)$ & $\rTo^{\eta(o)}$ & $F\ensuremath{'}(o)$ \\
$\dTo_{F(\alpha)}$ &  & $\dTo_{F'(\alpha)}$ \\
$F(\ensuremath{o'})$ & $\rTo^{\eta(o')}$ & $F\ensuremath{'}(\ensuremath{o'})$ \\
\end{diagram}
\begin{diagram}
$F(o)$ & $\rTo^{\eta(o)}$ & $F\ensuremath{'}(o)$ \\
$\dTo_{id_{F(o)}}$ &  & $\dTo_{id_{F'(o)}}$ \\
$F(o)$ & $\rTo^{\eta(o)}$ & $F\ensuremath{'}(o)$ \\
\end{diagram}
\end{minipage} 

 \end{proof}

The category built by the encapsulating construction can be used with the label transformer to attach more global data structures.
Therefore, the encapsulating construction is modular, in the spirit of MSOS, in the sense that new global programming constructs and rules may be added without changing the rules for encapsulation. 
The reference mechanism provided by the label transformer is used as normal. We see this in Subsection~\ref{example_asynchronous_method_calls} on asynchronous method calls where additional global structures are needed for keeping track of the messages being passed around.

Moreover, we may encapsulate this category again, wrt.\ a new discrete category, giving a different set of identifiers. This has application in languages with object groups, like ABS \cite{johnsen16scp}, where objects execute inside a group.

The encapsulating construction preserves modularity also in the sense that new programming constructs may be added to run localized (inside objects), and thus the encapsulated category may need to be extended to include new auxiliary data components. The encapsulation is not affected, in the sense that the rules for encapsulation, or rules that were defined referring to some encapsulated data, need no change. 
The reference mechanism (with the get operation provided by the label transformer) used in defining the localized rules is independent of the new local categories added. 
This aspect becomes apparent when treating \textit{threads} in Subsection~\ref{example_threads}.
Henceforth we denote the encapsulated (or local or internal) category by \cat{I} when its components are irrelevant.

The way of applying the encapsulating construction will use transitions labelled both with morphisms from \cat{I} as well as from $\encapsulate{\cat{O}}{\cat{I}}$, which will not fit the MSOS type of transition systems. As such we define a slight extension in Definition~\ref{def_EncALTS}. 

\begin{definition}[Encapsulated ALTS]\label{def_EncALTS}
For a set of categories $\{\cat{A}_{i}\}$ define an \emph{encapsulating arrow-labelled transition system}
$(\Gamma,\bigcup_{i}\morphisms{\cat{A}_{i}},\transition{})$ formed
by 
a set of
\emph{states} $t\!\in\!\Gamma$,
including an \emph{initial state} $t_0$, and transitions  $\transition{\alpha}$ labelled by morphisms $\alpha$ from one of the categories $\cat{A}_{i}$.
A \emph{com\-putation} in an encapsulated ALTS is a sequence $t_{0}\!\transition{\alpha_{0}}\!t_{1}\!\transition{\alpha_{1}}\!t_{2}\dots$ s.t.\ for any $t_{i}\!\transition{\alpha_{i}}t_{i+1}\!\transition{\alpha_{i+1}}t_{i+2}$ the two morphisms are both coming from the same category $\cat{A}_{i}$ and are composable in $\cat{A}_{i}$ as $\alpha_{i+1}\circ\alpha_{i}\in\!\morphisms{\cat{A}_{i}}$.
\end{definition}


\subsection{Modular SOS for concurrent object-orientation}\label{subsec_MSOS_creol}

The encapsulating construction is used to give semantics to concurrent object-oriented programming languages where code is executed locally, in each object, and the objects are running in parallel, maybe communicating with each other. The modularity is obtained by defining the localized transitions in terms of the transitions defined for the individual executing programming constructs, as given by the rule in Subsection~\ref{example_objects}.

\begin{notation}
We reuse and extend the notation from Section~\ref{sec_MSOS_proteus} to specify (partly) the morphisms on arrows of encapsulated ALTS in the rules below. In particular, when specifying encapsulating morphisms (i.e., natural transformations from $\encapsulate{\cat{O}}{\cat{I}}$) we use the notation $\mathbf{o}:X$ to partly specify the natural transformation, saying that the specific programming object $\mathbf{o}$ has in the encapsulated category the morphism $X$ (whichever that is). Similarly, the use of \dots\ around this (i.e., at the level of the category $\encapsulate{\cat{O}}{\cat{I}}$) means that the rest of the morphisms from the natural transformation would be identify morphisms (i.e., for the other programming object identifiers). We let $X$ stand for an arbitrary morphism also in the $\encapsulate{\cat{O}}{\cat{I}}$ when this is clear from the context.
\end{notation}

\subsubsection{Objects}\label{example_objects}
We add \textit{object identifiers} as constants denoted $\mathbf{o}\in\idObj$. 
We add one programming construct of a new sort called \emph{Objects}, denoted $O$, which localizes a term of sort statement wrt.\ an object identifier. 
%
\[
O\ \ ::=\ \ \objectTerm{o}{s}\ 
\]
This signature $\SigmaExec_{\ref{example_objects}}$ should include some signature defining statements; any of the constructs before can run inside the object construction, but the exact set of constructs is not relevant for the transition rules below. 

The semantics of \textit{object programs} is given using transitions labelled from a category constructed using the encapsulating construction applied to some appropriate \cat{I}: $\cat{E}=\encapsulate{\cat{O}}{\cat{I}}$, where $\objects{\cat{O}}=\idObj$. 
Since any of the constructs before can be run inside the object construction, therefore we encapsulate the category that we built before. 
Thus, the label category that we use in the rules for the object construction below would be
\[                                                                                                                                                                                                                                                                                         \encapsulate{\cat{O}}{\labtrans{F}{\cat{F}}(\labtrans{S}{\cat{S}}(\labtrans{R}{\cat{R}}(\trivcat)))}.
\]

We give one transition rule that \textit{encapsulates} any transition at the level of the statements inside the objects.
\begin{prooftree}
\AxiomC{$s\transition{X}s'$}
\RightLabel{\textsc{(enc)}}
\UnaryInfC{$\objectTerm{o}{s} \transition{[\mathbf{o}:X\dots]} \objectTerm{o}{s'}$}
\end{prooftree}
The label $X$ stands, as before, for any morphism in the local category \cat{I}. The label of the conclusion is taken as a morphism in the encapsulation category \cat{E}. The notation $[\mathbf{o}:X\dots]$ specifies only part of the natural transformation, whereas the rest may be any identity morphism. This specifies that the data for the object $\mathbf{o}$ is known before and after the local execution, whereas the local data of any other objects are irrelevant and may be anything, but is not changed in any way. Therefore, any functors $F,F'$ that respect the fact that they assign to $\mathbf{o}$ the source and target objects of $X$, and may assign anything to all other objects, are good. Moreover, the monoid labels that may appear in $X$ are part of the specific natural transformation that we choose between the two functors $F,F'$; i.e., it is exactly the natural transformation assigning to $\mathbf{o}$ the morphism $X\in\morphisms{\cat{I}}$.

\subsubsection{Systems of objects}\label{example_systems_of_objects}
Objects may run in parallel, thus forming systems of distributed objects. For this we add a parallel construct $\parallel$ of sort \emph{Objects}, with all object identifiers different:
\[
O\ \ ::=\ \ \mathit{obj}_{1}\parallel\mathit{obj}_{2}\ (\mathit{obj}_{1},\mathit{obj}_{2}\in O) \mid \dots
\]
We choose an \textit{interleaving} semantics for our parallel operator, hence the rules:
\begin{prooftree}
\hspace{-5ex}\AxiomC{$\mathit{obj}_{1}\transition{X}\mathit{obj}'_{1}$}
\RightLabel{\textsc{(int-1)}}
\UnaryInfC{$\mathit{obj}_{1}\parallel\mathit{obj}_{2} \transition{X} \mathit{obj}'_{1}\parallel\mathit{obj}_{2}$}
\DisplayProof\hspace{1ex}
\AxiomC{$\mathit{obj}_{2}\transition{X}\mathit{obj}'_{2}$}
\RightLabel{\textsc{(int-2)}}
\UnaryInfC{$\mathit{obj}_{1}\parallel\mathit{obj}_{2} \transition{X} \mathit{obj}_{1}\parallel\mathit{obj}'_{2}$}
\end{prooftree}
Note that the $X$ in this rule stands for any morphism in the encapsulating category, whereas in the previous rule it was standing for morphisms in the local category.


We may easily specify non-interleaving concurrency by specifying more precisely the label components:
\begin{prooftree}
\AxiomC{$\objectTerm{o_{1}}{s_{1}}\transition{\mathbf{o_{1}}:X}\objectTerm{o_{1}}{s'_{1}}$}
\AxiomC{$\mathit{obj}_{2}\transition{\naturalTransf}\mathit{obj}'_{2}$}
\RightLabel{\textsc{(non-int)}}
\BinaryInfC{$\objectTerm{o_{1}}{s_{1}}\parallel\mathit{obj}_{2} \transition{\naturalTransf[\identifier{o_{1}}:X]} \objectTerm{o_{1}}{s'_{1}}\parallel\mathit{obj}'_{2}$}
\end{prooftree}
The label of the conclusion specifies the morphism which is the natural transformation \naturalTransf\ changed so that it incorporates the specified local morphism of $\identifier{o_{1}}$. In this way any number of objects may execute local code and the local changes to their data is visible in the global label.

\subsubsection{Methods inside objects}\label{example_methods_Appendix}

Methods are like functions only that they have a \textbf{return} statement which is treated specially.\footnote{Other programming options are possible like having functions evaluate to a value, and thus not use the return statement.} We thus add method definition and invocation as $\SigmaExec_{\ref{example_methods_Appendix}}$:
\[
d\ \ ::=\ \ \MethodDef{m}{s} \mid \dots
\]
\[
s\ \ ::=\ \ \return{e} \mid y:=\ \identifier{m}(e) \mid \dots
\]
For simplicity we limit the discussion to methods with one input and one output.
The transition rules for methods use another pairs label category \cat{MD} for storing method definitions (the same as was done for function definitions) which is added by the label transformer, identified by the index $MD$, to the local labels category \cat{I} that is encapsulated.
In order to define the semantics of the (local) call statement $y:=\identifier{m}(e)$ in isolation, we introduce an additional construct $\methodend{y}$ to control the passing of the return value to the actual output variable of the call, i.e., $y$.
An alternative would be to use the let construct to bind the return value
in the statements following the call, however, that would require 
identification of these statements in the rules.
\begin{prooftree}
\AxiomC{\phantom{$\identifier{m}\not\in\rho_{m}$}}
\UnaryInfC{$\MethodDef{m}{s} \transition{\{MD=\rho_{m}\restlabel MD=\rho_{m}[\identifier{m}\mapsto \lambda(\mathbf{x}).(s)]\}} \nilValue$}
\end{prooftree}

\begin{prooftree}
\hspace{-3ex}\AxiomC{$e\transition{X}e'$}
\UnaryInfC{$\return{e} \transition{X} \return{e'}$}
\DisplayProof\hspace{3ex}
\AxiomC{$e\transition{X}e'$}
\UnaryInfC{$y:=\identifier{m}(e) \transition{X} y:=\identifier{m}(e')$}
\DisplayProof\hspace{3ex}\vspace{3mm}
\AxiomC{$\rho_{m}(\identifier{m})=\lambda(\mathbf{x}).(s)$}
\UnaryInfC{$y:= \identifier{m}(v) \transition{\{MD=\rho_{m}\restlabel\}}
   (s)[v/\mathbf{x}] \methodend{y} $}
\end{prooftree}
%
%

\begin{prooftree}
\AxiomC{\phantom{
}}
\UnaryInfC{$\return{v};s \methodend{y}
  \transition{\{
\restlabel\}}
  y:=v$}
\end{prooftree}
In the last rule, $s$ is any  statement list; thus it does not contain the special 
$\methodend{\ldots}$ construct, which is not a regular statement.
This  ensures that (local) calls are handled in a stack-based manner.
We do not complicate the presentation more because our aim is only to exemplify how the \creol\ language can be given a MSOS style of semantics, using the encapsulation construction.

\subsubsection{Threads}\label{example_threads}
We take the model of threads studied in 
\cite{AbadiP09popl,abadiplotkin10LMCS}
and consider the following programming constructs of sort statement in a signature $\SigmaExec_{\ref{example_threads}}$ which normally would include also other constructs for statements from before:
\[
s\ \ ::=\ \ \yieldThread \mid 
\asyncThread{s}\mid \dots
\]

Threads need an additional data component called \textit{thread pool}. We build a pairs category \cat{T} which has as objects thread pools. 
The internal label category \cat{I} (chosen depending on the other constructs) is extended with $\labtrans{T}{\cat{T}}$.
The label category used to give the transition rules for statements becomes now:
\[
\labtrans{T}{\cat{T}}(\labtrans{R}{\cat{R}}(\labtrans{F}{\cat{F}}(\labtrans{S}{\cat{S}}(\trivcat)))).
\]

We need more algebraic structure for the thread pools, which is used when defining the transition rules.
A thread pool may be implemented in multiple ways (e.g., as sets or lists); here we only require two operations on a thread pool, an \textit{insertion} \insertThread\ and a \textit{deletion} \deleteThread\ operation. Take $\rho_{t}$ to be a thread pool and $s$ a program term, then $\rho_{t}\insertThread s$ is also a thread pool containing $s$; and when $s\in\rho_{t}$ then $\rho_{t}\deleteThread s$ is also a thread pool that is the same as $\rho_{t}$ but does not contain $s$.

Because of the \yieldThread, which needs the whole program term that follows it, we give semantics to threads using \textit{evaluation contexts}. The MSOS is perfectly suited for describing semantics using evaluation contexts. 
One may define rules for a programming construct both using evaluation contexts and without; and then pick the preferred rules. An essential result is to show that both sets of rules generate the same arrow-labelled transition system.

Evaluation contexts are statements with a \textit{hole} \holeContext{\,}: 
\[
\evalContext\ \ ::=\ \ \holeContext{\,} \mid \evalContext \sequence s
\]

Placing a program term $s$ in the whole of a context \evalContext\ is denoted \evalContextFill{s} and results in a normal program term (i.e., without the hole).
It is essential to prove that any statement in the language can be \textit{uniquely} decomposed into an evaluation context \evalContext\ and a program term $s$ so that the choice of transition rules is unambiguous. For the simple contexts that we defined above, this result is easy.

Instead of giving alternative rules using evaluation contexts, 
we prefer to give the following rule, and remove the two rules for sequential composition from Subsection~\ref{example_noLabels}.
A second rule is required when object terms are present. The $X$ label on the left comes from an encapsulated \cat{I}, whereas the one on the right comes from a global label.
\begin{prooftree}
\AxiomC{$s\neq\nilValue$}
\AxiomC{$s\transition{X}s'$}
\BinaryInfC{$\evalContextFill{s}\transition{X} \evalContextFill{s'}$}
\DisplayProof\hspace{3ex}
\AxiomC{$s\neq\nilValue$}
\AxiomC{$\objectTerm{o}{s}\transition{X}\objectTerm{o}{s'}$}
\BinaryInfC{$\objectTerm{o}{\evalContextFill{s}}\transition{X} \objectTerm{o}{\evalContextFill{s'}}$}
\end{prooftree}

Now we can give the rules for the new programming constructs, which may be compared to the ones given in 
\cite[Fig.4]{AbadiP09popl}.
%
\begin{prooftree}
\AxiomC{\phantom{$s\neq\nilValue $}}
\UnaryInfC{$\asyncThread{s}\transition{\{T=\rho_{t}\restlabel T=\rho_{t}\insertThread s\}} \nilValue$}
\DisplayProof\hspace{3ex}
\AxiomC{\phantom{$s\neq\nilValue$}}
\UnaryInfC{$\evalContextFill{\yieldThread}\transition{\{T=\rho_{t}\restlabel T=\rho_{t}\insertThread \evalContextFill{
\nilValue
}\}} \nilValue$}
\end{prooftree}

\begin{prooftree}
\AxiomC{
$s\in\rho_{t}$}
\UnaryInfC{$\nilValue\transition{\{T=\rho_{t}\restlabel T=\rho_{t}\deleteThread s\}} s$}
\end{prooftree}

\subsubsection{Classes}\label{example_classes}

It is common in the setting of object-orientation to have
\textit{method definitions} part of \textit{class definitions}, where
objects are instances of such classes and can be created anytime with
the \newObjectOf{} programming construct. Inheritance and interfaces
are normally part of class definitions, 
but are not essential here;
these can be easily added as in \cite{JO05inheritance}.

\textit{Class identifiers} are introduced from a set $\idClass$,
usually written as \identifier{C}. Class definitions include
\textit{method definitions} and 
an \textit{intialization} with 
initialized \textit{attribute definitions} and initial statements;
%
\[
\AttributesSort\ \ ::=\ \ 
s
\hspace{6ex}
M\ \ ::=\ \ \MethodDef{m}{s}\ \mid M \sequence M
\]
\[
d\ \ ::=\ \ \classDefNameMethods{C}{\AttributesSort\sequence M} \mid \dots
\hspace{5ex}
s\ \ ::=\ \ \assignment{x}{\newObjectOf{\identifier{C}}} \mid \identifier{m}(e)\  \mid \dots
\]

For the semantics we need two global category components (i.e., not local to the objects) which keep definitions of methods for each class and another to keep the attributes. Denote these by \cat{C} and \cat{A}, and associate using the label transformer the indexes $C$ and $A$. 
The objects $\rho_{c}\in\objects{\cat{C}}$ are mappings from class identifiers to definitions of methods; i.e., $\rho_{c}:\idClass\rightharpoonup(\idMethods\rightharpoonup\mathit{MtdDef})$.
Objects $\rho_{a}\in\objects{\cat{A}}$ are mappings $\idClass\rightharpoonup \AttributesSort$.
The encapsulation is a global component of its own, to which the label transformer associates index $E$. 
The transition rule for class definitions is:
\begin{prooftree}
\AxiomC{
$\rho'_{a}=\rho_{a}[\identifier{C}\mapsto \AttributesSort]$}
\AxiomC{$\rho'_{c}=\rho_{c}[\identifier{C}\mapsto \{\identifier{m}\mapsto\lambda(\mathbf{x}).(s)\mid\identifier{m}\in M\}]$}
\BinaryInfC{$
{\classDefNameMethods{C}{\AttributesSort\sequence M}}
\transition{\{A=\rho_{a},C=\rho_{c}\restlabel
  A=\rho'_{a},C=\rho'_{c}\}} 
{\nilValue}$}

\end{prooftree}

Each object is an instance of a class. In consequence we associate to
each object the name of the class it belongs to, and from where method
definitions can be retrieved.\footnote{This is the
  \textit{dynamic binding} notion (also known as late binding, or
  dynamic dispatch) where the method definitions are retrieved when
  they are needed. This is especially useful in the presence of
  inheritance and dynamic class upgrades, as in
  Section~\ref{subsec_ex_DSOS_creol}; otherwise we could do without,
  and use the method definitions local to objects as in
  Subsection~\ref{example_methods_Appendix}. Normally this class
name information is held in a special variable of the object, but here
we will use a category component, to keep with the modular
style.} 
Therefore, to the internal category \cat{I} we add one more
category \cat{CN}, to which the label transformer will associate the
index $CN$.  The objects $\objects{\cat{CN}}=\idClass$ are just class
identifiers.\footnote{The objects of this category have such a simple structure
  that it may look awkward to have a category defined on them,
  but it is perfectly fine for MSOS and encouraged for separation of concerns (not optimisation).}
The rule for object creation is:
\begin{prooftree}
\AxiomC{
$\mathit{fresh}(\identifier{o'})$}
\AxiomC{$\forall i\neq CN\ \ \identifier{o'}:i=\emptyset$}
\AxiomC{$\rho_{a}(\identifier{C})=\AttributesSort$}
\TrinaryInfC{$\objectTerm{\!o\!\!}{\!\!\evalContextFill{\assignment{x\!}{\newObjectOf{\identifier{C}}}}\!}\!\!\transition{\{A=\rho_{a},C =\rho_{c},\identifier{o}:S=\rho \restlabel \identifier{o'}:CN=\identifier{C},\identifier{o}:S=\rho[\identifier{x}\mapsto \identifier{o'}]\}}\!\!\objectTerm{\!o\!\!}{\!\!\evalContextFill{\nilValue}\!}\!\!\parallel\!\!\objectTerm{\!o'\!\!}{\!\!\AttributesSort\!}$}
\end{prooftree}

There are different ways of ensuring freshness of the object identifiers and different ways of 
initialising generated objects, for instance by means of constructors.
Our notion allows initialised attribute declarations
as well as initial statements, for instance a call to a local method
(which is used to start desired active behaviour in the case of
\creol).
Due to the assumption of distinct variables names,
we do not need to separate attributes and local variables.

The transition rule for method application must include the object because it needs the global class definitions where the method definitions are found. 
\begin{prooftree}
\AxiomC{$e\transition{X}e'$}
\UnaryInfC{$y:=\identifier{m}(e) \transition{X} y:=\identifier{m}(e')$}
\DisplayProof\hspace{3ex}
\AxiomC{$\identifier{C}\in\rho_{c}$}
\AxiomC{$\identifier{m}\in\rho_{c}(\identifier{C})$}
\AxiomC{$\rho_{c}(\identifier{C})(\mathbf{m})=\lambda(\mathbf{x}).(s)$}
\TrinaryInfC{$\objectTerm{o}{y:=\identifier{m}(v)}
  \transition{\{\identifier{o}:CN=\identifier{C},C=\rho_{c}
    \restlabel\}} \objectTerm{o}{s[v/\mathbf{x}] \methodend{y}}$}
\end{prooftree}


\subsubsection{Asynchronous method calls as in \creol}\label{example_asynchronous_method_calls}
We take the model of asynchronous method calls from \cite{JohnsenO07} and consider two programming constructs for calling a method and reading the result of the completion of a call:
\[
s\ \ ::=\ \ \callMethodOfIn{\identifier{m}(e)}{\identifier{o}}{\identifier{t}} \mid \readMethodReturnFromIn{\identifier{t}}{\identifier{x}} \mid \return{e} \mid \dots
\]
where $\identifier{t}\in\idFutures$ are special identifiers used for retrieving the result of the method call.
This mechanism has been studied as ``futures'' in programming
languages \cite{Halstead85multilisp,FlanaganF99futures,deboer07esop}.

Denote this signature $\SigmaExec_{\ref{example_asynchronous_method_calls}}$, which can be added to any previous signature. 

The asynchronous method calls, as discussed in \cite{JohnsenO07}, work
with asynchronous message passing, as in the Actor model
\cite{AghaMST97actorFoundation}. In consequence we need a global data
component to keep track of the messages in the system. We consider
each object having a pool of messages. Since the message pools will be
manipulated by the distributed objects of the system we use a pairs
category \cat{M} with objects $\objects{\cat{M}}=\idObj\rightarrow
2^{\mathit{MsgTerm}}$ being mappings from object identifiers to
message sets. The label transformer \labtrans{M}{\cat{M}} is applied
at least to an encapsulating category.  Similarly to the thread pools,
define set operations \insertThread\ and \deleteThread\ to add and
remove messages from any set $\mathcal{MS}\in
2^{\mathit{MsgTerm}}$. For our exemplification purposes the messages
are of the form: \hspace{2ex}
$\invoke{\identifier{o},n,\identifier{m}(v)}$ \ \ and\ \
$\completion{n,v}$, where $\identifier{o}$ is an object identifier,
$n\in\mathbb{N}$ is a natural number (representing the \textit{future} of
  the call), and $\identifier{m}(v)$ represents the method named
$\identifier{m}$ and $v$ a value term.  Because of the asynchronous
method calling scheme, the method declarations are particular in the
sense that the first two parameters are predefined for all methods as
being \caller\ and \futureLabel, 
and the statements may end with a
return statement: 
$\MethodDefAsync{m}{s}{e}$.
We here consider \emph{local futures}
  as opposed to shared futures. The latter would require the presence
  of special future objects, allowing multiple reads.

The special future identifiers $\identifier{t}$ can be seen as variables which may hold only natural numbers and cannot be modified by the program constructs, but only by the semantic rules. Since identifiers \identifier{t} are local to the objects, we extend the category \cat{I} by attaching another data component \labtrans{L}{\cat{L}}. The category \cat{L} is a pairs category with objects $\objects{\cat{L}}$ being mappings $\idFutures\rightharpoonup \mathbf{Nat}$.
\begin{prooftree}
\AxiomC{$\mathit{fresh}(n,\rho)$}
\AxiomC{$\rho_{m}(\identifier{o'})=\mathcal{MS}$}
\AxiomC{$o'\not = o$}
\TrinaryInfC{$\objectTerm{\!o\!\!}{\!\!\callMethodOfIn{\identifier{m}(v)}{\identifier{o'}}{\identifier{t}}} \transition{\!\!\identifier{o}:L=\rho,M=\rho_{m}\restlabel M=\rho_{m}[\identifier{o'}\mapsto\mathcal{MS}\insertThread \invoke{\identifier{o},n,\identifier{m}(v)}],\identifier{o}:L=\rho[\identifier{t}\mapsto n]\!\!} \objectTerm{\!o\!\!}{\!\!\nilValue}$}
\end{prooftree}
%
\begin{prooftree}
\AxiomC{$\rho_{m}(\identifier{o})=\mathcal{MS}$}
\AxiomC{$\invoke{\identifier{o'},n,\identifier{m}(v)}\in\mathcal{MS}$}
\BinaryInfC{$\objectTerm{o}{s} \transition{M=\rho_{m}\restlabel M=\rho_{m}[\identifier{o}\mapsto\mathcal{MS}\deleteThread \invoke{\identifier{o'},n,\identifier{m}(v)}]} \objectTerm{o}{\asyncThread{\identifier{m}(\identifier{o'},n,v)}\sequence s}$}
\end{prooftree}
%
\begin{prooftree}
\AxiomC{$\rho(\caller)=\identifier{o'}$}
\AxiomC{$\rho(\futureLabel)=n$}
\AxiomC{$\rho_{m}(\identifier{o'})=\mathcal{MS}$}
\TrinaryInfC{$\objectTerm{o}{\evalContextFill{\return{v}}} \transition{\identifier{o}:S=\rho,M=\rho_{m}\restlabel M=\rho_{m}[\identifier{o'}\mapsto\mathcal{MS}\insertThread \completion{n,v}]} \objectTerm{o}{\nilValue}$}
\end{prooftree}
%
\begin{prooftree}
\AxiomC{$\rho(\identifier{t})=n$}
\AxiomC{$\rho_{m}(\identifier{o})=\mathcal{MS}$}
\AxiomC{$\completion{n,v}\in\mathcal{MS}$}
\TrinaryInfC{$\objectTerm{o}{\readMethodReturnFromIn{\identifier{t}}{\identifier{x}}} \transition{\identifier{o}:L=\rho,M=\rho_{m}\restlabel M=\rho_{m}[\identifier{o}\mapsto\mathcal{MS}\deleteThread \completion{n,v}]} \objectTerm{o}{\assignment{x}{v}}$}
\end{prooftree}
%
Essential to the above rules
is that in each rule only one object term is present, thus capturing
the asynchronous method call aspect. Moreover, one can clearly see the
production and consumption of the messages.  The freshness of $n$ in
$\rho$, that is required in the first rule, can be obtained in various
ways, which only complicate rules, and we decide to leave these
details out of this presentation.

\begin{remark}
Rules two and four are dependent on additional program constructions, and thus on their semantics. This is not in the modular spirit. We would achieve the same effect by simulating the two corresponding transition rules (for \textbf{async} and assignment) and modify the required local data components directly in the rule above; this means that the second rule would involve the \cat{T} local category and the last rule would involve \cat{S}. In this way dependency on program constructs is removed, but still the rules depend on the two local label components. This is more preferred in the modular SOS.
\end{remark}

\intechrep{If in the second rule we use \nilValue\ instead of s we would restrict the point where messages can be treated. This is the same as the point where new threads can be started. There would be a race between the two rules for threads and messages, and we would have to solve this race.}

There are several variations on giving semantics to asynchronous method calls; the above is just our choice. Other choices can be to have a global store where the values that are returned by the call are kept and retrieved by the caller (not using the completion message as we do above). Other choices do not necessarily block on a read, as we do in the last rule above, but put the waiting process in the thread pool.


\section{Exemplifying Dynamic SOS for \creol}\label{subsec_ex_DSOS_creol}

First we identify the data components that are subject to the dynamic upgrade.
For \creol\ our example upgrades classes that have only methods and attributes.
Thus, the data components subject to the upgrade are \cat{C} and \cat{A} holding the methods respectively attributes for each class.
In \cite{JohnsenOS05dynamicClasses} extra complexity appears in the form of \textit{dependencies between upgrades}. In consequence, in \cite{JohnsenOS05dynamicClasses} classes have associated \textit{upgrade numbers}, that are only inspected by the objects during method calls, and changed only by the upgrade constructs. A discrete category \cat{UN}, with objects $\objects{\cat{UN}}=\idClass\rightharpoonup\mathbf{Nat}$, mappings from class identifiers to natural numbers, is added as a global component \labtrans{\mathit{UN}}{\cat{UN}}. 
Denote the product of all these data categories as $\cat{D}=\cat{C}\prodcat\cat{A}\prodcat\cat{UN}$.

Next we identify the upgrade information, looking at \cite{JohnsenOS05dynamicClasses}, as three components: two holding the actual new code for methods and attributes, and another holding the dependencies, i.e.,
\begin{itemize}
\item  a discrete category \cat{UC} with the same objects as \cat{C}, $\objects{\cat{UC}}=\idClass\rightharpoonup(\idMethods\rightharpoonup\mathit{MtdDef})$, holding information about which class names need to be upgraded and what is the new information to be used;
\item  another discrete category \cat{UA} has objects $\idClass\rightharpoonup A$;
\item  and another \cat{UD} having objects $\objects{\cat{UD}}=\idClass\rightharpoonup(\idClass\rightharpoonup\mathbf{Nat})$ holding upgrade information about which class depends on which versions of which classes.
\end{itemize}
Denote the upgrade categories as $\cat{U}_{\cat{D}}=\cat{UC}\prodcat\cat{UA}\prodcat\cat{UD}$.
Thus, the endo\-functors are defined on $\cat{D}\prodcat\cat{U}_{\cat{D}}$, i.e., on tuples of six objects.

We observed that often the objects of the upgrade categories are the same as the objects in the corresponding data categories. But this need not always be the case. One example is the information for updating types in \proteusSrcLang\ which differs from the type environment (which maps type names to types) in the fact that the upgrade data comes as a mapping from type names to pairs of type and type transformer. We are not concerned with types though.
The example for \creol\ also shows that the upgrade component \cat{UD} does not have a correspondent among the data components.

Finally, we decide on the upgrade constructs 
and associate appropriate endo\-functors. 
For \creol\ there are more details to consider than we had for \proteusSrcLang. In \cite{JohnsenOS05dynamicClasses} there is no actual upgrade construct, but only upgrade messages floating in the distributed system and holding the upgrade information. 
Essentially the technique of \cite{JohnsenOS05dynamicClasses} corresponds, in \proteus\ terminology, to a single upgrade construct which appears at every ``ideal'' point in the program and which treats one class at a time. 
The ingenious analysis of the program code of \proteusSrcLang\ can establish at each program point which identifiers can be upgraded without breaking the type safety. 
This preliminary analysis labels each program point with a set of \textit{capabilities}. 
In our situation we can apply the same analysis and use upgrade constructs which are labelled with the set of identifiers that can be safely upgraded at that point:\footnote{One could use the same information to have \textit{incremental upgrades}, where at each point the upgrade is made only for those identifiers which are safe, when possible (dependencies between the names in the upgrade information may not allow for such splitting of the upgrade).} 
\[
S\ \ ::=\ \ \upgradeTermSetKind{\Delta}{c}
\]
where $\Delta$ is a set of class identifiers.
The corresponding upgrade transition rule is:
%
\begin{prooftree}
\AxiomC{\phantom{$\ \transition{U} \ $}}
\UnaryInfC{$\objectTerm{o}{\upgradeTermSetKind{\Delta}{c}} \transition{E_{\Delta}^{c}} \objectTerm{o}{\nilValue}$}
\end{prooftree}
with $E_{\Delta}^{c}\in\morphisms{\catendo{\cat{D}\prodcat\cat{U}_{\cat{D}}}}$ an endo\-functor on the product category from above, which is defined following the work in \cite{JohnsenOS05dynamicClasses}. We need some notation first. 

\begin{definition}[dependencies check]\label{def_dep_check}
We define a binary relation \respectsDependencies\ on partial mappings $\rho,\rho'\in \idClass\rightharpoonup\mathbb{N}$ as:
\[
\rho\respectsDependencies\rho' \mbox{\ \  iff\ \ \ } \forall \identifier{C}\in\idClass: \identifier{C}\in\rho\Rightarrow \identifier{C}\in\rho'\ \wedge\ \rho(\identifier{C})\leq\rho'(\identifier{C}).
\]
\end{definition}

Define the endo\-functor $E_{\Delta}^{c}$ on $\cat{C}\prodcat\cat{A}\prodcat\cat{UN}\prodcat\cat{UC}\prodcat\cat{UA}\prodcat\cat{UD}$ as follows.
\[
E_{\Delta}^{c}(\rho_{c},\rho_{a},\rho_{un},\rho_{uc},\rho_{ua},\rho_{ud})= \phantom{\hspace{45ex}}
\]
\[
\phantom{\hspace{1ex}}
\left\{\begin{array}{l@{\hspace{0ex}}r} 
\left(\begin{array}{l}
\rho_{c}[\identifier{C}\mapsto\rho_{c}(\identifier{C})[\rho_{uc}(\identifier{C})]\mid \forall \identifier{C}\in\Delta\cap\rho_{uc}], \\
\rho_{a}[\identifier{C}\mapsto \rho_{ua}(\identifier{C})\mid\forall \identifier{C}\in\Delta\cap\rho_{ua}], \\
\rho_{un}[\identifier{C}\mapsto \rho_{un}(\identifier{C})+1\mid\forall \identifier{C}\in\Delta\cap(\rho_{uc}\cup\rho_{ua})], \\
\rho_{uc}\setminus\Delta, \\
\rho_{ua}\setminus\Delta, \\
\rho_{ud}\setminus\Delta \\
\end{array} \right) & \begin{array}{l} \mbox{ if }\forall \identifier{C}\in\Delta\cap\rho_{ud}:\\
\ \ \ \ \ \rho_{ud}(\identifier{C})\respectsDependencies\rho_{un}\\ \end{array} \\
%
&\\
(\rho_{c},\rho_{a},\rho_{un},\rho_{uc},\rho_{ua},\rho_{ud}) & \mbox{otherwise} 
\end{array} \right.
\]

The upgrade message used in \cite{JohnsenOS05dynamicClasses} is the special case where $\Delta$ contains one class identifier and the three upgrade objects also contain this single class identifier. The apparent complication in the definition of the endo\-functor comes from the complicated upgrade information that must be manipulated. This has nothing to do with the category theory, but only with the algebraic structures of the underlying objects.
It is easy to check that the above endo\-functor has no sudden jumps. 
We abuse the notation and use set operations between $\Delta$ and mappings $\rho$, referring to the domain of the map. The notation $\rho[\dots]$ denotes the update of the partial map.

Compared to \proteusSrcLang, challenging in the dynamic upgrading mechanism of \creol\ is the fact that the concurrent objects must be upgraded also (i.e., their local attributes), where inheritance would need particular attention, i.e., when a super-class is upgraded in a class hierarchy and objects of a sub-class must be aware of this upgrade. 
%
Objects are the active unit of computation in a distributed
object-oriented setting, and they use messages for
  communication. In \creol\ with upgrades
also the classes are active since they may be changed at runtime. 
The upgrade numbers that the classes keep in the category component
\cat{UN} are used by the objects to upgrade themselves; also objects
keep an upgrade number so to be able to detect when their class type
has been upgraded. In \cite{JohnsenOS05dynamicClasses} upgrading of
the objects, by getting the new attributes, is done in the rewriting
logic implementation through equations.  Equational steps are
  atomic and unobservable, and
  between each rewrite step (transistion) all possible equational steps are performed. This allows class upgrades
  and upgrade numbers to be consistent with the latest upgrade (using
  equations), and older versions of classes are not needed. In
  contrast, the objects themselves may upgrade their state in a
  distributed manner and at different times (using rewrite steps), and
  the objects may run different versions of the class code (according
  to when they last updated). This means that the objects need not be
  aware of the different code versions that other objects run.  The
  \creol\ language and upgrade mechanism are implemented in rewriting
  logic/Maude as an executable prototype, as reported in
  \cite{JohnsenOS05dynamicClasses}, and with support of modular
  program reasoning based on class invariants and communication
  histories. Incremental reasoning is possible when a class upgrade
  respects the old class invariant.

\section{Conclusion and Further Work}\label{sec_conclusion}

We have built on the modular SOS of \cite{Mosses04modularSOS,Mosses99foundationsMSOS} a Dynamic SOS framework which is intended to be used for defining the semantics of dynamic software upgrades.
At the same time we have given modular SOS definitions for concurrent object-oriented programming constructs, where we defined an encapsulating construction on the underlying category theory of MSOS. The encapsulation can be used also in other situations where a notion of localization of the program execution is needed.

We have considered two examples of languages with dynamic software upgrades: the C-like \proteusSrcLang, and the concurrent and distributed object-oriented \creol. 
We have considered the dynamic class upgrades of \creol, as well as the more classical upgrades of \proteusSrcLang.

The upgrade information is externally provided and is not available to the program. This is why the upgrade components cannot be modified nor inspected by the program constructs, unlike the self produced data. The program can only decide upgrade points and what is allowed to be upgraded safely at a point. This is done using the $\upgradeTermSetKind{}{}$ \ constructs which can be automatically inserted in the code using techniques as in \cite{StoyleHBSN07mutatis}. An upgrade allows the program data to be modified in accordance with the available upgrade information. We discard from the upgrade object only the used upgrade information, hence we use an incremental upgrade method. However, this is not fixed and depends on the decision when defining the upgrade endo\-functors.

We have concentrated on the semantic framework, and less on the typing aspects. The cited papers that investigate forms of dynamic upgrade do thorough investigations into typing issues. These investigations can be done over a Dynamic SOS. 
We have exemplified DSOS for the \proteusSrcLang\ language from \cite{StoyleHBSN07mutatis} and discussed the typing aspects. DSOS could be done also for \upgradej\ \cite{BiermanPN08} or \stump\ \cite{NeamtiuH09} since these also adopt the idea of upgrade points.
We have also applied DSOS to the \creol\ language \cite{JohnsenOS05dynamicClasses,JohnsenO07}, where the combination of distributed objects with concurrency and asynchronous method calls with futures, interfaces and inheritance, dynamic binding and behaviour types, make the example non-trivial.



For the question whether DOSO could be encoded solely in the MSOS, mentioned in the introduction,
we see a negative answer because the endo\-functors capture general functions on the objects which cannot readily be captured with the pairs and discrete categories. However, if we use only pairs categories then an encoding seems possible, though how natural it would be is not clear since the morphisms have the computational interpretation of capturing the way data is being manipulated by the program, whereas the endo\-functors encode actions outside the view of the program but which act on the data that the program works with.
At the same time a discrete category can always be replaced by a pairs category without changes to the rules, in which case an encoding seams even more plausible. Thus, this open question seams like a natural immediate continuation of this work.

A programming language designer might also ask whether any dynamic upgrade construct that can be captured by the endo\-functors in the DSOS, can be implemented using the programming language constructs alone.
This question is specific to the programming language and the upgrade mechanism; therefore, it cannot have a general answer at the level of DSOS. For specific situations this seems plausible as long as discrete categories are not used by the program.

\subsection{Possible continuations}\label{subsec_further_work}

A theoretical motivation for giving semantics to dynamic upgrades using DSOS is the close similarity of the transition systems we obtain, with the labelled transition systems obtained by the SOS of process algebras. There is a great wealth of general results in the process algebra community on SOS rule formats \cite{aceto01SOS_handbook}, some of which we hope can be translated to the theory developed here. In particular, the states of the transition systems obtained from DSOS are only program terms, whereas the rest of auxiliary notions are flowing on the transitions as labels. This is the same as in process algebras, only that we have more complex labels. The possible connections between the terms and the structure of the labels in MSOS has been recently investigated in \cite{ChurchillM13fossacs} and endeavours into rule formats with data, like we would need in DSOS, are being investigated \cite{Mousavi07SOSsurvey,Mousavi06SOSdata}. 
General results that could be investigated (starting from the work presented in \cite{aceto01SOS_handbook,ChurchillM13fossacs,Mousavi06SOSdata}) are: 
\begin{enumerate}
\item generating algebraic semantics \cite{AcetoBV94axiomSem,aceto09algProp,Mousavi13algData} from specific forms of the transition rules; 

\item  compositional reasoning results wrt.\ dynamic logic \cite{pratt76semantical,harel00dynamicLogic} using specific forms of transition rules in the style of \cite{FokkinkGW06compositionalGSOS}; or 

\item expressiveness results of the programming constructs specified within various rule formats.
\end{enumerate}
A programming language that is developed within the restrictions of the rule format would get such general results for free.

The modular aspect of Dynamic SOS (and MSOS) is a good motivation for undertaking a more practical challenge of building a database of programming constructs together with their respective (D)MSOS transition rules. A new programming language would then be built by choosing the needed constructs and their preferred semantics, when more exist (e.g., variables implemented with a single store or with a heap and store). The language developer would then only concentrate on the new programming feature/construct that is under investigation.
This was the goal of the PlanComps\footnote{\url{http://plancomps.csle.cs.rhul.ac.uk/}\ \ \ or\ \ \ \url{http://www.plancomps.org}} project which achieved quite significant results \cite{ChurchillMST14PlanCompsTAOSD,BinsbergenSM16MSOS_tools}.
For DSOS we would probably need to extend their results to include the dynamic upgrade semantic concepts of DSOS and also the encapsulation concept. Then all the FunCons of PlanComps would be reusable, and on top would would define similar concepts for dynamic upgrade constructs.

We can mention a few requirements of such a database. 
One is a ready integration of the (D)MSOS rules with a proof assistant like \coqprover, where the work in \cite{pierce12SOS_coq} is a good inspiration point.
Another is the use of a notation format with the possibility of extensible notation style overlays, which would allow the developer to view the semantics in the preferred notation. Nice advancements have been done by people from the PlanComps project, e.g., \cite{ChurchillM13fossacs,ChurchillMM13MSOSformats} as well as relating with the recent K framework \cite{MossesV14MSOSinK,rosu-serbanuta-2010-jlap}.
Such a database needs to be maintainable by the community, as with a wiki.

Another interesting problem is upgrading running code at a more basic level than what \creol\ or \proteusSrcLang\ do where the upgrade happens for methods inside classes and the new execution can be seen only if the currently running code decides to call the upgraded methods; or where types are upgraded and the new code is seen if it is accessed. 
We mean trivial examples like a reactive while loop (i.e., which waits for input from a user to proceed with a round of computation and response) where no methods are called, but where non-trivial computation and checks are done. A bug in such a code (maybe on a branch that is very rarely taken) may be caused by a wrong operation (like plus instead of minus). One wants to correct this running code, and no method or type upgrading would do it. We also do not accept arguments like: put the executing body of the while in a function which is called at each iteration, then upgrade the function when it is finished.

Upgrading such running code could be possible if we view the code as data, having one component of label category keeping track of the current executing code. One could use a program counter variable updated by all execution operations. Kept the program counter together with the actual execution code term $t$,
An upgrade operation of the executing code works with an upgrade component that also contains a new code term $t_{u}$ and an associated new program counter. The execution of the upgrade operation would then replace the execution term with the new one, and the continuing code would be the one given by the new term $t_{u}$ and the associated program counter. The upgrade data for the program may have more complex structure, and the upgrade composition may be more involved than just complete replacing. For example, the new program counter may be depending on the old execution term and the current program counter also; so it may be a function of these. This may well be a map between the possible program counters in the old term $t$ and new program counters in $t_{u}$.

\vspace{2ex}
\noindent\textbf{Acknowledgments:} We are grateful to Martin
Churchill and Martin Steffen (and several anonymous reviewers) for
helping us to improve the paper.
%

\section*{References}


\begin{thebibliography}{10}

\bibitem{abadi2012theory}
M.~Abadi and L.~Cardelli.
\newblock {\em A theory of objects}.
\newblock Springer Science \& Business Media, 2012.

\bibitem{AbadiP09popl}
M.~Abadi and G.~D. Plotkin.
\newblock A model of cooperative threads.
\newblock In Z.~Shao and B.~C. Pierce, editors, {\em 36th ACM SIGPLAN-SIGACT
  Symposium on Principles of Programming Languages (POPL)}, pages 29--40. ACM,
  2009.
\newblock (see also the LMCS journal version 2010-6(4)).

\bibitem{abadiplotkin10LMCS}
M.~Abadi and G.~D. Plotkin.
\newblock A model of cooperative threads.
\newblock {\em Logical Methods in Computer Science}, 6(4), 2010.

\bibitem{AcetoBV94axiomSem}
L.~Aceto, B.~Bloom, and F.~W. Vaandrager.
\newblock {Turning SOS Rules into Equations}.
\newblock {\em Information and Computation}, 111(1):1--52, 1994.

\bibitem{aceto01SOS_handbook}
L.~Aceto, W.~J. Fokkink, and C.~Verhoef.
\newblock {Structural Operational Semantics}.
\newblock In J.~A. Bergstra, A.~Ponse, and S.~A. Smolka, editors, {\em Handbook
  of Process Algebra}, chapter~3. Elsevier, 2001.

\bibitem{aceto09algProp}
L.~Aceto, A.~Ing{\'o}lfsd{\'o}ttir, M.~R. Mousavi, and M.~A. Reniers.
\newblock Algebraic properties for free!
\newblock {\em Bulletin of the European Association for Theoretical Computer
  Science (BEATCS)}, 99:81--103, 2009.

\bibitem{AghaMST97actorFoundation}
G.~Agha, I.~A. Mason, S.~F. Smith, and C.~L. Talcott.
\newblock {A Foundation for Actor Computation}.
\newblock {\em J. Funct. Program.}, 7(1):1--72, 1997.

\bibitem{AjmaniLS06}
S.~Ajmani, B.~Liskov, and L.~Shrira.
\newblock {Modular Software Upgrades for Distributed Systems}.
\newblock In D.~Thomas, editor, {\em 20th European Conference ON
  Object-Oriented Programming (ECOOP)}, volume 4067 of {\em Lecture Notes in
  Computer Science}, pages 452--476. Springer, 2006.

\bibitem{Barendregt81bookLambda}
H.~Barendregt.
\newblock {\em The Lambda Calculus: its Syntax and Semantics}, volume 103 of
  {\em Studies in Logic and the Foundations of Mathematics}.
\newblock North-Holland, 1981.

\bibitem{BiermanPN08}
G.~M. Bierman, M.~J. Parkinson, and J.~Noble.
\newblock {UpgradeJ: Incremental Typechecking for Class Upgrades}.
\newblock In J.~Vitek, editor, {\em 22nd European Conference on Object-Oriented
  Programming (ECOOP)}, volume 5142 of {\em Lecture Notes in Computer Science},
  pages 235--259. Springer, 2008.

\bibitem{BoyapatiLSMR03}
C.~Boyapati, B.~Liskov, L.~Shrira, C.-H. Moh, and S.~Richman.
\newblock Lazy modular upgrades in persistent object stores.
\newblock In R.~Crocker and G.~L.~S. Jr., editors, {\em ACM SIGPLAN Conference
  on Object-Oriented Programming Systems, Languages and Applications (OOPSLA)},
  pages 403--417. ACM, 2003.

\bibitem{cardelli2000mobile}
L.~Cardelli and A.~D. Gordon.
\newblock Mobile ambients.
\newblock {\em Theoretical computer science}, 240(1):177--213, 2000.

\bibitem{MaudeMSOS}
F.~Chalub and C.~Braga.
\newblock Maude {MSOS} tool.
\newblock {\em Electronic Notes in Theoretical Computer Science}, 176(4):133 --
  146, 2007.
\newblock Proceedings of the 6th International Workshop on Rewriting Logic and
  its Applications (WRLA 2006).

\bibitem{ChurchillM13fossacs}
M.~Churchill and P.~D. Mosses.
\newblock Modular bisimulation theory for computations and values.
\newblock In F.~Pfenning, editor, {\em 16th International Conference on
  Foundations of Software Science and Computation Structures (FOSSACS)}, volume
  7794 of {\em Lecture Notes in Computer Science}, pages 97--112. Springer,
  2013.

\bibitem{ChurchillMM13MSOSformats}
M.~Churchill, P.~D. Mosses, and M.~R. Mousavi.
\newblock Modular semantics for transition system specifications with negative
  premises.
\newblock In P.~R. D'Argenio and H.~C. Melgratti, editors, {\em 24th
  International Conference on Concurrency Theory ({CONCUR})}, volume 8052 of
  {\em Lecture Notes in Computer Science}, pages 46--60. Springer, 2013.

\bibitem{ChurchillMST14PlanCompsTAOSD}
M.~Churchill, P.~D. Mosses, N.~Sculthorpe, and P.~Torrini.
\newblock Reusable components of semantic specifications {XII}.
\newblock {\em Trans. Aspect-Oriented Software Development}, 12:132--179, 2015.

\bibitem{CLAVEL2002Maude}
M.~Clavel, F.~Dur{\'a}n, S.~Eker, P.~Lincoln, N.~Mart{\i}-Oliet, J.~Meseguer,
  and J.~F. Quesada.
\newblock Maude: specification and programming in rewriting logic.
\newblock {\em Theoretical Computer Science}, 285(2):187 -- 243, 2002.
\newblock Rewriting Logic and its Applications.

\bibitem{deboer07esop}
F.~S. de~Boer, D.~Clarke, and E.~B. Johnsen.
\newblock {A Complete Guide to the Future}.
\newblock In R.~de~Nicola, editor, {\em Proc.\ 16th European Symposium on
  Programming ({ESOP}'07)}, volume 4421 of {\em Lecture Notes in Computer
  Science}, pages 316--330. Springer-Verlag, 2007.

\bibitem{DrossopoulouDDG02}
S.~Drossopoulou, F.~Damiani, M.~Dezani-Ciancaglini, and P.~Giannini.
\newblock {More Dynamic Object Re-classification: Fickle$_{II}$}.
\newblock {\em ACM Trans. Program. Lang. Syst.}, 24(2):153--191, 2002.

\bibitem{FlanaganF99futures}
C.~Flanagan and M.~Felleisen.
\newblock The semantics of future and an application.
\newblock {\em J. Funct. Program.}, 9(1):1--31, 1999.

\bibitem{FokkinkGW06compositionalGSOS}
W.~Fokkink, R.~J. van Glabbeek, and P.~de~Wind.
\newblock {Compositionality of Hennessy-Milner logic by structural operational
  semantics}.
\newblock {\em Theoretical Computer Science}, 354(3):421--440, 2006.

\bibitem{Mousavi13algData}
D.~Gebler, E.~Goriac, and M.~R. Mousavi.
\newblock Algebraic meta-theory of processes with data.
\newblock In J.~Borgstr{\"{o}}m and B.~Luttik, editors, {\em Proc. 20th Int.
  Workshop on Expressiveness in Concurrency and 10th Workshop on Structural
  Operational Semantics (EXPRESS/SOS)}, volume 120 of {\em Electronic
  Proceedings in Theoretical Computer Science ({EPTCS})}, pages 63--77, 2013.

\bibitem{Halstead85multilisp}
R.~H. Halstead~Jr.
\newblock Multilisp: {A} language for concurrent symbolic computation.
\newblock {\em {ACM} Trans. Program. Lang. Syst.}, 7(4):501--538, 1985.

\bibitem{harel00dynamicLogic}
D.~Harel, D.~Kozen, and J.~Tiuryn.
\newblock {\em Dynamic Logic}.
\newblock MIT Press, 2000.

\bibitem{hennessy2007distributed}
M.~Hennessy.
\newblock {\em A distributed Pi-calculus}.
\newblock Cambridge University Press, 2007.

\bibitem{hennessy2002resource}
M.~Hennessy and J.~Riely.
\newblock Resource access control in systems of mobile agents.
\newblock {\em Information and Computation}, 173(1):82--120, 2002.

\bibitem{HewittBS73actorModel}
C.~Hewitt, P.~Bishop, and R.~Steiger.
\newblock {A Universal Modular ACTOR Formalism for Artificial Intelligence}.
\newblock In N.~J. Nilsson, editor, {\em 3rd International Joint Conference on
  Artificial Intelligence (IJCAI)}, pages 235--245. William Kaufmann, 1973.

\bibitem{spinBook}
G.~J. Holzmann.
\newblock {\em {The Spin Model Checker}}.
\newblock Addison-Wesley, 2003.

\bibitem{huttel10SOSbook}
H.~H\"uttel.
\newblock {\em {Transitions and Trees: An Introduction to Structural
  Operational Semantics}}.
\newblock Cambridge Univ. Press, 2010.

\bibitem{JohnsenKY09}
E.~B. Johnsen, M.~Kyas, and I.~C. Yu.
\newblock {Dynamic Classes: Modular Asynchronous Evolution of Distributed
  Concurrent Objects}.
\newblock In A.~Cavalcanti and D.~Dams, editors, {\em 2nd World Congress ON
  Formal Methods (FM)}, volume 5850 of {\em Lecture Notes in Computer Science},
  pages 596--611. Springer, 2009.

\bibitem{JO04acm}
E.~B. Johnsen and O.~Owe.
\newblock An asynchronous communication model for distributed concurrent
  objects.
\newblock In {\em Proc.\ 2nd Intl.\ Conf.\ on Software Engineering and Formal
  Methods ({SEFM'04})}, pages 188--197. IEEE Computer Society Press, Sept.
  2004.

\bibitem{JO05inheritance}
E.~B. Johnsen and O.~Owe.
\newblock Inheritance in the presence of asynchronous method calls.
\newblock In {\em 38th Hawaii International Conference on System Sciences
  {(HICSS-38} 2005), {CD-ROM} / Abstracts Proceedings, 3-6 January 2005, Big
  Island, HI, {USA}}. {IEEE} Computer Society, 2005.

\bibitem{JohnsenO07}
E.~B. Johnsen and O.~Owe.
\newblock {An Asynchronous Communication Model for Distributed Concurrent
  Objects}.
\newblock {\em Software and System Modeling}, 6(1):39--58, 2007.

\bibitem{johnsen16scp}
E.~B. Johnsen, O.~Owe, D.~Clarke, and J.~Bjork.
\newblock A formal model of service-oriented dynamic object groups.
\newblock {\em Science of Computer Programming}, 115--116:3--22, 2016.

\bibitem{JohnsenOS05dynamicClasses}
E.~B. Johnsen, O.~Owe, and I.~Simplot-Ryl.
\newblock {A Dynamic Class Construct for Asynchronous Concurrent Objects}.
\newblock In M.~Steffen and G.~Zavattaro, editors, {\em 7th IFIP WG 6.1
  International Conference on Formal Methods for Open Object-Based Distributed
  Systems (FMOODS'05)}, volume 3535 of {\em {LNCS}}, pages 15--30. Springer,
  2005.

\bibitem{Klein2012RYR}
C.~Klein, J.~Clements, C.~Dimoulas, C.~Eastlund, M.~Felleisen, M.~Flatt, J.~A.
  McCarthy, J.~Rafkind, S.~Tobin-Hochstadt, and R.~B. Findler.
\newblock Run your research: On the effectiveness of lightweight mechanization.
\newblock In {\em Proceedings of the 39th Annual ACM SIGPLAN-SIGACT Symposium
  on Principles of Programming Languages}, POPL '12, pages 285--296, New York,
  NY, USA, 2012. ACM.

\bibitem{lehman80softEvolution}
M.~M. Lehman.
\newblock {Programs, Life Cycles, and Laws of Software Evolution}.
\newblock {\em Proceedings of The IEEE}, 68(9):1060--1076, 1980.

\bibitem{MalabarbaPGBB00}
S.~Malabarba, R.~Pandey, J.~Gragg, E.~T. Barr, and J.~F. Barnes.
\newblock Runtime support for type-safe dynamic java classes.
\newblock In E.~Bertino, editor, {\em 14th European Conference on
  Object-Oriented Programming (ECOOP)}, volume 1850 of {\em Lecture Notes in
  Computer Science}, pages 337--361. Springer, 2000.

\bibitem{Mens2008}
T.~Mens and S.~Demeyer.
\newblock {\em Software Evolution}.
\newblock Springer Publishing Company, Incorporated, 1 edition, 2008.

\bibitem{mosses99concurrentML}
P.~D. Mosses.
\newblock {A modular SOS for ML concurrency primitives}.
\newblock Technical Report RS-99-57, BRICS, Dept. of Computer Science, Univ. of
  Aarhus, 1999.

\bibitem{Mosses99foundationsMSOS}
P.~D. Mosses.
\newblock {Foundations of Modular SOS}.
\newblock In {\em Mathematical Foundations of Computer Science (MFCS'99)},
  volume 1672 of {\em {LNCS}}, pages 70--80. Springer, 1999.

\bibitem{Mosses99foundationsMSOStr}
P.~D. Mosses.
\newblock {Foundations of Modular SOS}.
\newblock Technical Report BRICS RS-99-54, Basic Research in Computer Science,
  December 1999.

\bibitem{Mosses04modularSOS}
P.~D. Mosses.
\newblock Modular structural operational semantics.
\newblock {\em J. Log. Algebr. Program.}, 60-61:195--228, 2004.

\bibitem{MossesN09IMSOS}
P.~D. Mosses and M.~J. New.
\newblock {Implicit Propagation in Structural Operational Semantics}.
\newblock {\em Electr. Notes Theor. Comput. Sci.}, 229(4):49--66, 2009.

\bibitem{MossesV14MSOSinK}
P.~D. Mosses and F.~Vesely.
\newblock Funkons: Component-based semantics in {K}.
\newblock In S.~Escobar, editor, {\em 10th International Workshop on Rewriting
  Logic and Its Applications ({WRLA})}, volume 8663 of {\em Lecture Notes in
  Computer Science}, pages 213--229. Springer, 2014.

\bibitem{Mousavi07SOSsurvey}
M.~Mousavi, M.~A. Reniers, and J.~F. Groote.
\newblock Sos formats and meta-theory: 20 years after.
\newblock {\em Theoretical Computer Science}, 373(3):238--272, 2007.

\bibitem{Mousavi06SOSdata}
M.~R. Mousavi, M.~A. Reniers, and J.~F. Groote.
\newblock Notions of bisimulation and congruence formats for {SOS} with data.
\newblock {\em Information and Computation}, 200(1):107--147, 2005.

\bibitem{NeamtiuH09}
I.~Neamtiu and M.~W. Hicks.
\newblock Safe and timely updates to multi-threaded programs.
\newblock In M.~Hind and A.~Diwan, editors, {\em Proceedings of the ACM SIGPLAN
  Conference on Programming Language Design and Implementation (PLDI)}, pages
  13--24. ACM, 2009.

\bibitem{pierce91categoryBook}
B.~C. Pierce.
\newblock {\em Basic Category Theory for Computer Scientists}.
\newblock MIT Press, 1991.

\bibitem{pierce12SOS_coq}
B.~C. Pierce, C.~Casinghino, M.~Greenberg, C.~Hri\c{t}cu, V.~Sj\"{o}berg, and
  B.~Yorgey.
\newblock {\em {Software Foundations}}.
\newblock e-book (\url{http://www.cis.upenn.edu/~bcpierce/sf/}), July 2012.

\bibitem{pratt76semantical}
V.~R. Pratt.
\newblock {Semantical Considerations On Floyd-Hoare Logic}.
\newblock In {\em IEEE Symposium On Foundations of Computer Science (FOCS'76)},
  pages 109--121, 1976.

\bibitem{techRep12Dec}
C.~Prisacariu and O.~Owe.
\newblock {Dynamic Structural Operational Semantics (long version)}.
\newblock Technical Report 426, Department of Informatics, University of Oslo,
  December 2012.

\bibitem{rosu-serbanuta-2010-jlap}
G.~Ro{\c s}u and T.~F. {\c S}erb{\u a}nu{\c t}{\u a}.
\newblock An overview of the {K} semantic framework.
\newblock {\em Journal of Logic and Algebraic Programming}, 79(6):397--434,
  2010.

\bibitem{Sewell2010ott}
P.~Sewell, F.~z. Nardelli, S.~Owens, G.~Peskine, T.~Ridge, S.~Sarkar, and
  R.~Strni\v{S}a.
\newblock Ott: Effective tool support for the working semanticist.
\newblock {\em Journal of Functional Programming}, 20(1):71--122, Jan. 2010.

\bibitem{StoyleHBSN05popl}
G.~Stoyle, M.~W. Hicks, G.~M. Bierman, P.~Sewell, and I.~Neamtiu.
\newblock Mutatis mutandis: safe and predictable dynamic software updating.
\newblock In J.~Palsberg and M.~Abadi, editors, {\em 32nd ACM SIGPLAN-SIGACT
  Symposium on Principles of Programming Languages (POPL'05)}, pages 183--194.
  ACM, 2005.

\bibitem{StoyleHBSN07mutatis}
G.~Stoyle, M.~W. Hicks, G.~M. Bierman, P.~Sewell, and I.~Neamtiu.
\newblock {\it Mutatis Mutandis}: Safe and predictable dynamic software
  updating.
\newblock {\em ACM Trans. Program. Lang. Syst.}, 29(4), 2007.

\bibitem{BinsbergenSM16MSOS_tools}
L.~T. van Binsbergen, N.~Sculthorpe, and P.~D. Mosses.
\newblock Tool support for component-based semantics.
\newblock In L.~Fuentes, D.~S. Batory, and K.~Czarnecki, editors, {\em 15th
  International Conference on Modularity}, pages 8--11. {ACM}, 2016.

\bibitem{FelleisenWright94TypeSound}
A.~Wright and M.~Felleisen.
\newblock A syntactic approach to type soundness.
\newblock {\em Information and Computation}, 115(1):38 -- 94, 1994.

\bibitem{YuJO06dynamicTypes}
I.~C. Yu, E.~B. Johnsen, and O.~Owe.
\newblock {Type-Safe Runtime Class Upgrades in Creol}.
\newblock In {\em FMOODS}, volume 4037 of {\em {LNCS}}, pages 202--217.
  Springer, 2006.

\end{thebibliography}
\end{document}